\definecolor{orange}{rgb}{1,0.5,0}
\def\diag{\qopname\relax o{diag}}
\def\argmin{\qopname\relax m{argmin}}
\def\argmax{\qopname\relax m{argmax}}
\def\diag{\mathrm{diag}}
\def\sign{\mathrm{sign}}
\newtheorem{thm}{Theorem}
\newtheorem{lem}{Lemma}
\newtheorem{pro}{Proposition}
\newtheorem{rem}{Remark}
\def\wpi_0{\widetilde{\pi_0}}
\def\bzero{\mathbf 0}
\def\b1{\mathbf 1}
\def\bd{\mathbf d}
\def\be{\mathbf e}
\def\bg{\mathbf g}
\def\bv{\mathbf v}
\def\bw{\mathbf w}
\def\bx{\mathbf x}
\def\by{\mathbf y}
\def\cA{\mathcal{A}}
\def\cB{\mathcal{B}}
\def\cC{\mathcal{C}}
\def\cD{\mathcal{D}}
\def\cG{\mathcal{G}}
\def\cIG{\mathcal{IG}}
\def\cN{\mathcal{N}}
\def\balpha{\boldsymbol \alpha}
\def\bbeta{\boldsymbol \beta}
\def\bepsilon{\boldsymbol \epsilon}
\def\bgamma{\boldsymbol \gamma}
\def\bkappa{\boldsymbol \kappa}
\def\bxi{\boldsymbol \xi}
\def\bmu{\boldsymbol \mu}
\def\blambda{\boldsymbol \lambda}
\def\bomega{\boldsymbol \omega}
\def\btheta{\boldsymbol \theta}
\def\itOmega{\mathit \Omega}
\def\itSigma{\mathit \Sigma}
\title{Scalable Bayesian Variable Selection for Structured High-dimensional Data}
\author{Changgee Chang, Suprateek Kundu and Qi Long}
\address{Department of Biostatistics and Bioinformatics, Emory University, Atlanta, USA}
\begin{document}

\begin{abstract}
Variable selection for structured covariates lying on an underlying known graph is a problem motivated by practical applications, and has been a topic of increasing interest. However, most of the existing methods may not be scalable to high dimensional settings involving tens of thousands of variables lying on known pathways such as the case in genomics studies. We propose an adaptive Bayesian shrinkage approach which incorporates prior network information by smoothing the  shrinkage parameters for connected variables in the graph, so that the corresponding coefficients have a similar degree of shrinkage. We fit our model via a computationally efficient expectation maximization algorithm which scalable to high dimensional settings ($p {\sim} 100{,}000$). Theoretical properties for fixed as well as increasing dimensions are established, even when the number of variables increases faster than the sample size. We demonstrate the advantages of our approach in terms of variable selection, prediction, and computational scalability via a simulation study, and apply the method to a cancer genomics study.
\end{abstract}

\footnote{{\noindent \em Corresponding Author}: Suprateek Kundu, Department of Biostatistics \& Bioinformatics, Emory University, 1518 Clifton Road, Atlanta, Georgia 30322, U.S.A. \\ {\noindent \em Email}: suprateek.kundu@emory.edu }

{\noindent Keywords:} adaptive Bayesian shrinkage; EM algorithm; oracle property; selection consistency; structured high-dimensional variable selection.\\

\section{Introduction}

With the advent of modern technology such as microarray analysis and next generation sequencing in genomics, recent studies rely on increasingly large amounts of data containing tens of thousands of variables. For example, in genomics studies, it is common to collect gene expressions from $p\sim 20{,}000$ genes, which is often considerably larger than the number of subjects in these studies, resulting in a classical small $n$, large $p$, problem. In addition, it is well-known that genes lie on a graph of pathways where nodes represent genes and edges represent functional interactions between genes and gene products. Currently, there exist several biological databases which store gene network information from previous studies \citep{Stingo2011}, and these databases are constantly updated and augmented with newly emerging knowledge.

In such cases when genes are known to lie on an underlying graph, usual variable selection approaches such as Lasso \citep{Tibshirani1996}, adaptive Lasso \citep{Zou2006}, or spike and slab methods \citep{Mitchell1988} may run into difficulties, since they do not exploit the association structure between variables which may give rise to correlated predictors. Moreover, there is increasing evidence that incorporating prior graph information, where applicable, can improve prediction and variable selection in analysis of high dimensional data. \citet{Li2008} and \citet{Pan2010} proposed network-based penalties in linear regression, which induce sparsity of estimated effects while encouraging similar effects for connected variables. In a Bayesian framework, \citet{Li2010}, \citet{Stingo2011a}, and \citet{Stingo2011}, used spike and slab type priors for variable selection and Markov random field (MRF) type priors on variable inclusion indicators to incorporate graph information. More recently, \citet{Rockova2014} proposed an expectation maximization (EM) algorithm for variable selection using spike and slab priors which is known as EMVS and extended EMVS to incorporate graph information via MRF priors where a variational approximation was used in computation.
\citet{Rockova2014a} proposed a normal-exponential-gamma shrinkage approach with incorporation of the pathway membership information and developed an EM algorithm for computation.

To our knowledge, there is a scarcity of scalable Bayesian approaches for structured variable selection that possess desirable theoretical and numerical properties in high dimensions. The Bayesian approaches involving MRF type priors are implemented using Markov chain Monte Carlo and hence are not scalable to high dimensions involving tens of thousands of variables, such as in our cancer genomics application. While the EM approach by \citet{Rockova2014a} can incorporate pathway membership information, it is not equipped to incorporate edge information which is the focus of this article. Moreover, the theoretical properties and scalability of their method to the higher dimensions considered in this work ($p\sim 100{,}000$) are unclear. 
The variational approximation proposed by \citet{Rockova2014} may suffer from the loss of convexity properties and inferior estimates close to the transition points for tuning parameters, as indicated by the authors. 
The frequentist network-based regularization approaches are expected to be more scalable, but make a strong assumption of smoothness of covariate effects for connected variables in the graph, which may be restrictive in real-life applications. 

We propose a Bayesian shrinkage approach and an associated EM algorithm for structured covariates, which is scalable to high dimensional settings and possesses a desirable oracle property in variable selection and estimation for both fixed and increasing dimensions.  The proposed approach assigns Laplace priors to the regression coefficients and incorporates the underlying graph information via a hyper-prior for the shrinkage parameters in the Laplace priors. Specifically, the shrinkage parameters are assigned a log-normal prior specifying the inverse covariance matrix as a graph Laplacian \citep{Chung1997,Ng2002}, which has a zero or positive partial correlation depending on whether the corresponding edge is absent or present. This enables smoothing of shrinkage parameters for connected variables in the graph and conditional independence between shrinkage parameters for disconnected variables. Thus, the resulting approach encourages connected variables to have a similar degree of shrinkage in the model without forcing their regression coefficients to be similar in magnitude. The operating characteristics of the approach can be controlled via tuning parameters with clearly defined roles. 

Although the proposed model can be implemented using Markov chain Monte Carlo,  it is not scalable to high dimensional settings of our interest. As such, we implement an EM algorithm which treats the inverse covariance matrix for the shrinkage parameters as missing variables, and marginalizes over them to obtain the ``observed data" posterior which has a closed form. 
We incorporate recent computational developments such as the dynamic weighted lasso \citep{Chang2010} to obtain a computationally efficient approach which is scalable to high dimensional settings. 

We present the proposed methodology and the EM algorithm in Section 2, the theoretical results in Section 3, and the simulation results comparing our approach with several competitors in Section 4. We apply our method to a cancer genomics study in Section 5.

\section{Methodology}

\subsection{Model Specification}

Let $\bzero_m$ and $\b1_m$ denote the length-$m$ vectors with 0 entries and 1 entries, respectively, and $I_m$ the $m \times m$ identity matrix. The subscript $m$ may be omitted in the absence of ambiguity. For any length-$m$ vector $\bv$, we define $e^\bv = \left(e^{v_1},\dots,e^{v_m}\right)'$, $\log \bv = \left(\log v_1,\dots,\log v_m \right)'$, $|\bv| = \left(|v_1|,\dots,|v_m|\right)'$, and $D_{\bv} = \diag(\bv)$.

Suppose we have a random sample of $n$ observations $\{y_i, \bx_i; i=1,\ldots,n\}$ where $y_i$ is the outcome variable and $\bx_i$ is a vector of $p$ predictors. Let $\mathcal{G} = \langle V,E \rangle$ denote the known underlying graph for the $p$ predictors, where $V=\{1,\dots,p\}$ is the set of nodes and $E \subset \{(j,k): 1\le j < k \le p\}$ is the set of undirected edges. Let $G$ be the $p \times p$ adjacency matrix in which the $(j,k)$-th element $G_{jk}=1$ if there is an edge between predictors $j$ and $k$, and $G_{jk}=0$ if otherwise.

Consider the linear model
\begin{eqnarray}
\by = X \bbeta + \bepsilon, \mbox{ } \bepsilon \sim \cN(\bzero,\sigma^2I_n), \label{eq:model}
\end{eqnarray}
where $\by=(y_1,\dots,y_n)'$, $X=(\bx_1,\dots,\bx_n)'$, $\bbeta=(\beta_1,\dots,\beta_p)'$, $\bepsilon=(\epsilon_1,\dots,\epsilon_n)'$, and $\mathcal{N}(\cdot)$ denotes the Gaussian distribution. We assign the following priors to $\bbeta$ and $\sigma^2$
\begin{align}
\beta_j \sim& \mathcal{DE}(\lambda_j/\sigma),  \mbox{ } \sigma^2 \sim \mathcal{IG}(a_\sigma,b_\sigma), \mbox{ } j=1,\dots,p, \label {eq:base}
\end{align}
where $\lambda_j$ is the shrinkage parameter for $\beta_j$, $\mathcal{DE}(\cdot)$, and $\mathcal{IG}(\cdot)$ denote the double exponential (Laplace) and inverse gamma distributions, respectively.  Prior specification \eqref{eq:base} differs from Bayesian Lasso \citep{Park2008} in that the degree of shrinkage for the $j$-th coefficient is controlled by $\lambda_j$ ($j=1,\ldots,p$) not a common $\lambda$, allowing for adaptive shrinkage guided by underlying graph knowledge.

We encode the graph information $\cG$ in the model via an informative prior on the shrinkage parameters as follows.
\begin{align}
\balpha = (\log(\lambda_1), \ldots, \log(\lambda_p))' \sim \mathcal{N} \left( \bmu,\nu \mathit{\itOmega}^{-1} \right), \label{eq:shrinkage}
\end{align}
where
\begin{align*}
\mathit{\itOmega} = \left[ \begin{array}{cccc} 1+\sum_{j\neq1} \omega_{1j} & -\omega_{12} & \cdots & -\omega_{1p}\\
-\omega_{21} & 1+\sum_{j\neq2} \omega_{2j} & \ddots & -\omega_{2p}\\
\vdots & \ddots & \ddots & \vdots\\
-\omega_{p1} & -\omega_{p2} & \cdots & 1+\sum_{j\neq p} \omega_{pj} \end{array}
\right],
\end{align*}
and assign the following prior to $\bomega = \{ \omega_{jk}: j<k\}$
\begin{align} \label{prior_omega}
\pi(\bomega) \propto |\mathit{\itOmega}|^{-1/2} \prod_{G_{jk}=1} \omega_{jk}^{a_\omega-1} \exp ( -b_\omega\omega_{jk} ) 1 (\omega_{jk}>0) \prod_{G_{jk}=0} \delta_0(\omega_{jk}),
\end{align}
where $\delta_0$ is the Dirac delta function concentrated at 0 and $1(\cdot)$ is the indicator function. Since $\mathit{\itOmega}$ is symmetric and diagonally dominant, it is guaranteed to be positive definite. It follows from prior \eqref{prior_omega} that $\omega_{jk}=0$ if $G_{jk}=0$ and $\omega_{jk}>0$ if $G_{jk}=1$. In other words, under our model formulation the shrinkage parameters $\lambda_j$ and $\lambda_k$ have a positive partial correlation if predictors $j$ and $k$ are connected and have a zero partial correlation otherwise.  The magnitudes of the positive partial correlations are learned from the data, with a higher partial correlation leading to the smoothing of  corresponding shrinkage parameters. Our model formulation has several appealing features. First, a higher positive partial correlation between two connected predictors results in an increased probability of having both predictors selected or excluded simultaneously under an EM algorithm. This makes intuitive sense when both variables are important or unimportant.  Second, in the scenario where one of the connected predictors is important and the other one is not, the method can learn from the data and impose a weak partial correlation, thereby enabling the corresponding shrinkage parameters to act in a largely uncorrelated manner.  Third, the selection of unconnected variables is guided by shrinkage parameters which are partially uncorrelated. Finally, our approach does not constrain the effect sizes for connected variables to be similar in magnitude.


\begin{figure}[h!]
\includegraphics[width=\textwidth]{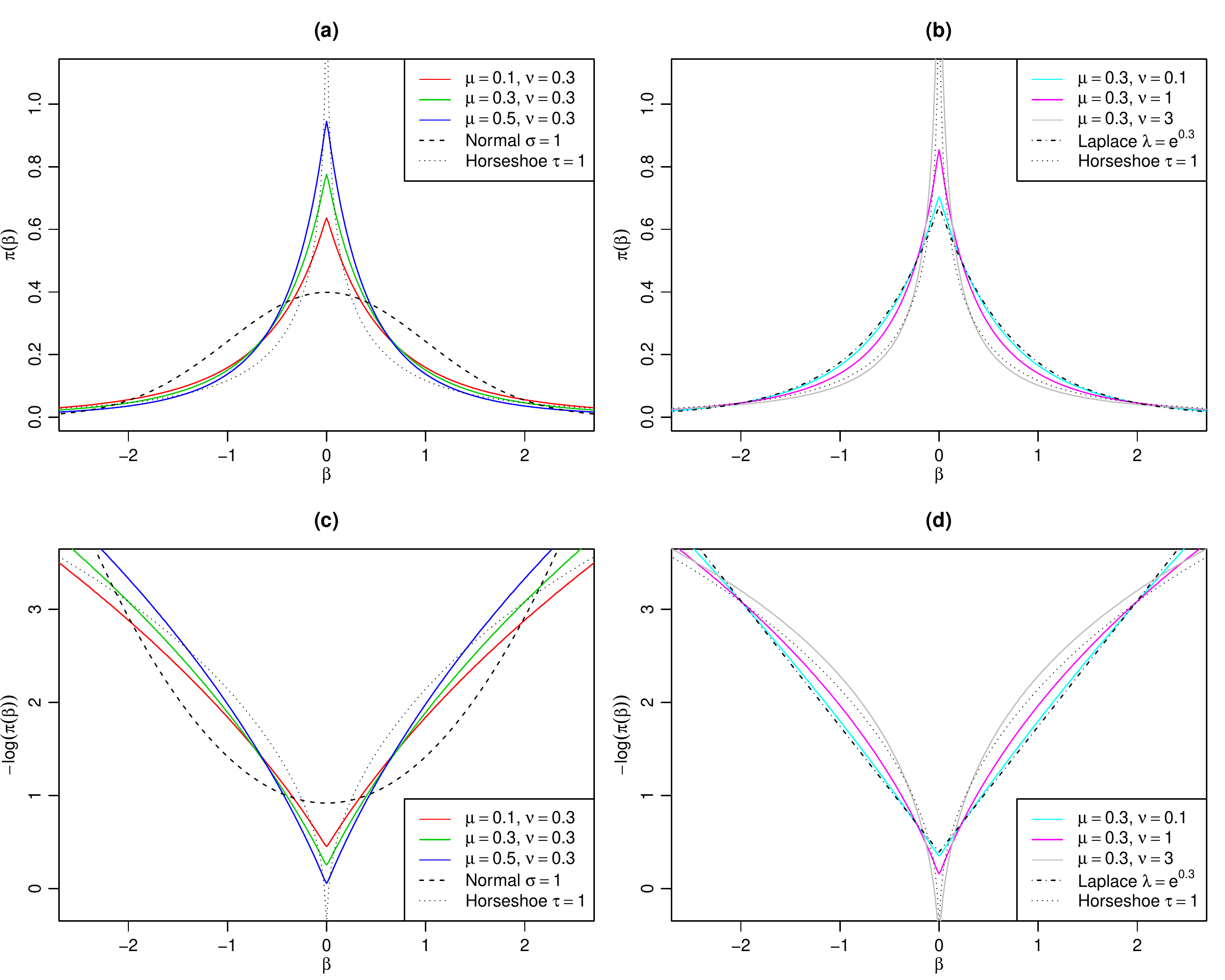}
\caption{Top two panels plot the marginal prior densities of $\beta$ for (a) different $\mu$ while $\nu$ and $\sigma$ are fixed and (b) different $\nu$ while $\mu$ and $\sigma$ are fixed. Bottom two panels (c) and (d) plot the corresponding negative log density functions. The standard normal prior and the horseshoe prior with $\tau=1$ are shown for contrast. The Laplacian prior with $\lambda=e^{0.3}$ is plotted as a comparison to the case with $\mu=0.3$ and $\nu=0.1$.}
\label{fig1}
\end{figure}


The mean vector $\bmu$ in \eqref{eq:shrinkage} determines the locations of $\balpha$, and can be interpreted as controlling the average sparsity of the model. In particular, one can choose $\bmu = \mu \b1$ for some $\mu \in \mathbb{R}$, where a greater value of $\mu$ implies a sparser model. Figure \ref{fig1}(a) plots the marginal density for the regression coefficients for different values of $\mu$ with $\lambda$ marginalized out (via Monte Carlo averaging), while $\nu$ and $\sigma$ are kept fixed. It is clear that larger $\mu$ values lead to sharper peaks at zero with lighter tails, thus encouraging greater shrinkage. On the other hand, $\nu$ specifies the prior confidence on the choice of $\bmu$ as the average sparsity parameter. If $\nu = 0$, we have $\balpha = \bmu$ so that the shrinkage parameters are fixed, resulting in a Lasso type shrinkage. This is evident from Figure \ref{fig1}(d), which plots the negative logarithm of the density for the marginal regression coefficients for different values of $\nu$ while $\mu$ and $\sigma$ are fixed. Figures 1(b) and 1(d) also show that larger values of $\nu$ result in  higher-peaked and heavier-tailed densities and the corresponding penalty becomes similar to non-convex penalties in the frequentist literature, e.g. SCAD in \citet{Fan2001}. Overall, changing the value of $\nu$ results in different types of penalty functions which can be convex or non-convex. 



We note that \eqref{prior_omega} looks similar to a product of the gamma densities. However, it involves an additional term $|\mathit{\itOmega}|^{-1/2}$ which is required to obtain a closed form full posterior, since the term cancels out between $\pi(\balpha)$ and $\pi(\bomega)$. A similar trick was used for specifying the inverse covariance matrix for the regression coefficients in \citet{Liu2014}, which they denote as a graph Laplacian structure. However our approach is distinct in that it specifies a graph Laplacian type structure for the inverse covariance matrix for the log-shrinkage parameters and incorporates prior graph knowledge. Moreover, their approach results in an OSCAR type penalty \citep{Bondell2008}, while $-\log(\pi(\beta))$ under our approach can lead to both convex and non-convex penalties depending on the value of $\nu$.

Proposition \ref{pro} shows that the prior in \eqref{prior_omega} is proper. The proof is presented in the Appendix.

\begin{pro} \label{pro}
The prior $\pi(\bomega)$ of $\bomega$ in \eqref{prior_omega} is proper.
\end{pro}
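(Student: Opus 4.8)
The goal is to show that $\int \pi(\bomega)\, d\bomega < \infty$, where $\pi(\bomega) \propto |\itOmega|^{-1/2} \prod_{G_{jk}=1} \omega_{jk}^{a_\omega-1} e^{-b_\omega \omega_{jk}} 1(\omega_{jk}>0)$, the integral being over the free coordinates $\{\omega_{jk}: G_{jk}=1\}$ (the remaining coordinates being deterministically zero). The plan is to bound the troublesome factor $|\itOmega|^{-1/2}$ from above by a constant, thereby reducing the integrand to a product of (unnormalized) gamma densities, which is obviously integrable.

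First I would record that $\itOmega$ is symmetric and diagonally dominant with strictly positive diagonal, hence positive definite, so $|\itOmega| > 0$ and the integrand is well-defined. The key step is then to show $|\itOmega| \ge 1$, or at least $|\itOmega|$ bounded below by a positive constant, uniformly over $\bomega \ge \bzero$; this gives $|\itOmega|^{-1/2} \le 1$. To see $|\itOmega| \ge 1$, write $\itOmega = I_p + L(\bomega)$, where $L(\bomega)$ is the graph Laplacian of the weighted graph with edge weights $\omega_{jk}$: its diagonal entries are $\sum_{j \neq k} \omega_{jk}$ and off-diagonal entries $-\omega_{jk}$. Since $L(\bomega)$ is a weighted graph Laplacian with nonnegative weights, it is positive semidefinite (its eigenvalues are all $\ge 0$), so every eigenvalue of $\itOmega = I_p + L(\bomega)$ is $\ge 1$, and therefore $|\itOmega| = \prod_i (1 + \mathrm{eig}_i(L)) \ge 1$.

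Given $|\itOmega|^{-1/2} \le 1$, the integral is bounded by
\begin{align*}
\int \pi(\bomega)\, d\bomega \;\le\; C \prod_{G_{jk}=1} \int_0^\infty \omega_{jk}^{a_\omega-1} e^{-b_\omega \omega_{jk}}\, d\omega_{jk} \;=\; C \left( \frac{\Gamma(a_\omega)}{b_\omega^{a_\omega}} \right)^{|E|} < \infty,
\end{align*}
provided $a_\omega > 0$ and $b_\omega > 0$, which are the natural hyperparameter constraints. This establishes propriety.

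The only real obstacle is the lower bound $|\itOmega| \ge 1$; everything else is bookkeeping. The clean way to get it is the Laplacian decomposition $\itOmega = I_p + L(\bomega)$ together with positive semidefiniteness of $L(\bomega)$, which follows either from the standard quadratic-form identity $\bx' L(\bomega) \bx = \sum_{j<k} \omega_{jk}(x_j - x_k)^2 \ge 0$ or from writing $L(\bomega)$ as a nonnegative combination of the edge Laplacians $(\be_j - \be_k)(\be_j - \be_k)'$. One should note for completeness that the ``$1$'' added to each diagonal entry in the definition of $\itOmega$ — i.e., the $I_p$ term — is exactly what makes this bound uniform in $\bomega$ and is essential; without it $|\itOmega|$ would vanish as $\bomega \to \bzero$ (a graph Laplacian is singular), and the argument would break.
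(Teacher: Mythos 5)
Your proof is correct and follows essentially the same route as the paper: both establish $|\itOmega|\ge 1$ via the decomposition $\itOmega = I_p + W D_{\bomega} W'$ (the weighted edge-Laplacian form, whose positive semidefiniteness gives eigenvalues of $\itOmega$ at least $1$), and then bound the remaining integrand by a product of unnormalized gamma densities to get the finite bound $\Gamma(a_\omega)^{|E|} b_\omega^{-a_\omega |E|}$. No gaps; your additional remark about the necessity of the $I_p$ term is a correct observation, though not needed for the result.
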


\begin{figure}[h!]
\includegraphics[width=0.5\textwidth]{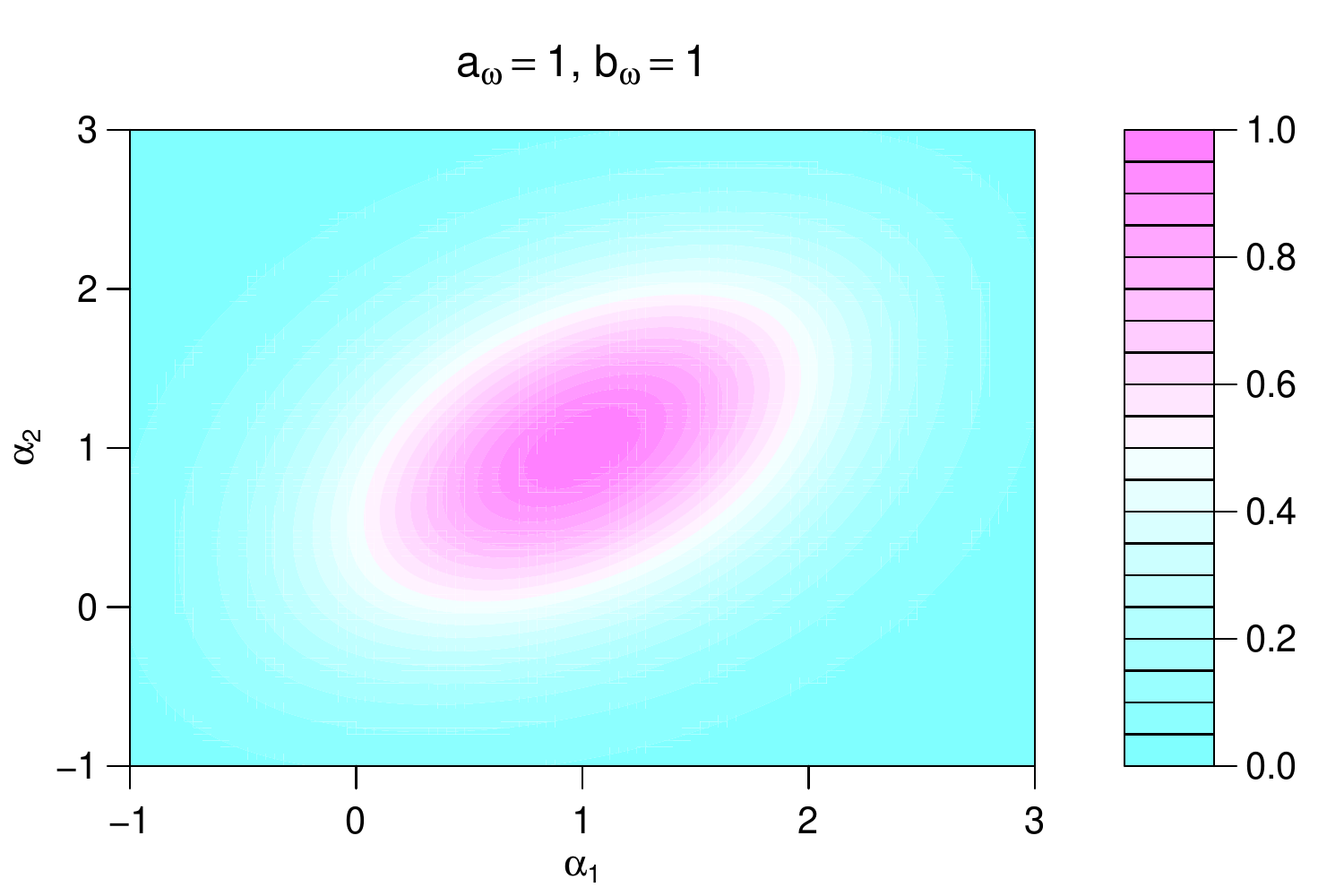}
\includegraphics[width=0.5\textwidth]{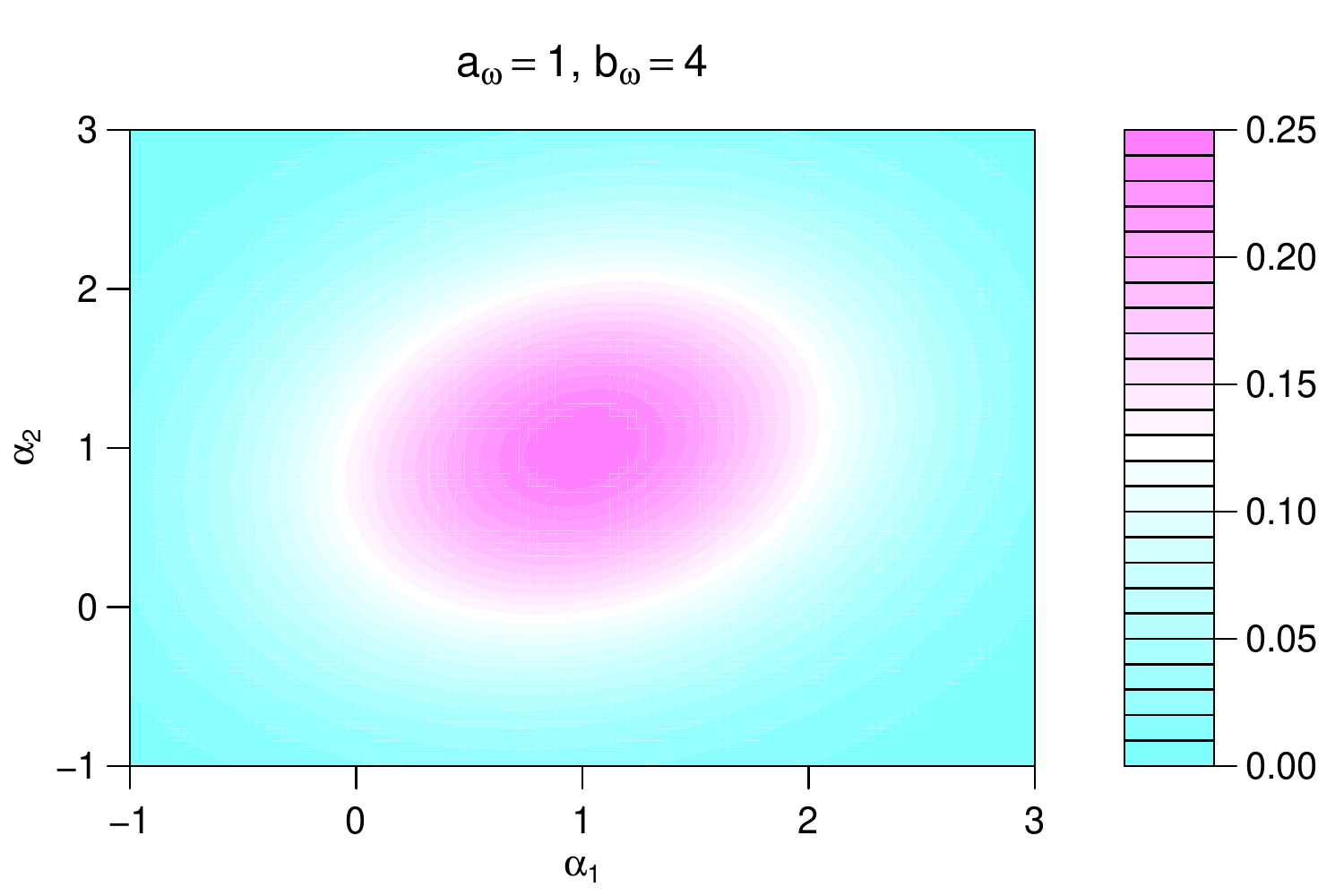}
\includegraphics[width=0.5\textwidth]{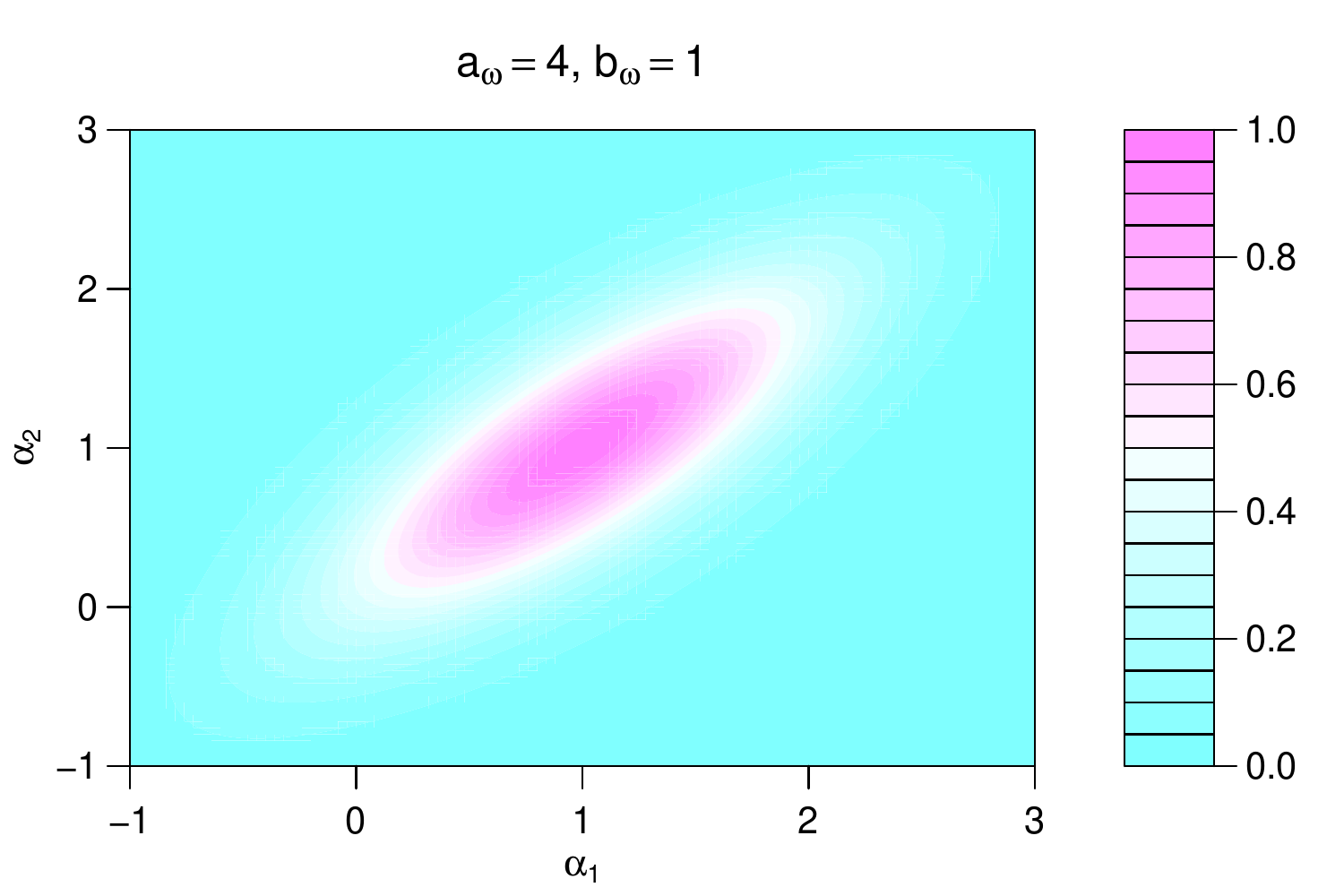}
\includegraphics[width=0.5\textwidth]{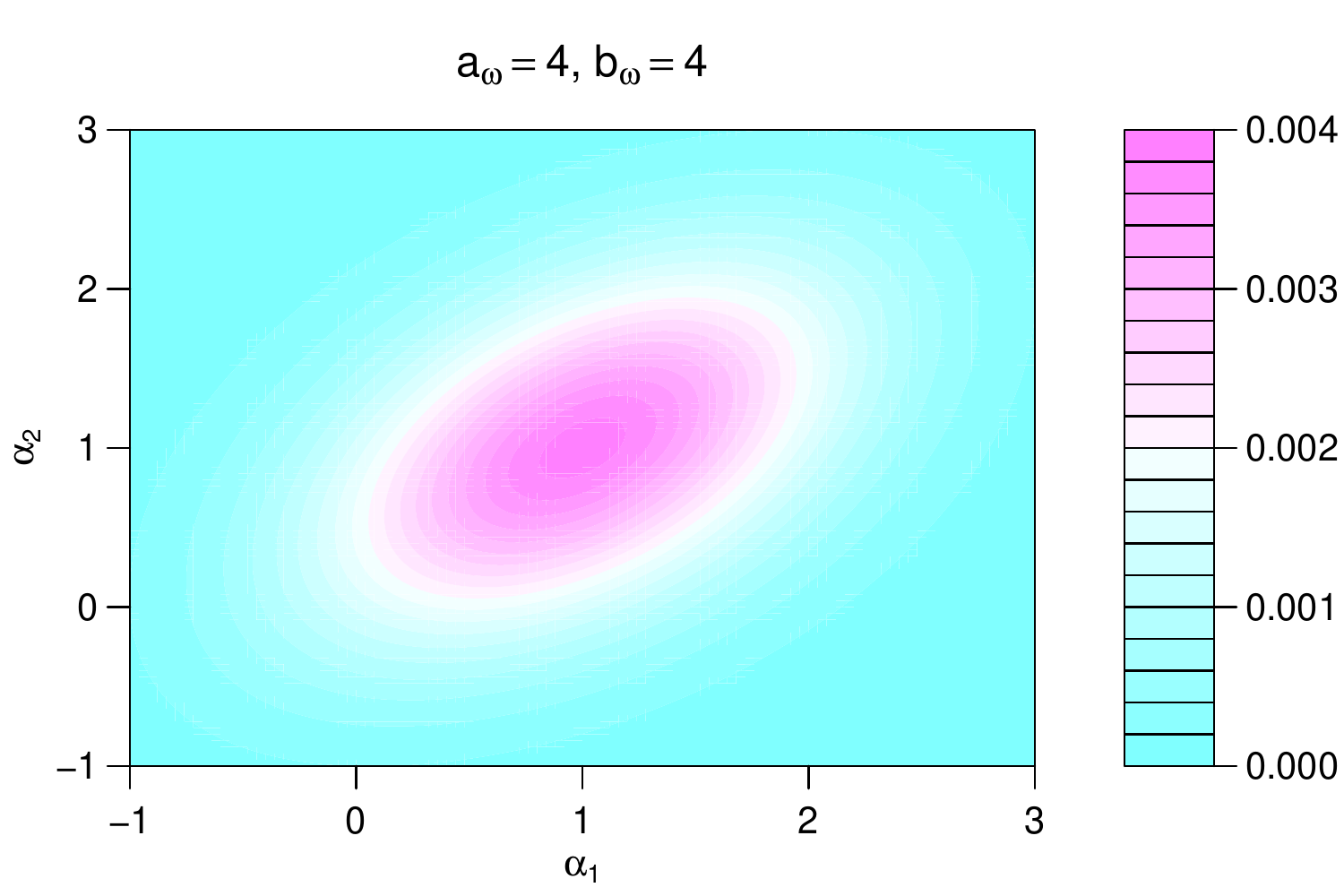}
\caption{Contour plots of the marginal prior density of $\alpha_1$ and $\alpha_2$ for 4 different combinations of $a_\omega$ and $b_\omega$.}
\label{fig2}
\end{figure}

The prior in \eqref{prior_omega} involves a shape parameter $a_\omega$ and the rate parameter $b_\omega$, which serve the similar roles as those of the gamma distribution. In fact, they are directly involved in regulating the correlations between the elements of $\balpha$. 
To see how they affect these correlations, consider $p=2$ and $G_{12}=1$. It follows that the joint prior density of $\alpha_1$ and $\alpha_2$ after marginalizing out $\omega_{12}$ is given (up to a constant) by
\begin{align*}
\pi(\alpha_1,\alpha_2) \propto f(\alpha_1,\alpha_2) = \exp \left( -\frac{(\alpha_1-\mu_1)^2+(\alpha_2-\mu_2)^2}{2\nu} \right) \left( b_\omega + \frac{(\alpha_1-\alpha_2)^2}{2\nu} \right)^{-a_\omega}.
\end{align*}
Figure \ref{fig2} draws the contour plots of $f(\alpha_1,\alpha_2)$ for 4 different combination of $a_\omega$ and $b_\omega$; $(a_\omega,b_\omega) = (1,1),(1,4),(4,1),(4,4)$ with $\mu_1=\mu_2=1$ and $\nu=1$. As $a_\omega$ increases and/or $b_\omega$ decreases, $\alpha_1$ and $\alpha_2$ tend to have a stronger correlation, translating to a higher probability of having similar values. This is also evident in the E-step in the EM algorithm (see equation \eqref{eq:Estep}), where high values of  $a_\omega/b_\omega$ tend to result in a high mean value for $\omega_{jk}$  which in turn tends to result in similar values for $\alpha_j-\mu_j$ and $\alpha_k - \mu_k$.

\subsection{EM Algorithm}

The Maximum-A-Posteriori (MAP) estimator for the proposed model is obtained by maximizing the posterior density over $\btheta = (\bbeta', \sigma^2, \balpha')'$ with $\bomega$ marginalized out. Specifically,
\begin{align}
\widehat{\btheta} = \left(\widehat{\bbeta},\widehat{\sigma}^2,\widehat{\balpha}\right) = \argmax_{\btheta} \int \pi(\btheta,\bomega|\by,X) d\bomega, \label{eq:argmax}
\end{align}
where the full posterior density is given by
\begin{align*}
\pi(\btheta,\bomega|\by,X) &\propto \pi(\by|\bbeta,\sigma^2,X) \pi(\bbeta|\sigma^2,\balpha) \pi(\sigma^2) \times |\mathit{\itOmega}|^{1/2} \exp \left( -\frac{(\balpha-\bmu)' \mathit{\itOmega} (\balpha-\bmu)}{2\nu}  \right)\\
& \qquad \times |\mathit{\itOmega}|^{-1/2} \prod_{j<k,G_{jk}=1} \omega_{jk}^{a_\omega-1} \exp ( -b_\omega\omega_{jk} ) \prod_{j<k,G_{jk}=0} \delta_0(\omega_{jk} ).
\end{align*}
In the case of $\mathit{\itOmega} = I_p$, where no graph information is used, we call the resulting estimator the \emph{EM} estimator for Bayesian \emph{SH}rinkage approach, or EMSH in short. In the general case where prior graph information is used, we call the resulting estimator the EMSH with the \emph{S}tructural information incorporated, or EMSHS in short.

We use $\bmu = \mu\b1$ where $\mu>0$ for simplicity. Note that the algorithm can be easily modified to accommodate heterogeneous sparsity parameters.

Since
\begin{align*}
(\balpha-\bmu)' \mathit{\itOmega} (\balpha-\bmu) = \sum_{j=1}^p (\alpha_j-\mu)^2 + \sum_{j<k} \omega_{jk} (\alpha_j-\alpha_k)^2,
\end{align*}
we have
\begin{align}
\pi(\btheta,\bomega|\by,X) &\propto \pi(\by|\bbeta,\sigma^2,X) \pi(\bbeta|\sigma^2,\balpha) \pi(\sigma^2) \times \exp \left( -\frac{(\balpha-\bmu)' (\balpha-\bmu)}{2\nu}  \right) \nonumber \\
& \qquad \times \prod_{j<k,G_{jk}=1} \omega_{jk}^{a_\omega-1} \exp \left( -b_\omega \omega_{jk} - \frac{\omega_{jk}}{2\nu} (\alpha_j-\alpha_k)^2 \right) \prod_{j<k,G_{jk}=0} \delta_0(\omega_{jk} ). \label{eq:joint}
\end{align}
Therefore, the marginal posterior density for $\btheta$ is given by
\begin{align}
\pi(\btheta|\by,X) & \propto \pi(\by|\bbeta,\sigma^2,X) \pi(\bbeta|\sigma^2,\balpha) \pi(\sigma^2) \times \exp \left( -\frac{(\balpha-\bmu)' (\balpha-\bmu)}{2\nu}  \right) \nonumber\\
& \qquad \times \prod_{j<k,G_{jk}=1} \left( b_\omega + \frac{1}{2\nu} (\alpha_j-\alpha_k)^2 \right)^{-a_\omega}. \label{eq:marginal}
\end{align}
Since the marginal posterior density in \eqref{eq:marginal} is differentiable with respect to $\btheta$ and the set $\{\btheta: \pi(\btheta|\by,X) \ge \eta\}$ is bounded and closed for any $\eta>0$,  its maximum is attainable and the MAP estimator always exists; see Theorem 2.28 in \citet{rudin1976principles}. Since the logarithm of marginal posterior density may not be convex, the MAP estimator may have multiple (local) solutions. However, our numerical experiments suggests a stable performance under our method, and we show in Section 3, that the algorithm admits a unique solution asymptotically.

Although one can directly optimize \eqref{eq:marginal} to compute $\widehat{\btheta}$ in \eqref{eq:argmax}, we choose to use the EM algorithm to obtain the MAP estimate. This is because the solution surface for $\balpha$ given $\bbeta$ after marginalizing out $\bomega$ in \eqref{eq:marginal} is non-convex, leading to potential computational difficulties. We elaborate more on this when describing the M-step for $\balpha$. In summary, we optimize $\pi(\btheta|\by,X)$ by proceeding iteratively with the ``complete data" log-posterior $\pi(\btheta, \bomega | \by, X)$ in \eqref{eq:joint}, where $\itOmega(\bomega)$ is considered ``missing data."  At each EM iteration, we replace $\itOmega$ by its conditional expectation in the E-step and then maximize the expected ``complete data" log posterior with respect to $\btheta$ in the M-step.

The objective function to be optimized at the $t$-th EM iteration is given by
\begin{align}
Q_t(\btheta) = & -\frac{n+p+2a_\sigma+2}{2} \log (\sigma^2) \nonumber \\
& -\frac{(\by-X\bbeta)'(\by-X\bbeta) + 2\sigma \sum_{j=1}^p e^{\alpha_j} |\beta_j| + 2b_\sigma}{2\sigma^2} \nonumber \\
& + \sum_{i=1}^p \alpha_i - \frac{(\balpha-\bmu)' \mathit{\itOmega}^{(t)} (\balpha-\bmu)}{2\nu}, \label{eq:opt}
\end{align}
where $\mathit{\itOmega}^{(t)} = \mathbb{E}\left( \mathit{\itOmega} | \by, X, \btheta^{(t-1)} \right)$.

\subsubsection{E-step} It follows from \eqref{eq:joint} that the posterior density of $\bomega$ given $\btheta$ is the product of the gamma densities where $\omega_{jk}$ follows the gamma distribution with parameters $a_\omega$ and $b_\omega+\frac{\left(\alpha_j-\alpha_k\right)^2}{2\nu}$ for $j < k, G_{jk}=1$. Therefore, we have
\begin{align}
\omega_{jk}^{(t)} = \mathbb{E}(\omega_{jk}|\by,X,\btheta^{(t-1)}) &= \frac{2\nu a_\omega G_{jk}}{ 2\nu b_\omega + \left( \alpha_j^{(t-1)} -\alpha_k^{(t-1)} \right)^2}, \qquad j<k. \label{eq:Estep}
\end{align}
Since we only need to update as many $\omega_{jk}$ as the number of edges in $\cG$, this step can be completed in $O(|E|)$ operations, which is computationally very inexpensive for sparse graphs.

\subsubsection{M-step} For this step, we sequentially optimize the objective function with respect to $\bbeta$, $\sigma^2$, and $\balpha$.

\begin{itemize}

\item M-step for $\bbeta$: With $\sigma = \sigma^{(t-1)}$ and $\balpha = \balpha^{(t-1)}$ fixed, $\bbeta^{(t)}$ can be obtained as
\begin{align*}
\bbeta^{(t)} =  \argmin_{\bbeta} \, \frac{1}{2} (\by-X\bbeta)'(\by-X\bbeta)+ \sum_{j=1}^p \xi_j |\beta_j|,
\end{align*}
where $\xi_j = \sigma e^{\alpha_j}$. This is a weighted lasso problem, which can be solved by many algorithms such as \citet{Efron2004}, \citet{wu2008coordinate}, and \citet{Chang2010}. We use the dynamic weighted lasso (DWL) algorithm developed in \citet{Chang2010}, which is capable of rapidly computing the solution by borrowing information from previous iterations when the regularization parameters change across the EM iterations. Our experience suggests that these regularization parameters differ negligibly over EM iterations under our approach, especially as the solution approaches its limit. As such, the DWL results in substantial savings in computation, compared to alternate algorithms such as LARS which needs to completely recompute the solution for each EM iteration.

Finding a lasso solution using the DWL algorithm requires $O(pq^2)$ operations where $q$ is the number of nonzero coefficients in the solution, provided that the sample correlations between the selected variables and all remaining variables are available. The latter requires an additional $O(npq)$ operations. Therefore, while the initial M-step for $\bbeta$ takes $O(npq)$ operations, the DWL algorithm updates the solution in $O(pq)$ operations as the EM iterations continue and the solution stabilizes. Readers are referred to \citet{Chang2010} for further details regarding the DWL algorithm.

We note that \citet{Park2008}, \citet{Armagan2013}, and several others used the normal mixture representation of the Laplace prior below to compute MAP estimates under an EM algorithm
\begin{align*}
\frac{\lambda}{2\sigma} e^{-\lambda|\beta|/\sigma} = \int_0^\infty \frac{1}{\sqrt{2\pi\tau\sigma^2}} e^{-\beta^2/(2\tau\sigma^2)} \frac{\lambda^2}{2} e^{-\lambda^2\tau/2} d\tau,
\end{align*}
where $\tau$ is the latent scale parameter that is imputed in the E-step. We choose to use the form of the Laplace prior instead of the above mixture representation due to several considerations. First, an M-step for $\bbeta$ of the EM algorithm under the normal mixture representation takes $O(n^2p)$ operations, which is slower than the proposed approach. Second, as pointed out by \citet{Armagan2013}, the Laplace representation leads to faster convergence than the normal mixture representation. Third, the regression coefficients cannot attain exact zeros in the normal mixture representation, and additional post-processing steps are required for variable selection, which can be sensitive to cut-off values. Lastly, numerical difficulties may arise when $\beta$ approaches zero under the normal mixture representation because the conditional mean of $\tau^{-1}$ may explode to infinity.

\item M-step for $\sigma$: With $\bbeta = \bbeta^{(t)}$ and $\balpha = \balpha^{(t-1)}$ fixed, we have
\begin{align*}
\sigma^{(t)} = \argmin_\sigma \frac{c_1}{\sigma^2} + \frac{c_2}{\sigma} + c_3 \log \sigma,
\end{align*}
where $c_1 = \frac{1}{2}(\by-X\bbeta)'(\by-X\bbeta) + b_\sigma$, $c_2 = \sum_{j=1}^p e^{\alpha_j} |\beta_j|$, and $c_3 = n+p+2a_\sigma+2$.
The solution is then given by $\sigma^{(t)} = \frac{c_2 + \sqrt{c_2^2+8c_1c_3}}{2c_3}.$

\item M-step for $\balpha$: Since there is no closed-form solution for $\balpha$, we use the Newton method. With $\bbeta = \bbeta^{(t)}$, $\sigma = \sigma^{(t)}$, and $\mathit{\itOmega} = \mathit{\itOmega}^{(t)}$ fixed, the Newton search direction at $\balpha$ is given by $\bd_N(\balpha) = -H^{-1} \bg,$ where $H = \sigma \mathit{\itOmega} + \nu D_{|\bbeta|} D_{e^{\balpha}}$ \mbox{ and } $\bg = \sigma \mathit{\itOmega} \left( \balpha-\bmu \right) - \nu \sigma \b1 + \nu D_{|\bbeta|} e^{\balpha}$. As the Hessian matrix $H$ is always positive definite, $\bd_N(\balpha)$ becomes a valid Newton direction. Therefore, we can update $\balpha$ as follows
\begin{align}\label{eq:update_alpha}
\balpha^{(t)} &= \balpha^{(t-1)} + s_t \bd_N \left( \balpha^{(t-1)} \right),
\end{align}
where $s_t$ is the step size.

Since the usual Newton method involves the inversion of the $p \times p$ Hessian matrix $H$, it is only feasible when $p$ is moderate. When $p$ is large, we suggest replacing the Hessian matrix by its diagonal matrix \citep{becker1988improving} and $\bd_N(\balpha)$ by  $\bd(\balpha) = - D_H^{-1} \bg$, where $D_H = \diag(H) = \sigma \diag(\mathit{\itOmega}) + \nu D_{|\bbeta|} D_{e^{\balpha}}$. Since $D_H$ is positive definite, $\bd(\balpha)$ is a valid descent direction, and the step size $s_t$ can be determined by the backtracking line search
\citep{nocedal2006numerical}.

Note that there are only $p+|E|$ unique nonzero elements in $\mathit{\itOmega}$. Therefore, obtaining the $p$-dimensional direction vector takes $O(p+|E|)$ operations only. Since edges in network graphs are usually sparse, its overall computation is much faster than the Newton method even if approximating the Hessian matrix may slightly increase the number of EM iterations.

In addition, it is not necessary to repeat the Newton steps until convergence to obtain the optimal solution for $\balpha$ within each M-step for $\balpha$. It suffices that each iteration of M-step for $\balpha$ ensures an increase in the value of the objective function, in order to guarantee the convergence of the EM algorithm. However, our experience indicates that repeating the Newton steps three to five times within each M-step for $\balpha$ helps reduce the number of total EM iterations.

As alluded to earlier, the advantage of the EM algorithm over directly optimizing the marginal posterior density $\pi(\btheta|\by,X)$ with respect to $\btheta$ lies in the fact that the Hessian matrix with respect to $\balpha$ is guaranteed to be positive definite in the former case, while it is not in the latter. Since the EM algorithm exploits part of the curvature information in optimizing with respect to $\balpha$ at nearly no extra computational cost, it is expected to lead to a reduced number of total iterations and hence savings in computation \citep{nocedal2006numerical}.

\end{itemize}

The EM algorithm can be started from the E-step for $\bomega$ with suggested initial values $\bbeta^{(0)} = \bzero$, $\sigma^{(0)} = \sqrt{(\by'\by+2b_\sigma)/c_3}$, and $\alpha_j^{(0)} = \mu$ for all $j$. The number of operations in each EM iteration is $O(npq+|E|)$ initially and reduces to $O(pq+|E|)$ after a few iterations. We repeat the EM procedures until the relative improvement of the optimum value of the objective function goes below a certain threshold, say $\epsilon=e^{-5}$.

\subsection{Role of Shrinkage Parameters $\balpha=\log(\blambda)$ }
\label{interpretation}


It is straightforward to show that the estimators satisfy
\begin{align}
\widehat{\bbeta} = \argmin_{\bbeta} \, \frac{1}{2}(\by-X\bbeta)'(\by-X\bbeta) + \sum_{j=1}^p \widehat{\xi}_j |\beta_j|, \label{eq:betahat}
\end{align}
and
\begin{align} \label{adaptive_penalty}
\widehat{\balpha} = \argmin_{\balpha} \, \frac{1}{2\nu} (\balpha-\bmu)' \mathit{\itOmega}^{(\infty)} (\balpha-\bmu) - \b1' \balpha + \frac{1}{\widehat{\sigma}} | \widehat{\bbeta} |' e^{\balpha},
\end{align}
where $\widehat{\xi}_j = \widehat{\sigma} e^{\widehat{\alpha}_j}$ and $\mathit{\itOmega}^{(\infty)}$ is the final value of $\mathit{\itOmega}$ from the EM algorithm. When $\widehat{\sigma}$ and $\widehat{\balpha}$ are fixed, the solution $\widehat{\bbeta}$ in \eqref{eq:betahat} resembles an adaptive lasso solution with the regularization parameter $\widehat{\bxi}$. Instead of assuming fixed weights as in the adaptive lasso, the EMSHS uses the data and the underlying graph knowledge to learn the weights. Specifically, the estimate of $\alpha_j$ depends on the shrinkage parameters corresponding to variables connected to $x_j,j=1,\ldots,p$ and the corresponding partial correlations, as follows
\begin{align}
\big|\widehat{\beta}_j\big|  = \frac{\widehat{\sigma}}{\nu} \left( \mu + \nu - \widehat{\alpha}_j + \sum_{k \sim j} \omega_{jk}^{(\infty)} (\widehat{\alpha}_k - \widehat{\alpha}_j) \right) e^{-\widehat{\alpha}_j}. \label{eq:adaptive_weights}
\end{align}
By estimating the weights in an adaptive manner guided by the prior graph knowledge, the proposed approach avoids having to specify an initial consistent estimator for the weights as in the adaptive lasso, which is expected to be of significant practical advantage in high dimensional settings. This is in fact our experience in numerical studies; see Section 4. Finally, we note that larger values of $\widehat{\alpha}_j$ translate to smaller values for $\big|\widehat{\beta}_j\big|,j=1,\ldots,p$, and vice-versa, clearly demonstrating the role of the shrinkage parameters $\balpha$.



\section{Theoretical Properties} \label{sec:oracle}



To fix ideas, let $p_n$ denote the number of candidate predictors,  of which $q_n$ are the true important variables. Model~\eqref{eq:model} is reformulated as
\begin{align*}
\by_n = X_n \bbeta_0 + \bepsilon_n,
\end{align*}
where $\by_n$ is the $n \times 1$ response vector, $X_n$ is the $n\times p_n$ design matrix, $\bbeta_0$ is the $p_n \times 1$ true coefficient vector, and $\bepsilon_n$ is the $n \times 1$ error vector. The errors are independent Gaussian with mean 0 and variance $\sigma_0^2$; $\bepsilon_n \sim \cN(\bzero,\sigma_0^2I_n)$, and the errors are also independent of the covariates.

The covariates are stochastic and are dictated by an inverse covariance matrix depending on a true graph $\cG_{0n}$. They are standardized such that
\begin{align*}
\b1' \bx_{nj} = 0, \qquad \bx_{nj}'\bx_{nj} = n, \qquad j=1,\dots,p_n,
\end{align*}
where $\bx_{nj}$ is the $j$-th column (variable) of $X_n$, and let $\itSigma_n = \frac{1}{n} X_n' X_n$ be the sample covariance matrix.

Let $\widehat{\btheta}_n = (\widehat{\bbeta}_n',\widehat{\sigma}_n^2,\widehat{\balpha}_n')'$ be the EMSHS solution.
Let $\cA_n = \{j:\widehat{\beta}_{nj} \neq 0\}$ be the index set of the selected variables in $\widehat{\bbeta}_n$,
and $\cA_0 = \{j:\beta_{0j} \neq 0\}$ be the index set of the true important variables where $|\cA_0|=q_n$.
We assume $\|\bbeta_0\|$ is bounded so that the variance of the response and the signal-to-noise ratio stay bounded. Without loss of generality, we assume $\|\bbeta_0\| = 1$.

For any index set $\cA$, $\bv_\cA$ represents the subvector of a vector $\bv$ with entries corresponding to $\cA$. $E_{\cA\cB}$ is the submatrix of a matrix $E$ with rows and columns corresponding to $\cA$ and $\cB$, respectively. When a sequential index set $\cA_n$ is used for a sequence of vectors or matrices indexed by $n$, the subscript $n$ may be omitted for conciseness if it does not cause a confusion. For example, $\bv_{n\cA_n}$ can be written as $\bv_{\cA_n}$ or $\bv_{n\cA}$, and $E_{n\cA_n\cB_n}$ can be written as $E_{\cA_n\cB_n}$ or $E_{n\cA\cB}$.

Let $O( \cdot )$, $o( \cdot )$, $O_p( \cdot )$, and $o_p( \cdot )$ denote the standard big $O$, little $o$, big $O$ in probability, and little $o$ in probability, respectively.
Further $f(n) = \Theta(g(n))$ indicates that $f(n)$ and $g(n)$ satisfy $f(n) = O(g(n))$ and $g(n) = O(f(n))$; $f(n) = \Theta_p(g(n))$ is similarly defined.
When these notations are used for vectors and matrices, they bound the $L_2$-norm $\| \cdot \|$ of the entities. For example, $\bv = O(n)$ means that $\|\bv\| = O(n)$. Every norm $\| \cdot \|$ in this article denotes the $L_2$ norm. Finally, $\rightarrow_p$ and $\rightarrow_d$ denote convergence in probability and in distribution, respectively.

\subsection{Oracle Property for Fixed $p$}

Consider the case with a fixed number of candidate predictors (i.e., $p_n=p$). Suppose the following conditions hold as $n \rightarrow \infty$.

\begin{enumerate}[label=(A\arabic{enumi}),ref=(A\arabic{enumi})]
\item \label{ass:fix:beta} $\|\bbeta_0\| = 1$ and $\min_{j \in \cA_0} |\beta_{0j}| \ge C_\beta$ for some constant $C_\beta>0$.
\item \label{ass:fix:XX} $\itSigma_n \rightarrow_p \itSigma_0$ where $\itSigma_0$ is positive definite and depends on $\cG_{0n}=\cG_0$.
\item \label{ass:fix:mu} $\mu_n = R\log n + o(\log n)$ where $1/2<R<1$.
\item \label{ass:fix:nu} $\nu_n = \Theta( n^{-r} \log n )$ where $0 < r < R-1/2$.
\item \label{ass:fix:omega} $a_{\omega n} b_{\omega n}^{-1} = o(1)$.
\item \label{ass:fix:sigma} $a_{\sigma n} = a_{\sigma 1} n^z$ and $b_{\sigma n} = b_{\sigma 1} n^z$ for $a_{\sigma 1}>0$, $b_{\sigma 1} > 0$, and $0 \le z < 1$.
\end{enumerate}

Assumption \ref{ass:fix:beta} states that the nonzero coefficients stay away from zero, although their magnitudes are allowed to vary with $n$. Assumption \ref{ass:fix:XX} is a fairly general regularity condition for the
design matrix which rules out collinearity between covariates, and ensures that the important variables are not replaced by any other remaining variables in the model.  Assumption \ref{ass:fix:XX} is a fairly general regularity condition for the design matrix. Readers are referred to Remark \ref{rem:fix:principle} for the comments on \ref{ass:fix:mu} and \ref{ass:fix:nu}. Assumption \ref{ass:fix:omega} forces the precision matrix to assume a diagonal form as $n \to \infty$.
Thus as $n \to \infty$, we essentially do not need to utilize prior graph knowledge $\cG_0$ to establish the theoretical results for fixed $p$ case. Hence our asymptotic results for fixed $p$ is agnostic to the structure of the prior graph, and hence robust to mis-specification. However, we note that for finite samples, incorporation of true prior graph knowledge is of paramount importance in achieving improved numerical performance.  According to \ref{ass:fix:sigma}, the prior on $\sigma^2$ is well-tightened, with $\sigma^2$ converging to a constant when $z>0$.

\begin{thm} \label{oracle:fix}
Assume the conditions \ref{ass:fix:beta}-\ref{ass:fix:sigma}.
The following statements hold for the EMSHS estimator $\widehat{\btheta}_n = (\widehat{\bbeta}_n',\widehat{\sigma}_n^2,\widehat{\balpha}_n')'$ as $n \rightarrow \infty$.
\begin{enumerate}[label=(\alph*)]
\item $P ( \cA_n = \cA_0 ) \rightarrow 1$.
\item $n^{1/2} \left( \widehat{\bbeta}_{n\cA_0} - \bbeta_{0\cA_0} \right) \rightarrow_d \mathcal{N}\left(\bzero, \sigma_0^2 \itSigma_{0\cA_0\cA_0}^{-1} \right)$.
\item The solution is unique in probability.
\end{enumerate}
\end{thm}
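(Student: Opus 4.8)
\emph{Proof proposal.} My plan exploits the fixed-point description of $\widehat{\btheta}_n$ in \eqref{eq:betahat}--\eqref{eq:adaptive_weights}: $\widehat{\bbeta}_n$ minimizes a weighted lasso criterion with data-driven weights $\widehat{\xi}_{nj} = \widehat{\sigma}_n e^{\widehat{\alpha}_{nj}}$, while $\widehat{\balpha}_n$ is in turn the minimizer of a strictly convex program determined by $\widehat{\bbeta}_n$, $\widehat{\sigma}_n$ and the vanishing edge weights $\omega^{(\infty)}_{jk}$. The argument then follows the adaptive lasso route to the oracle property \citep{Zou2006}, the nonstandard ingredient being to show that the endogenous weights split into a group of order $o_p(n^{1/2})$ on $\cA_0$ and a group dominating $n^{1/2}$ off $\cA_0$; conditions \ref{ass:fix:mu}--\ref{ass:fix:nu} are calibrated precisely to this end.

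I would proceed through the following steps. (i) \emph{A priori bounds.} The weighted lasso characterization \eqref{eq:betahat} forces $\|\widehat{\bbeta}_n\| = O_p(1)$ via \ref{ass:fix:XX}; the $\sigma$ update then gives $\widehat{\sigma}_n = \Theta_p(1)$ using \ref{ass:fix:sigma}; and evaluating \eqref{eq:adaptive_weights} at the coordinate $j$ that maximizes $\widehat{\alpha}_{nj}$, where the graph term is nonpositive, yields $\max_j \widehat{\alpha}_{nj} \le \mu_n + \nu_n$, hence the uniform bound $\max_j \widehat{\xi}_{nj} = O_p(n^R)$. (ii) \emph{Crude consistency.} Plugging this bound into the basic inequality
\[
\tfrac12\| X_n(\widehat{\bbeta}_n - \bbeta_0) \|^2 \le \bepsilon_n' X_n(\widehat{\bbeta}_n - \bbeta_0) + \textstyle\sum_j \widehat{\xi}_{nj}\big( |\beta_{0j}| - |\widehat{\beta}_{nj}| \big)
\]
and using \ref{ass:fix:XX} gives $\|\widehat{\bbeta}_n - \bbeta_0\| = O_p(n^{R-1}) = o_p(1)$ --- this is where $R < 1$ enters --- and consequently $\widehat{\sigma}_n \to_p \sigma_0$. (iii) \emph{Weight dichotomy.} By \ref{ass:fix:beta} and (ii), $|\widehat{\beta}_{nj}| \ge C_\beta/2$ on $\cA_0$ with probability $\to 1$, so \eqref{eq:adaptive_weights} --- with \ref{ass:fix:omega} forcing the graph terms $\sum_{k \sim j}\omega^{(\infty)}_{jk}(\widehat{\alpha}_{nk} - \widehat{\alpha}_{nj})$ to be $o(\log n)$ --- pins $\widehat{\alpha}_{nj} = r\log n + O_p(1)$, i.e. $\widehat{\xi}_{nj} = \Theta_p(n^r)$, for $j \in \cA_0$, whereas off $\cA_0$, where $\widehat{\beta}_{nj} \to_p 0$, it drives $\widehat{\alpha}_{nj}$ up towards $\mu_n = R\log n$. (iv) \emph{Refined rate.} Re-running the basic inequality with the sharper bound $\max_{j \in \cA_0}\widehat{\xi}_{nj} = O_p(n^r) = o_p(n^{1/2})$ (the off-$\cA_0$ penalty terms only help) gives $\|\widehat{\bbeta}_n - \bbeta_0\| = O_p(n^{-1/2})$, whence $|\bx_{nj}'(\by_n - X_n\widehat{\bbeta}_n)| = O_p(n^{1/2})$ for every $j$.

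Part (a) then follows: if $\widehat{\beta}_{nj} \ne 0$ for some $j \notin \cA_0$, the weighted lasso stationarity condition forces $\widehat{\xi}_{nj} = |\bx_{nj}'(\by_n - X_n\widehat{\bbeta}_n)| = O_p(n^{1/2})$, so $\widehat{\alpha}_{nj} \le \tfrac12\log n + O_p(1)$; but since $\mu_n = R\log n$ with $R > 1/2$, \eqref{eq:adaptive_weights} then forces $|\widehat{\beta}_{nj}| \ge \Theta_p(n^{r-1/2})$, contradicting (iv) because $r > 0$. Thus $\cA_n \subseteq \cA_0$, and $\min_{j \in \cA_0}|\widehat{\beta}_{nj}| \ge C_\beta/2 > 0$ rules out the reverse inclusion. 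For part (b), on $\{\cA_n = \cA_0\}$ the active block solves the smooth equation $X_{\cA_0}'(\by_n - X_{\cA_0}\widehat{\bbeta}_{n\cA_0}) = \bs_n$ with $\|\bs_n\| \le \sqrt{q_n}\max_{j \in \cA_0}\widehat{\xi}_{nj} = O_p(n^r)$; inverting, $n^{1/2}(\widehat{\bbeta}_{n\cA_0} - \bbeta_{0\cA_0}) = \itSigma_{n\cA_0\cA_0}^{-1}\big( n^{-1/2}X_{\cA_0}'\bepsilon_n - n^{-1/2}\bs_n \big)$, and the conclusion follows from \ref{ass:fix:XX}, the central limit theorem for $n^{-1/2}X_{\cA_0}'\bepsilon_n$ (Gaussian errors independent of $X_n$, $\itSigma_{n\cA_0\cA_0} \to_p \itSigma_{0\cA_0\cA_0}$), and $n^{-1/2}\|\bs_n\| = O_p(n^{r-1/2}) = o_p(1)$ by \ref{ass:fix:nu}. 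For part (c), on the same event every maximizer has support $\cA_0$ and, by (i)--(iv), lies in a shrinking neighborhood of $(\bbeta_{0\cA_0}, \sigma_0^2, \widehat{\balpha}_n)$, on which the marginal log-posterior \eqref{eq:marginal} in $(\bbeta_{\cA_0}, \sigma^2, \balpha)$ has a negative-definite Hessian: the $\bbeta_{\cA_0}$ block is $-\sigma^{-2}X_{\cA_0}'X_{\cA_0} \prec 0$ by \ref{ass:fix:XX}; the $\balpha$ block is $-\nu_n^{-1}I_p$ plus a perturbation of operator norm $O(a_{\omega n}b_{\omega n}^{-1}\nu_n^{-1}) = o(\nu_n^{-1})$ from the $\big( b_{\omega n} + (\alpha_j - \alpha_k)^2/2\nu_n \big)^{-a_{\omega n}}$ factors, hence $\preceq -\tfrac12\nu_n^{-1}I_p \prec 0$ by \ref{ass:fix:omega}; the $\sigma^2$ block is negative near $\sigma_0^2$; and the cross blocks are of strictly smaller order, so a block Schur argument gives negative definiteness --- whence the maximizer is unique, the asymptotic restoration of the convexity that \eqref{eq:marginal} lacks in finite samples.

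\emph{Main obstacle.} I expect the crux to be the weight dichotomy of step (iii) together with the false-positive exclusion in part (a): because $\widehat{\btheta}_n$ is a \emph{global} maximizer of a non-convex objective (not an EM limit from a fixed start), one must convert the coupling between $\widehat{\bbeta}_n$ and $\widehat{\balpha}_n$ through the fixed-point equations into quantitative control of each $\widehat{\alpha}_{nj}$, and it is exactly here that the precise relations $1/2 < R < 1$, $0 < r < R - 1/2$ and $z < 1$ are consumed. Once the dichotomy is established, steps (iv), (b) and (c) reduce to the standard adaptive-lasso calculations together with the Hessian bound.
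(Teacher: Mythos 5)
Your proposal is correct in its essentials and reaches all three conclusions by a route that shares the paper's skeleton (a priori bounds on $\widehat{\sigma}_n$, $\max_j\widehat{\alpha}_{nj}\le\mu_n+\nu_n$, the shrinkage dichotomy $\widehat{\xi}_{nj}=\Theta_p(n^r)$ on $\cA_0$ versus $\gg n^{1/2}$ off it, KKT stationarity for part (b), and a Hessian positivity argument for part (c)), but differs in the mechanism for selection consistency. The paper works from the explicit closed-form KKT representation of $\widehat{\bbeta}_{\cA_n}$, first rules out false negatives ($\cD_n=\emptyset$), uses that to bound $\widehat{\sigma}_n^{-2}=O_p(1)$, and then kills false positives by showing that the quantity $h_n={\widehat{\bxi}_{\cA_n}}'S_{n\cA\cA}(X_{\cA_n}'X_{\cA_n})^{-1}S_{n\cA\cA}\widehat{\bxi}_{\cA_n}$ diverges, which makes the identity $\widehat{\sigma}_n\sum_j e^{\widehat{\alpha}_{nj}}|\widehat{\beta}_{nj}|=O_p(h_n^{1/2})-h_n$ untenable. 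You instead run the lasso basic inequality twice to get the crude rate $O_p(n^{R-1})$ and then the refined rate $O_p(n^{-1/2})$, and exclude a false positive $j$ by the direct contradiction that KKT forces $\widehat{\xi}_{nj}=O_p(n^{1/2})$ while the fixed-point equation \eqref{eq:adaptive_weights} then forces $|\widehat{\beta}_{nj}|\gtrsim n^{r-1/2}\gg n^{-1/2}$. Your route is cleaner and closer to the standard adaptive-lasso argument; the paper's route is the one that survives the passage to diverging $p$ under partial orthogonality, which is presumably why the authors prove the fixed-$p$ theorem as a by-product of that machinery.

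Two small patches are needed. First, your step (i) claim $\widehat{\sigma}_n=\Theta_p(1)$ does not follow from \ref{ass:fix:sigma} alone (the prior gives only $\widehat{\sigma}_n^2\ge 2b_{\sigma n}/c_{3n}=\Theta(n^{z-1})\to 0$); you need the trivial bound $\|\by_n-X_n\widehat{\bbeta}_n\|^2\ge\min_{\bbeta}\|\by_n-X_n\bbeta\|^2=(n-p)\sigma_0^2+O_p(n^{1/2})$, which for fixed $p$ gives the lower bound without any selection result (the paper obtains the analogous bound only after establishing $\cD_n=\emptyset$). Second, your appeal to \ref{ass:fix:omega} to make the graph terms $\sum_{k\sim j}\omega_{jk}^{(\infty)}(\widehat{\alpha}_{nk}-\widehat{\alpha}_{nj})$ uniformly $o(\log n)$ requires not just $\omega_{jk}^{(\infty)}\le a_{\omega n}b_{\omega n}^{-1}=o(1)$ but also that $\max_j\widehat{\alpha}_{nj}-\min_j\widehat{\alpha}_{nj}=O_p(\log n)$, and the lower bound on $\min_j\widehat{\alpha}_{nj}$ (the paper's Lemma \ref{lemma:alpha}(c)) is a separate fixed-point argument you should spell out. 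Finally, in part (c) both you and the paper conclude uniqueness from strict concavity on the region where solutions reside via a segment argument; this implicitly needs the segment between two putative solutions to remain in that region, a point worth a sentence in either treatment.
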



The proof for Theorem 1 is provided in Appendix.

\begin{rem}
Although we only consider Gaussian errors, the results also hold for error distributions with finite variance.
\end{rem}

\begin{rem} \label{rem:fix:principle}
With a little lack of rigorousness, \ref{ass:fix:mu} ensures $\widehat{\xi}_{nj} = \widehat{\sigma}_n e^{\widehat{\alpha}_{nj}} = \Theta_p(n^R)$ for $j \in \cA_n^c$, and \ref{ass:fix:nu} ensures $|\widehat{\beta}_{nj}| \widehat{\xi}_{nj} = \Theta_p(n^r)$ for $j \in \cA_n$.
Therefore, if $\widehat{\bbeta}_n$ is $\sqrt{n}$-consistent, which is indeed the case, the important variables receive shrinkage of order $\widehat{\xi}_{nj} = \Theta_p(n^r)$ and the unimportant variables receive shrinkage of order \emph{at least} $\widehat{\xi}_{nj} = \Theta_p(n^{r+1/2})$. This is the key that leads to the oracle property.
\end{rem}

\begin{rem} \label{rem:fix:sigma}
The true residual variance $\sigma_0^2$ is consistently estimated by $\widehat{\sigma}_n^2$. That is, $\widehat{\sigma}_n^2 \rightarrow_p \sigma_0^2$.
\end{rem}

\subsection{Oracle Property for Diverging $p$}

When the number of candidate predictors is diverging, let $\cG_n = \langle V_n, E_n \rangle$ be the working graph which is used to fit the model where $V_n = \{1,\dots,p_n\}$ and $E_n$ is the set of edges. Let $G_n$ be the adjacency matrix for $\cG_n$, $l_{nj} = \sum_{k=1}^{p_n} G_{n,jk}$ be the degree of the vertex $j$ in $\cG_n$, and $L_n = \max_{1 \le j \le p_n} l_{nj}$ be the maximum degree among all vertices.

Suppose the following conditions hold as $n \rightarrow \infty$.

\begin{enumerate}[label=(B\arabic{enumi}),ref=(B\arabic{enumi})]
\item \label{ass:p} $p_n = O(\exp(n^U))$ where $0 \le U < 1$.
\item \label{ass:q} $q_n = O(n^u)$ where $0 \le u < (1-U)/2$ and $q_n \le p_n$.
\item \label{ass:beta} $\|\bbeta_0\| = 1$ and $\min_{j\in \cA_0} |\beta_{0j}| \ge C_\beta q_n^{-1/2}$ for some constant $C_\beta>0$.
\item \label{ass:XX} Assume that $\cG_{0n}$ is such that the smallest eigen value of $\itSigma_{n\cA\cA}$ is greater than $\tau_1$ for any index set $\cA$ with $|\cA| \le n$, and that the largest eigen value of $\itSigma_n$ is less than $\tau_2$ almost surely, where $0<\tau_1<\tau_2<\infty$.
\item \label{ass:rho} Assume that $\cG_{0n}$ is such that the following partial orthogonality condition holds.
\begin{align*}
\|\itSigma_{n\cB\cC}\|^2 \le \rho_n^2 \|\itSigma_{n\cB\cB}\| \|\itSigma_{n\cC\cC}\|, \qquad \forall \cB \subset \cA_0, \forall \cC \subset \cA_0^c,
\end{align*}
almost surely where $\rho_n = O(n^{-1/2})$.
\item \label{ass:mu} $\mu_n = R\log n + \frac{1}{2} \log (1+p_n/n) + o(\log n)$ where $(U+1)/2<R<1-u$.
\item \label{ass:nu} $\nu_n = \Theta( (1+p_n/n)^{-1} n^{-r} \log n )$ where $0 < r < R-1/2< 1/2-u$.
\item \label{ass:omega} $L_n a_{\omega n} b_{\omega n}^{-1} = o(1)$.
\item \label{ass:sigma} $a_{\sigma n} = a_{\sigma 1} n^z$ and $b_{\sigma n} = b_{\sigma 1} n^z$ for $a_{\sigma 1}>0$, $b_{\sigma 1} > 0$, and $1-r<z<1$.
\end{enumerate}

Assumption~\ref{ass:p} allows the number of candidate predictors to increase at an exponential rate and \ref{ass:q} allows the number of important variables to diverge as well.  Assumption \ref{ass:beta} states that the $L_2$ norm of the true regression coefficients is bounded, which in conjunction with diverging $q_n$ implies that some of the true nonzero coefficients may get sufficiently small. However, \ref{ass:beta} ensures that they remain  away from zero sufficiently. In order to accommodate increasing $p_n$ and $q_n$, the shrinkage parameters need to be carefully calibrated, which is ensured under conditions \ref{ass:mu} and \ref{ass:nu} on $\mu_n$ and $\nu_n$ and by the fact that $q_n$ increases at a moderate rate in \ref{ass:q}. The roles of $\mu_n$ and $\nu_n$ are further explained in Remark \ref{rem:rate}.

\ref{ass:XX} is analogous to \ref{ass:fix:XX} for the fixed $p$ case. The partial orthogonality condition in \ref{ass:rho} assumes that the unimportant variables are asymptotically weakly correlated with the important variables; similar assumptions are widely used for the case of diverging p in the literature \citep{Huang2008}. Since $p_n\to\infty$, the degree of a vertex in the graph $\mathcal{G}_n$ can diverge. In order to precisely regulate the smoothing effect between neighboring shrinkage parameters, the condition \ref{ass:fix:omega} needs to be extended to \ref{ass:omega} which incorporates information about the degrees of vertices in the working graph $\cG_n$. The condition \ref{ass:sigma} is stronger than \ref{ass:fix:sigma} in order to prevent $\widehat{\sigma}^2_n$ from converging to zero much faster than desired: see Remark \ref{rem:sigma} for further comments on $\widehat{\sigma}^2_n$.

\begin{thm} \label{thm:oracle}
Assume the conditions \ref{ass:p}-\ref{ass:sigma}.
The following statements hold for the EMSHS estimator $\widehat{\btheta}_n = (\widehat{\bbeta}_n',\widehat{\sigma}_n^2,\widehat{\balpha}_n')'$ as $n \rightarrow \infty$.
\begin{enumerate}
\item $P( \cA_n = \cA_0 ) \rightarrow 1$.
\item Letting $s_n^2 = \bgamma_n' \itSigma_{n\cA_0\cA_0}^{-1} \bgamma_n$ for any sequence of $q_n \times 1$ nonzero vectors $\bgamma_n$, we have
\begin{align*}
n^{1/2} s_n^{-1} \bgamma_n' ( \widehat{\bbeta}_{n\cA_0} - \bbeta_{0\cA_0} ) \rightarrow_d \mathcal{N} (0,\sigma_0^2).
\end{align*}
\item The solution is unique in probability.
\end{enumerate}
\end{thm}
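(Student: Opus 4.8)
The plan is to follow the classical two-stage oracle-property argument of Fan--Li / Huang et al., but adapted to the adaptive weights $\widehat\xi_{nj}=\widehat\sigma_n e^{\widehat\alpha_{nj}}$ generated by the EM fixed-point equations \eqref{eq:betahat}--\eqref{eq:adaptive_weights}, and carried out uniformly over diverging $p_n$ and $q_n$. First I would set up the ``oracle estimator'' $\widetilde\bbeta_n$ obtained by restricting the optimization to the true support $\cA_0$, and establish that it is $\sqrt{n/q_n}$-consistent and asymptotically normal in the sense of statement (b); this is the standard least-squares-plus-penalty computation on a $q_n$-dimensional problem, using \ref{ass:XX} to control $\itSigma_{n\cA_0\cA_0}^{-1}$ and \ref{ass:beta} together with the penalty order from Remark~\ref{rem:rate} to show the penalty term on the true coefficients is asymptotically negligible relative to $n^{-1/2}$. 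The key quantitative input is that, on $\cA_0$, the weights satisfy $\widehat\xi_{nj}=\Theta_p(n^{r}\sqrt{1+p_n/n})$ (so that $\sum_{j\in\cA_0}\widehat\xi_{nj}|\widetilde\beta_{nj}-\beta_{0j}|=o_p(\sqrt{q_n/n}\cdot s_n)$ after dividing by $s_n$), which I would derive from the $\balpha$-fixed-point \eqref{eq:adaptive_weights} by a self-consistency/bootstrap argument seeded by $\sqrt n$-consistency of $\widehat\bbeta$ on the support.

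Next I would show that with probability tending to one the global EMSHS solution coincides with the oracle estimator, i.e.\ no variable outside $\cA_0$ enters. This is the sparsity half of statement (a) and is where the structure of the problem really matters: I would verify the KKT condition for the weighted lasso \eqref{eq:betahat}, namely $|\bx_{nj}'(\by_n-X_n\widetilde\bbeta_n)|<\widehat\xi_{nj}$ for all $j\in\cA_0^c$, simultaneously. The left side is bounded, via the partial orthogonality condition \ref{ass:rho} ($\rho_n=O(n^{-1/2})$) and \ref{ass:XX}, by $O_p(\sqrt{n\log p_n}+n\rho_n\sqrt{q_n}\,\|\widetilde\bbeta_{n\cA_0}-\bbeta_{0\cA_0}\|)=O_p(\sqrt{n\,n^U})$ using \ref{ass:p} and a Gaussian maximal inequality over the $p_n$ coordinates; the right side is $\Theta_p(n^R\sqrt{1+p_n/n})\gg n^{R}$ with $R>(U+1)/2$ by \ref{ass:mu}, so the gap is genuine. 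The delicate point is that $\widehat\xi_{nj}$ for $j\in\cA_0^c$ is itself random and determined by \eqref{eq:adaptive_weights}: I would argue that when $\widehat\beta_{nj}=0$ the equation forces $\widehat\alpha_{nj}$ close to its prior mean $\mu_n$ (the smoothing terms $\sum_{k\sim j}\omega_{jk}^{(\infty)}(\widehat\alpha_k-\widehat\alpha_j)$ being controlled because $L_n a_{\omega n}/b_{\omega n}=o(1)$ by \ref{ass:omega} keeps all $\omega_{jk}^{(\infty)}$ uniformly small), so indeed $\widehat\xi_{nj}=\Theta_p(e^{\mu_n})=\Theta_p(n^R\sqrt{1+p_n/n})$ uniformly over $j\in\cA_0^c$. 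Combining the two halves gives $P(\cA_n=\cA_0)\to1$, and then statement (b) transfers from the oracle estimator, and (c) follows because on the event $\{\cA_n=\cA_0\}$ the restricted objective is (asymptotically, via \ref{ass:XX} and the convexity of the $\bbeta$-step) strictly convex, so the fixed point is unique.

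The main obstacle I expect is the uniform control of the \emph{data-dependent} adaptive weights $\widehat\xi_{nj}$ over all $p_n=O(\exp(n^U))$ coordinates — in particular showing that the EM fixed-point map does not allow any $\widehat\alpha_{nj}$, $j\in\cA_0^c$, to drift far below $\mu_n$ (which would shrink its weight and let a spurious variable slip in). This requires a careful analysis of the coupled system \eqref{eq:betahat}--\eqref{eq:adaptive_weights}: one must show that the graph-smoothing feedback cannot propagate a large $|\widehat\beta_k|$ on the support into a dangerously small weight on a neighboring off-support coordinate, which is exactly what \ref{ass:omega} (bounding the total smoothing strength through $L_n$) is designed to prevent. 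A secondary technical nuisance is that $\widehat\sigma_n^2$ must be shown to stay $\Theta_p(1)$ (Remark~\ref{rem:sigma}) rather than collapsing — here \ref{ass:sigma} with $z>1-r$ is used — since $\widehat\sigma_n$ multiplies every weight and a vanishing $\widehat\sigma_n$ would destroy the separation between the $j\in\cA_0$ and $j\in\cA_0^c$ weight scales.
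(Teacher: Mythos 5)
Your outline shares most of the paper's machinery --- the separation of weight scales between $\cA_0$ and $\cA_0^c$ driven by \ref{ass:mu}--\ref{ass:nu}, the control of the graph-smoothing term via $L_n a_{\omega n}b_{\omega n}^{-1}=o(1)$, the Gaussian maximal inequality plus partial orthogonality for the KKT bound, and the asymptotic normality read off from the restricted least-squares formula. But there is a genuine gap in the false-positive half of statement (a). You verify the KKT inequality at an \emph{oracle} estimator, which certifies that a stationary point supported on $\cA_0$ exists; it does not show that \emph{the} EMSHS solution excludes $\cA_0^c$. Because the joint objective in $(\bbeta,\sigma,\balpha)$ is non-convex and the weights are coupled to $\bbeta$ through \eqref{eq:adaptive_weights}, there could a priori be a self-consistent spurious fixed point in which some $\widehat{\beta}_{nj}\neq 0$ for $j\notin\cA_0$ drives $\widehat{\alpha}_{nj}$ well below $\mu_n$ (through the $|\widehat{\beta}_{nj}|e^{\widehat{\alpha}_{nj}}$ term, independently of the graph smoothing you control with \ref{ass:omega}), and the resulting small weight sustains that variable's inclusion. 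Your uniqueness step is conditional on the event $\{\cA_n=\cA_0\}$, so it cannot be used to rule this out without circularity. The paper closes this loop with a dedicated argument (Lemma~\ref{lemma:main1}) that applies to an \emph{arbitrary} solution: writing $\widehat{\sigma}_n\sum_{j\in\cC_n}e^{\widehat{\alpha}_{nj}}|\widehat{\beta}_{nj}| = O_p(h_n^{1/2})-h_n$ with $h_n = \widehat{\bxi}_{\cC_n}'S_{n\cC\cC}(X_{\cC_n}'X_{\cC_n})^{-1}S_{n\cC\cC}\widehat{\bxi}_{\cC_n}$, and then a two-case analysis (either $\max_{j\in\cC_n}|\widehat{\beta}_{nj}|$ is small, in which case the fixed-point equation forces $\widehat{\xi}_{nj}$ large, or it is not, contradicting $\widehat{\bbeta}_{\cC_n}=(X_{\cC_n}'X_{\cC_n})^{-1}(X_{\cC_n}'\bepsilon_n-S_{n\cC\cC}\widehat{\bxi}_{\cC_n})+O_p(\rho_n)$) shows $h_n\rightarrow_p\infty$, making the right-hand side negative while the left stays nonnegative. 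You would need to supply an argument of this kind.

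A secondary error: you assert that $\widehat{\sigma}_n^2$ must be shown to stay $\Theta_p(1)$, citing Remark~\ref{rem:sigma}, but that remark states the opposite --- $\widehat{\sigma}_n^2=\Theta_p((1+p_n/n)^{-1})$, which vanishes when $p_n/n\rightarrow\infty$. The separation of weight scales survives only because the $(1+p_n/n)^{1/2}$ factor built into $\mu_n$ in \ref{ass:mu} exactly compensates the decay of $\widehat{\sigma}_n$; correspondingly your claimed orders $\widehat{\xi}_{nj}=\Theta_p(n^R\sqrt{1+p_n/n})$ off the support and $\Theta_p(n^r\sqrt{1+p_n/n})$ on it are off by that same factor (they should be $\Theta_p(n^R)$ and at most $O_p(n^{r+u/2})$ after the cancellation). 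This bookkeeping is precisely where the diverging-$p$ case differs from the fixed-$p$ case, so it needs to be done carefully rather than assumed away.
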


We note that, in contrast to the fixed $p$ case, the oracle property result for the diverging $p$ case requires assumptions on the true graph, as in conditions \ref{ass:XX} and \ref{ass:rho}, as well as knowledge about the working graph $\cG_n$ in \ref{ass:omega}. The proof is provided in the Appendix.

\begin{rem}
Although we only consider Gaussian errors, the results can be readily generalized to moderately heavier tailed errors.
\end{rem}

\begin{rem} \label{rem:rate}
In parallel with Remark \ref{rem:fix:principle}, with a little lack of rigorousness, \ref{ass:mu} ensures $\widehat{\xi}_{nj} = \Theta_p(n^R)$ for $j \in \cA_n^c$, and \ref{ass:nu} ensures $|\widehat{\beta}_{nj}| \widehat{\xi}_{nj} = \Theta_p(n^r)$ for $j \in \cA_n$. If $\widehat{\bbeta}_n$ is $\sqrt{n}$-consistent, which is indeed the case, the important variables receive shrinkage of order \emph{at most} $\widehat{\xi}_{nj} = \Theta_p(n^r q_n^{1/2})$ due to \ref{ass:beta} and the unimportant variables receive shrinkage of order \emph{at least} $\widehat{\xi}_{nj} = \Theta_p(n^{r+1/2})$.
\end{rem}

\begin{rem} \label{rem:sigma}
Unlike Remark \ref{rem:fix:sigma}, $\widehat{\sigma}_n^2$ may converge to 0. However, once rescaled, $\widehat{\sigma}_n^2$ consistently estimates the true residual variance $\sigma_0^2$; $(n+p_n)\widehat{\sigma}_n^2/n \rightarrow_p \sigma_0^2$.
\end{rem}

\section{Simulation Study} \label{simulation}
We conduct simulations to evaluate the performance of the proposed approach in comparison with several existing methods. The competing methods include the lasso (Lasso), the adaptive Lasso (ALasso) \citep{Zou2006}, the Bayesian variable selection approach using spike and slab priors and MRF priors by \citet{Stingo2011} which we denote as BVS-MRF, and finally the EM approach for Bayesian variable selection (denoted as EMVS) proposed by \citet{Rockova2014} and its extension to incorporate structural information (denoted as EMVSS). Of note, EMSHS, EMVSS and BVS-MRF incorporate the graph information, whereas the other methods do not. For Lasso and ALasso, we use the glmnet R package where the initial consistent estimator for ALasso is given by the ridge regression. The Matlab code for the MCMC approach is provided with the original article by \citet{Stingo2011}. \citet{Rockova2014} provided us their unpublished R codes for EMVS and EMVSS.

\subsection{Simulation Set-up}

The simulated data are generated from the following model
\begin{align*}
y_i = \bx_i' \bbeta + \epsilon_i, \qquad 1 \le i \le n,
\end{align*}
where $\bx_i \sim \cN(\bzero,\mathit{\itSigma}_X)$, $\epsilon_i \sim \cN(0,\sigma_\epsilon^2)$, and $\bbeta = (\underbrace{1,\dots,1}_{q},\underbrace{0,\dots,0}_{p-q})$.
The first $q=5$ variables are the important variables and the last $p-q$ variables are unimportant variables. The sample size is fixed at $n=50$; the residual variance is fixed at $\sigma_\epsilon=1$; and we consider $p=1{,}000$, $10{,}000$, and $100{,}000$.

Let $G_0$ be the adjacency matrix for the true covariate graph, which determines ${\itSigma}_X$. That is, $G_{0,jk} = 1$ if there is an edge between predictors $j$ and $k$, and $G_{0,jk} = 0$ otherwise. $G_0$ is generated as follows.
\begin{enumerate}[label=(\arabic*)]
\item We generate $g$ virtual pathways, depending on the total number of predictors; $g=50$ for $p=1{,}000$, $g=300$ for $p=10{,}000$ and $p=100{,}000$.
\item The first pathway is composed of the $q$ important variables only.
\item The number of genes in other pathways are negative binomial random variables with mean $\mu_{path} = 30$.
\item The genes which belong to a pathway are chosen randomly and independently of other pathways. Hence the pathways can be overlapping.
\item Edges are randomly generated ensuring all genes in a pathway have at least one path to all the other genes in the pathway. This can be done by conducting the following procedure for each pathway.
\begin{enumerate}[label=(\alph*)]
\item Randomly choose two genes and insert an edge between the two. Mark the two genes as connected. Mark the others as unconnected.
\item \label{stepX} Randomly choose a connected gene and an unconnected gene, and add an edge between them. Mark the unconnected gene as connected.
\item Repeat step \ref{stepX} until all genes are connected. This will form a tree, where all genes have at least one path to all the other genes in the pathway.
\item In order to add some extra edges, for each pair of genes that do not share an edge, add an edge between them with probability $p_1 = 0.05$. $p_1$ determines the overall density of edges.
\end{enumerate}
\end{enumerate}

Given $G_0$, the covariance matrix $\mathit{\itSigma}_X$ is designed as follows.
 \begin{enumerate}[label=(\roman*)]
\item Set $A = I_p$.
\item Calculate the vertex degrees $D_j = \sum_{k=1}^p G_{0,jk}$.
\item \label{step3} For each pair $j<k$ with $G_{0,jk} = 1$, set $A_{jk} = A_{kj} = -S_{jk}/(\max(D_j,D_k) \times 1.1 + 0.1)$ where
\begin{align*}
S_{jk} = \begin{cases}
1, & \textrm{if } 1 \le j,k \le q,\\
\mathrm{Ber}(1/2), & \textrm{otherwise}.
\end{cases}
\end{align*}
\item Set $\mathit{\itSigma}_X = A^{-1}$ and then rescale $\mathit{\itSigma}_X$ so that its diagonal elements become 1.
\end{enumerate}
Note that the resulting covariance matrix $\mathit{\itSigma}_X$ is diagonally dominant and positive definite and $X_j$ and $X_k$ are partially correlated only if $G_{0,jk} = 1$.
Also note that since this procedure involves inverting a $p \times p$ matrix, we used this method for $p=1{,}000$ and $p=10{,}000$ cases only. For $p=100{,}000$ case, the network information of the first $10,000$ variables were generated by this procedure, and the second set of independent $90{,}000$ variables were added and they were independent of the first set of variables.

Let $G$ be the adjacency matrix of the pathway graph that is used to fit the model. We consider several scenarios where the graph used to fit the model may be specified correctly or mis-specified, as follows
\begin{enumerate}[label=\arabic*)]
\item $G_0$ is as described above and $G=G_0$.
\item $G_0$ is as described above but allows no edge between important variables and unimportant variables, and $G=G_0$.
\item $G_0$ is the same as in scenario (1), but $G$ is randomly generated with the same number of edges as $G_0$.
\item $G_0$ is the same as in scenario (2), but $G$ is randomly generated with the same number of edges as $G_0$.
\item $G_0$ is the same as in scenario (1), but $G$ includes only a subset of the edges in $G_0$ for which the corresponding partial correlations are greater than $0.5$.
\end{enumerate}
Scenarios (1) and (2) are cases where the true graph is completely known; scenario (2) allows no correlation between important and unimportant variables and hence is an ideal setting for our approach. In scenarios (3) and (4) considered as the worst case scenario, $G$ is completely mis-specified. Scenario (5) mimics the situation where only strong signals from $G_0$ are known to data analysts, which is between the ideal and the worst case scenarios.

For the proposed approach, we choose an uninformative prior $\sigma^2 \sim \cIG(1,1)$. Based on our numerical studies which show that the performance of EMSH and EMSHS is not highly sensitive to $a_\omega$, $b_\omega$, and $\nu$, we recommend using $a_\omega=4$, $b_\omega=1$, and $\nu=1.2$.

We generate 500 simulated datasets in total, each of which contains a training set, validation set, and test set of size $n=50$ each. We fit the model using the training data for a grid of values of the tuning parameter $\mu$ lying between $(3.5,7.5)$, and then we choose the value that minimizes the prediction error for the validation data. Variable selection performance is assessed in terms of the number of false positives (FP) and the number of false negatives (FN) and prediction performance is assessed in terms of mean squared prediction error (MSPE) calculated using the test data. We also report the average computation time per tuning parameter value in seconds.

\subsection{Results}
The simulation results are summarized in Tables \ref{tbl1}, \ref{tbl2}, and \ref{tbl3}. BVS-MRF is omitted in the case of $p=10,000$ and both BVS-MRF and EMVSS are omitted in $p=100{,}000$ cases, because they are not scalable or errors are reported when applied to these settings.

All methods achieve better performance in scenario 2 (or 4) compared to scenario 1  (or 3), indicating that the problem is more challenging in the presence of nonzero correlation between the important variables and the unimportant variables.  Within each of scenarios 1, 2, and 5, where true or partially true graph knowledge is available, the structured variable selection methods EMVSS and EMSHS are superior to their counterparts that do not use graph information (i.e. EMVS and EMSH). Moreover, the performance of each structured variable selection method (namely, EMSHS, EMVSS and BVS-MRF) improves from Scenario 3 (or 4) to Scenario 1 (or 2), further demonstrating the benefits of the correctly specified graph information. Similarly, the partially correctly specified graph information in Scenario 5 also improves the performance of these methods compared to Scenario 3. In Scenarios 3 and 4,

For prediction, the EMSHS yields the best performance in all settings when the graph information is correctly or partially correctly specified (i.e., scenarios 1, 2 and 5).  When the graph information is completely mis-specified (scenarios 3 and 4), EMSHS still yields the best or close to the best performance in all settings, demonstrating its robustness to mis-specified graph information. In addition, EMSH performs the best among the unstructured variable selection methods for all cases under $p=1{,}000$ and $p= 10{,}000$, lending support to the advantage of using the data to learn adaptive shrinkage in EMSH as discussed in section \ref{interpretation}.

For variable selection, the EMSHS yields the best or close to the best performance in all settings when the graph information is correctly or partially correctly specified. Of note, while EMVS tends to have close to 0 false positives, it has high false negatives. In addition, the false positives under the proposed methods are significantly lower compared to Lasso and adaptive lasso. Finally, EMSHS consistently yields lower false positives and false negatives than EMSH in scenarios 1, 2 and 5, which again demonstrates the advantage of incorporating true graph information.

While the difference in performance between EMSHS and EMVSS in terms of prediction and variable selection is subtle for $p=1{,}000$ in scenarios 1 and 5, they behave somewhat differently for $p=10{,}000$. EMSHS tends to have relatively lower false negatives admitting slightly higher false positives. Including more important variables seems to have led to smaller prediction errors than EMVSS.

Although somewhat slower than the Lasso and adaptive Lasso, the proposed structured variable selection approach is still computationally efficient and is scalable to $p=100{,}000$ and higher dimensions, which is substantially better than the BVS-MRF and EMVSS.

\begin{small}
\begin{table}
\caption{\label{tbl1} The mean squared prediction error (MSPE) for the test data, the number of false positives (FP), the number of false negatives (FN), and the average computation time per tuning parameter value in seconds are recorded for $p=1{,}000$ case. In the parentheses are the corresponding standard errors.}

\begin{tabular}{l|cccr}
Method & MSPE & FP & FN & \multicolumn{1}{c}{Time}\\
\hline
& \multicolumn{4}{l}{Scenario \#1: Reference case}\\
Lasso & 2.29 (0.04) & 24.11 (0.48) & 0.04 (0.01) & 0.00\\
ALasso & 2.12 (0.04) & 12.20 (0.35) & 0.14 (0.02) & 0.00\\
BVS-MRF & 3.41 (0.07) & 10.56 (0.72) & 1.23 (0.05) & 540.78\\
EMVS & 5.58 (0.11) & 0.00 (0.00) & 3.55 (0.06) & 0.50\\
EMVSS & 1.36 (0.02) & 1.39 (0.07) & 0.05 (0.01) & 1.30\\
EMSH & 1.76 (0.04) & 2.62 (0.14) & 0.27 (0.03) & 0.07\\
EMSHS & 1.31 (0.03) & 1.13 (0.09) & 0.06 (0.02) & 0.75\\
\hline
& \multicolumn{4}{l}{Scenario \#2: Ideal case}\\
Lasso & 1.73 (0.02) & 18.11 (0.48) & 0.00 (0.00) & 0.00\\
ALasso & 1.48 (0.02) & 6.17 (0.22) & 0.01 (0.00) & 0.00\\
BVS-MRF & 2.02 (0.04) & 7.18 (0.58) & 0.28 (0.03) & 539.82\\
EMVS & 3.11 (0.10) & 0.00 (0.00) & 1.93 (0.06) & 0.47\\
EMVSS & 1.22 (0.01) & 0.76 (0.05) & 0.01 (0.00) & 1.25\\
EMSH & 1.28 (0.02) & 0.71 (0.06) & 0.09 (0.02) & 0.06\\
EMSHS & 1.14 (0.01) & 0.24 (0.05) & 0.00 (0.00) & 0.74\\
\hline
& \multicolumn{4}{l}{Scenario \#3: Worst case (Reference)}\\
Lasso & 2.21 (0.04) & 23.61 (0.47) & 0.03 (0.01) & 0.00\\
ALasso & 2.04 (0.04) & 11.50 (0.34) & 0.12 (0.02) & 0.00\\
BVS-MRF & 3.39 (0.07) & 10.39 (0.73) & 1.23 (0.06) & 516.85\\
EMVS & 5.41 (0.11) & 0.01 (0.00) & 3.49 (0.06) & 0.51\\
EMVSS & 1.83 (0.05) & 2.62 (0.11) & 0.30 (0.03) & 1.23\\
EMSH & 1.66 (0.04) & 2.74 (0.14) & 0.19 (0.02) & 0.06\\
EMSHS & 1.73 (0.04) & 5.41 (0.31) & 0.22 (0.03) & 0.68\\
\hline
& \multicolumn{4}{l}{Scenario \#4: Worst case (Ideal)}\\
Lasso & 1.72 (0.03) & 18.42 (0.48) & 0.01 (0.00) & 0.00\\
ALasso & 1.48 (0.02) & 6.04 (0.22) & 0.03 (0.01) & 0.00\\
BVS-MRF & 1.99 (0.04) & 7.35 (0.59) & 0.28 (0.03) & 408.52\\
EMVS & 3.13 (0.10) & 0.00 (0.00) & 1.91 (0.06) & 0.48\\
EMVSS & 1.35 (0.02) & 1.17 (0.07) & 0.09 (0.01) & 1.20\\
EMSH & 1.29 (0.02) & 0.76 (0.06) & 0.10 (0.02) & 0.06\\
EMSHS & 1.51 (0.03) & 4.32 (0.28) & 0.19 (0.02) & 0.69\\
\hline
& \multicolumn{4}{l}{Scenario \#5: Intermediate case}\\
Lasso & 2.25 (0.04) & 22.91 (0.47) & 0.05 (0.01) & 0.00\\
ALasso & 2.06 (0.04) & 11.42 (0.31) & 0.13 (0.02) & 0.00\\
BVS-MRF & 3.36 (0.07) & 12.28 (0.79) & 1.21 (0.06) & 449.23\\
EMVS & 5.41 (0.11) & 0.01 (0.00) & 3.45 (0.06) & 0.47\\
EMVSS & 1.34 (0.03) & 1.27 (0.07) & 0.04 (0.01) & 1.96\\
EMSH & 1.66 (0.04) & 2.55 (0.15) & 0.21 (0.02) & 0.06\\
EMSHS & 1.31 (0.03) & 1.33 (0.12) & 0.05 (0.01) & 0.81\\
\hline
\hline
\end{tabular}
\end{table}
\end{small}

\begin{small}
\begin{table}
\caption{\label{tbl2}  The mean squared prediction error (MSPE) for the test data, the number of false positives (FP), the number of false negatives (FN), and the average computation time per tuning parameter value in seconds are recorded for $p=10{,}000$ case. In the parentheses are the corresponding standard errors.}
\begin{tabular}{l|cccr}
Method & MSPE & FP & FN & \multicolumn{1}{c}{Time}\\
\hline
& \multicolumn{4}{l}{Scenario \#1: Reference case}\\
Lasso & 3.34 (0.07) & 29.00 (0.53) & 0.34 (0.03) & 0.03\\
ALasso & 3.21 (0.08) & 15.87 (0.46) & 0.54 (0.04) & 0.02\\
EMVS & 6.88 (0.12) & 0.00 (0.00) & 4.17 (0.04) & 33.70\\
EMVSS & 2.01 (0.07) & 2.23 (0.13) & 0.56 (0.04) & 83.52\\
EMSH & 2.98 (0.09) & 4.74 (0.26) & 1.04 (0.05) & 0.96\\
EMSHS & 1.94 (0.08) & 2.71 (0.25) & 0.39 (0.04) & 5.56\\
\hline
& \multicolumn{4}{l}{Scenario \#2: Ideal case}\\
Lasso & 2.30 (0.04) & 23.68 (0.55) & 0.04 (0.01) & 0.03\\
ALasso & 2.05 (0.05) & 9.16 (0.30) & 0.16 (0.02) & 0.02\\
EMVS & 5.36 (0.16) & 0.00 (0.00) & 3.15 (0.06) & 36.64\\
EMVSS & 1.43 (0.03) & 1.02 (0.07) & 0.21 (0.02) & 80.68\\
EMSH & 1.79 (0.05) & 2.10 (0.15) & 0.43 (0.04) & 0.80\\
EMSHS & 1.16 (0.02) & 0.64 (0.11) & 0.01 (0.01) & 4.67\\
\hline
& \multicolumn{4}{l}{Scenario \#3: Worst case (Reference)}\\
Lasso & 3.31 (0.07) & 28.10 (0.54) & 0.34 (0.03) & 0.03\\
ALasso & 3.17 (0.07) & 15.65 (0.47) & 0.57 (0.04) & 0.03\\
EMVS & 7.13 (0.13) & 0.00 (0.00) & 4.23 (0.04) & 28.85\\
EMVSS & 3.27 (0.08) & 3.97 (0.15) & 1.37 (0.05) & 56.38\\
EMSH & 2.91 (0.08) & 4.91 (0.26) & 1.03 (0.05) & 0.87\\
EMSHS & 3.04 (0.08) & 9.34 (0.45) & 1.04 (0.05) & 4.34\\
\hline
& \multicolumn{4}{l}{Scenario \#4: Worst case (Ideal)}\\
Lasso & 2.24 (0.04) & 24.25 (0.55) & 0.04 (0.01) & 0.03\\
ALasso & 1.98 (0.04) & 8.45 (0.32) & 0.14 (0.02) & 0.03\\
EMVS & 5.14 (0.15) & 0.00 (0.00) & 3.01 (0.06) & 31.66\\
EMVSS & 1.84 (0.05) & 2.22 (0.10) & 0.48 (0.03) & 54.44\\
EMSH & 1.72 (0.04) & 2.13 (0.15) & 0.35 (0.03) & 0.72\\
EMSHS & 1.94 (0.05) & 7.55 (0.39) & 0.40 (0.03) & 4.59\\
\hline
& \multicolumn{4}{l}{Scenario \#5: Intermediate case}\\
Lasso & 3.26 (0.07) & 27.68 (0.54) & 0.32 (0.03) & 0.03\\
ALasso & 3.11 (0.07) & 14.75 (0.46) & 0.54 (0.04) & 0.03\\
EMVS & 6.95 (0.12) & 0.00 (0.00) & 4.17 (0.04) & 24.49\\
EMVSS & 1.94 (0.06) & 2.27 (0.11) & 0.46 (0.03) & 63.24\\
EMSH & 2.81 (0.07) & 4.66 (0.25) & 0.95 (0.05) & 0.74\\
EMSHS & 1.72 (0.06) & 2.47 (0.22) & 0.26 (0.03) & 5.37\\
\hline
\end{tabular}
\end{table}
\end{small}

\begin{small}
\begin{table}
\caption{\label{tbl3}  The mean squared prediction error (MSPE) for the test data, the number of false positives (FP), the number of false negatives (FN), and the average computation time per tuning parameter value in seconds are recorded for $p=100{,}000$ case. In the parentheses are the corresponding standard errors.}
\begin{tabular}{l|cccr}
Method & MSPE & FP & FN & \multicolumn{1}{c}{Time}\\
\hline
& \multicolumn{4}{l}{Scenario \#1: Reference case}\\
Lasso & 4.87 (0.09) & 30.31 (0.65) & 1.21 (0.05) & 0.12\\
ALasso & 4.77 (0.09) & 16.91 (0.57) & 1.56 (0.06) & 0.18\\
EMSH & 4.66 (0.11) & 4.83 (0.29) & 2.35 (0.06) & 8.66\\
EMSHS & 3.28 (0.12) & 3.67 (0.28) & 1.26 (0.07) & 18.57\\
\hline
& \multicolumn{4}{l}{Scenario \#2: Ideal case}\\
Lasso & 3.23 (0.07) & 28.67 (0.61) & 0.29 (0.03) & 0.12\\
ALasso & 3.07 (0.08) & 11.68 (0.37) & 0.53 (0.04) & 0.13\\
EMSH & 2.93 (0.10) & 2.85 (0.19) & 1.26 (0.05) & 7.38\\
EMSHS & 1.43 (0.07) & 1.02 (0.12) & 0.10 (0.03) & 16.55\\
\hline
& \multicolumn{4}{l}{Scenario \#3: Worst case (Reference)}\\
Lasso & 5.08 (0.09) & 29.82 (0.65) & 1.31 (0.06) & 0.11\\
ALasso & 4.99 (0.10) & 17.20 (0.61) & 1.69 (0.06) & 0.16\\
EMSH & 4.89 (0.10) & 5.14 (0.32) & 2.47 (0.06) & 8.26\\
EMSHS & 4.94 (0.10) & 6.12 (0.37) & 2.49 (0.06) & 13.82\\
\hline
& \multicolumn{4}{l}{Scenario \#4: Worst case (Ideal)}\\
Lasso & 3.34 (0.07) & 28.09 (0.60) & 0.27 (0.03) & 0.11\\
ALasso & 3.18 (0.08) & 11.71 (0.39) & 0.57 (0.04) & 0.15\\
EMSH & 3.01 (0.09) & 2.92 (0.19) & 1.23 (0.06) & 7.01\\
EMSHS & 3.07 (0.09) & 4.99 (0.30) & 1.23 (0.06) & 12.46\\
\hline
& \multicolumn{4}{l}{Scenario \#5: Intermediate case}\\
Lasso & 5.07 (0.09) & 29.67 (0.63) & 1.29 (0.06) & 0.11\\
ALasso & 4.99 (0.10) & 16.36 (0.57) & 1.66 (0.06) & 0.16\\
EMSH & 4.83 (0.11) & 4.87 (0.30) & 2.41 (0.07) & 8.11\\
EMSHS & 3.55 (0.13) & 3.85 (0.30) & 1.44 (0.08) & 15.62\\
\hline
\end{tabular}
\end{table}
\end{small}

\section{Data Application}

We applied the proposed method to analysis of a glioblastoma data set obtained from the The Cancer Genome Atlas Network \citep{verhaak2010integrated}. The data set includes survival times ($T$) and gene expression data for $p=12{,}999$ genes ($X$) for 303 glioblastoma patients. As glioblastoma is known as one of the most aggressive cancers,  only $12\%$ of the samples were censored. We removed the censored observations, resulting in a sample size of $n=267$ for analysis. We fit an accelerated failure time (AFT) model as follows
\begin{align*}
\log T_i = \beta_1 X_{i1} + \cdots + \beta_p X_{ip} + \epsilon_i, \qquad i=1,\dots,n,
\end{align*}
where $\epsilon_i$'s are independent Gaussian random variables and all variables were standardized to have mean 0 and variance 1.  The network information ($\mathcal{G}$) for $X$ was retrieved from the Kyoto Encyclopedia of Genes and Genomes (KEGG) database including a total of 332 KEGG pathways and $31{,}700$ edges in these pathways.

In addition to EMSHS and EMSH, we included several competing methods that are computationally feasible, namely, lasso, adaptive lasso, EMVS, and EMVSS. The optimal tuning parameters were chosen by minimizing the 5-fold cross-validated mean-squared prediction error. The tuning parameter $\mu$ had 20 candidate values ranging from 5.5 to 6.5 ensuring solutions with various sparsity to be considered. We used $a_\sigma=1$ and $b_\sigma=1$ for prior of $\sigma^2$, which is uninformative. As shown in Table \ref{tbl4}, EMSHS achieves the best prediction performance followed by EMSH and both are substantially less expensive than EMVS and EMVSS in terms of computation. Similar to our simulation results, EMSH again yields better prediction performance than adaptive lasso, demonstrating the advantage of using the data to learn adaptive shrinkage in EMSH.

\begin{small}
\begin{table}
\caption{\label{tbl4} Cross-validated mean squared prediction error (CV MSPE) and computation time in seconds per tuning parameter (Time) from the analysis of TCGA genomic data and KEGG pathway information.}
\begin{tabular}{l|cr}
Method & CV MSPE & \multicolumn{1}{c}{Time}\\
\hline
Lasso & 0.986 & 0.2\\
ALasso & 0.996 & 0.4\\
EMVS & 0.996 & 1346.6\\
EMVSS & 0.982 & 8284.1\\
EMSH & 0.979 & 14.3\\
EMSHS & 0.975 & 17.0\\
\hline
\end{tabular}
\end{table}
\end{small}

To assess variable selection performance of EMSHS and EMSH, we conducted a second analysis. We randomly divided the entire sample into two subsets; the first subset with $187$ subjects (70\% of the whole sample) was used as the training data to fit the model and the second subset with 30\% of the subjects was used as the validation data to select optimal tuning parameter values. We repeated this procedure 100 times, resulting in 100 EMSHS and 100 EMSH solutions. Of the 100 random splits, 28 genes were selected at least once by EMSH and 21 genes were selected at least once by EMSHS. Further examination reveals that the set of genes that were selected by EMSH but not by EMSHS belong to pathways in which most of the genes were not selected. This lends support to the notion that incorporating graph information may reduce false positives, which is consistent with the findings in our simulations where EMSHS tends to yield lower false positives than
EMSH in simulation scenarios 1, 2 and 5.

The 3 genes most frequently selected by EMSHS are TOM1L1, RANBP17, and BRD7. In this set of genes the Wnt signaling pathway \citep{Kandasamy2010} was identified as an enriched pathway by the ToppGene Suite \citep{Chen2009}.  Abnormalities in the Wnt signaling pathway have been associated with human malignancies in the literature. For example, BRD7 has been shown to be correlated with enlarged lateral ventricles in mice and highly expressed in gliomas \citep{Tang2014}. TOM1L1 depletion has been shown to decrease tumor growth in xenografted nude mice \citep{Sirvent2012}. EMSHS reported average estimated coefficients of $-0.146$, $0.181$, and $0.193$ for TOM1L1, RANBP17, and BRD7, respectively. The signs of the coefficients are consistent with the known knowledge about the genes in promoting/suppressing the development of cancer. Our data analyses demonstrate that EMSHS yields biologically meaningful results.


\section{Discussion}

This article introduces a scalable Bayesian regularized regression approach and an associated EM algorithm which can incorporate the structural information between covariates in high dimensional settings. The approach relies on specifying informative priors on the log-shrinkage parameters of the Laplace priors on the regression coefficients, which results in adaptive regularization. The method does not rely on initial estimates for weights as in adaptive lasso approaches, which provides computational advantages in higher dimensions as demonstrated in simulations. Appealing theoretical properties for both fixed and diverging dimensions are established, under very general assumptions, even when the true graph is mis-specified. The method demonstrates encouraging numerical performance in terms of scalability, prediction and variable selection, with significant gains when the prior graph is correctly specified, and a robust performance under prior graph mis-specification.  

Extending the current approach to more general types of outcomes such as binary or categorical should be possible \citep{mccullagh1989generalized}, although the complexity of the optimization problem may increase. These issues can potentially be addressed using a variety of recent advances in literature involving EM approaches via latent variables \citep{polson2013bayesian}, coordinate descent method \citep{wu2008coordinate}, and other optimization algorithms \citep{nocedal2006numerical} which are readily available to facilitate computation. We leave this task as a future research question of interest. 


\section*{Acknowledgements}

This research is partly supported by NIH/NCI grants (R03CA173770 and R03CA183006). The content is solely the responsibility of the authors and does not necessarily represent the official views of the NIH. The authors thank Prof. David Dunson for helpful discussions and comments that led to the motivation of this work and Dr. Veronika Rockova for providing their code for EMVS and EMVSS.

\bibliographystyle{rss}
\bibliography{EMSHSRefs}{}

\begin{thebibliography}{31}
\expandafter\ifx\csname natexlab\endcsname\relax\def\natexlab#1{#1}\fi
\expandafter\ifx\csname url\endcsname\relax
  \def\url#1{\texttt{#1}}\fi
\expandafter\ifx\csname urlprefix\endcsname\relax\def\urlprefix{URL: }\fi

\bibitem[{Armagan et~al.(2013)Armagan, Dunson and Lee}]{Armagan2013}
Armagan, A., Dunson, D.~B. and Lee, J. (2013) {Generalized Double Pareto
  Shrinkage}.
\newblock \textit{Statistica Sinica}, \textbf{23}, 119--143.

\bibitem[{Becker and Le~Cun(1988)}]{becker1988improving}
Becker, S. and Le~Cun, Y. (1988) Improving the convergence of back-propagation
  learning with second order methods.
\newblock In \textit{Proceedings of the 1988 connectionist models summer
  school}, 29--37.

\bibitem[{Bondell and Reich(2008)}]{Bondell2008}
Bondell, H.~D. and Reich, B.~J. (2008) {Simultaneous regression shrinkage,
  variable selection, and supervised clustering of predictors with OSCAR}.
\newblock \textit{Biometrics}, \textbf{64}, 115--123.

\bibitem[{Chang and Tsay(2010)}]{Chang2010}
Chang, C. and Tsay, R.~S. (2010) {Estimation of Covariance Matrix via the
  Sparse Cholesky Factor with Lasso}.
\newblock \textit{Journal of Statistical Planning and Inference}, \textbf{140},
  3858--3873.

\bibitem[{Chen et~al.(2009)Chen, Bardes, Aronow and Jegga}]{Chen2009}
Chen, J., Bardes, E.~E., Aronow, B.~J. and Jegga, A.~G. (2009) {ToppGene Suite
  for Gene List Enrichment Analysis and Candidate Gene Prioritization}.
\newblock \textit{Nucleic Acids Research}, \textbf{37}, W305--W311.

\bibitem[{Chung(1997)}]{Chung1997}
Chung, F.~R. (1997) \textit{Spectral graph theory}, vol.~92.
\newblock American Mathematical Soc.

\bibitem[{Efron et~al.(2004)Efron, Hastie, Johnstone and
  Tibshirani}]{Efron2004}
Efron, B., Hastie, T., Johnstone, I. and Tibshirani, R. (2004) {Least Angle
  Regression}.
\newblock \textit{Annals of Statistics}, \textbf{32}, 407--499.

\bibitem[{Fan and Li(2001)}]{Fan2001}
Fan, J. and Li, R. (2001) {Variable Selection via Nonconcave Penalized}.
\newblock \textit{Journal of the American Statistical Association},
  \textbf{96}, 1348--1360.

\bibitem[{Huang et~al.(2008)Huang, Ma and Zhang}]{Huang2008}
Huang, J., Ma, S. and Zhang, C.-H. (2008) {Adaptive Lasso for Sparse
  High-Dimensional Regression Models}.
\newblock \textit{Statistica Sinica}, \textbf{18}, 1603--1618.

\bibitem[{Kandasamy et~al.(2010)Kandasamy, Mohan, Raju, Keerthikumar, Kumar,
  Venugopal, Telikicherla, Navarro, Mathivanan, Pecquet, Gollapudi, Tattikota,
  Mohan, Padhukasahasram, Subbannayya, Goel, Jacob, Zhong, Sekhar, Nanjappa,
  Balakrishnan, Subbaiah, Ramachandra, Rahiman, Prasad, Lin, Houtman,
  Desiderio, Renauld, Constantinescu, Ohara, Hirano, Kubo, Singh, Khatri,
  Draghici, Bader, Sander, Leonard and Pandey}]{Kandasamy2010}
Kandasamy, K., Mohan, S.~S., Raju, R., Keerthikumar, S., Kumar, G. S.~S.,
  Venugopal, A.~K., Telikicherla, D., Navarro, J.~D., Mathivanan, S., Pecquet,
  C., Gollapudi, S.~K., Tattikota, S.~G., Mohan, S., Padhukasahasram, H.,
  Subbannayya, Y., Goel, R., Jacob, H. K.~C., Zhong, J., Sekhar, R., Nanjappa,
  V., Balakrishnan, L., Subbaiah, R., Ramachandra, Y.~L., Rahiman, B.~A.,
  Prasad, T. S.~K., Lin, J.-X., Houtman, J. C.~D., Desiderio, S., Renauld,
  J.-C., Constantinescu, S.~N., Ohara, O., Hirano, T., Kubo, M., Singh, S.,
  Khatri, P., Draghici, S., Bader, G.~D., Sander, C., Leonard, W.~J. and
  Pandey, A. (2010) {NetPath: a Public Resource of Curated Signal Transduction
  Pathways.}
\newblock \textit{Genome biology}, \textbf{11}, R3.

\bibitem[{Li and Li(2008)}]{Li2008}
Li, C. and Li, H. (2008) {Network-constrained regularization and variable
  selection for analysis of genomic data}.
\newblock \textit{Bioinformatics}, \textbf{24}, 1175--1182.

\bibitem[{Li and Zhang(2010)}]{Li2010}
Li, F. and Zhang, N.~R. (2010) {Bayesian Variable Selection in Structured
  High-Dimensional Covariate Spaces with Applications in Genomics}.
\newblock \textit{Journal of the American Statistical Association},
  \textbf{105}, 1202--1214.

\bibitem[{Liu et~al.(2014)Liu, Chakraborty, Li, Liu and Lozano}]{Liu2014}
Liu, F., Chakraborty, S., Li, F., Liu, Y. and Lozano, A.~C. (2014) {Bayesian
  Regularization via Graph Laplacian}.
\newblock \textit{Bayesian Analysis}, \textbf{9}, 449--474.

\bibitem[{McCullagh and Nelder(1989)}]{mccullagh1989generalized}
McCullagh, P. and Nelder, J.~A. (1989) \textit{Generalized linear models},
  vol.~37.
\newblock CRC press.

\bibitem[{Mitchell and Beauchamp(1988)}]{Mitchell1988}
Mitchell, T.~J. and Beauchamp, J.~J. (1988) {Bayesian Variable Selection in
  Linear Regression}.
\newblock \textit{Journal of the American Statistical Association},
  \textbf{83}, 1023--1032.

\bibitem[{Ng et~al.(2002)Ng, Jordan and Weiss}]{Ng2002}
Ng, A.~Y., Jordan, M.~I. and Weiss, Y. (2002) On spectral clustering: Analysis
  and an algorithm.
\newblock \textit{Advances in Neural Information Processing Systems},
  \textbf{2}, 849--856.

\bibitem[{Nocedal and Wright(2006)}]{nocedal2006numerical}
Nocedal, J. and Wright, S. (2006) \textit{Numerical optimization}.
\newblock Springer Science \& Business Media.

\bibitem[{Pan et~al.(2010)Pan, Xie and Shen}]{Pan2010}
Pan, W., Xie, B. and Shen, X. (2010) {Incorporating Predictor Network in
  Penalized Regression with Application to Microarray Data}.
\newblock \textit{Biometrics}, \textbf{66}, 474--484.

\bibitem[{Park and Casella(2008)}]{Park2008}
Park, T. and Casella, G. (2008) {The Bayesian Lasso}.
\newblock \textit{Journal of the American Statistical Association},
  \textbf{103}, 681--686.

\bibitem[{Polson et~al.(2013)Polson, Scott and Windle}]{polson2013bayesian}
Polson, N.~G., Scott, J.~G. and Windle, J. (2013) Bayesian inference for
  logistic models using p{\'o}lya--gamma latent variables.
\newblock \textit{Journal of the American statistical Association},
  \textbf{108}, 1339--1349.

\bibitem[{Rockova and George(2014)}]{Rockova2014}
Rockova, V. and George, E.~I. (2014) {EMVS: The EM Approach to Bayesian
  Variable Selection}.
\newblock \textit{Journal of the American Statistical Association},
  \textbf{109}, 828--846.

\bibitem[{Rockova and Lesaffre(2014)}]{Rockova2014a}
Rockova, V. and Lesaffre, E. (2014) {Incorporating Grouping Information in
  Bayesian Variable Selection with Applications in Genomics}.
\newblock \textit{Bayesian Analysis}, \textbf{9}, 221--258.

\bibitem[{Rudin(1976)}]{rudin1976principles}
Rudin, W. (1976) \textit{Principles of Mathematical Analysis}.
\newblock International series in pure and applied mathematics. McGraw-Hill.

\bibitem[{Sirvent et~al.(2012)Sirvent, Vigy, Orsetti, Urbach and
  Roche}]{Sirvent2012}
Sirvent, A., Vigy, O., Orsetti, B., Urbach, S. and Roche, S. (2012) {Analysis
  of SRC oncogenic signaling in colorectal cancer by Stable Isotope Labeling
  with heavy Amino acids in mouse Xenografts}.
\newblock \textit{Molecular {\&} Cellular Proteomics}, \textbf{11}, 1937--1950.

\bibitem[{Stingo et~al.(2011)Stingo, Chen, Tadesse and Vannucci}]{Stingo2011}
Stingo, F.~C., Chen, Y.~A., Tadesse, M.~G. and Vannucci, M. (2011)
  {Incorporating Biological Information into Linear Models: A Bayesian Approach
  to the Selection of Pathways and Genes}.
\newblock \textit{Annals of Applied Statistics}, \textbf{5}, 1978--2002.

\bibitem[{Stingo and Vannucci(2011)}]{Stingo2011a}
Stingo, F.~C. and Vannucci, M. (2011) {Variable Selection for Discriminant
  Analysis with Markov Random Field Priors for the Analysis of Microarray
  Data}.
\newblock \textit{Bioinformatics}, \textbf{27}, 495--501.

\bibitem[{Tang et~al.(2014)Tang, Wang, Liu, Liu, Wu and Li}]{Tang2014}
Tang, H., Wang, Z., Liu, Q., Liu, X., Wu, M. and Li, G. (2014) {Disturbing
  miR-182 and -381 Inhibits BRD7 Transcription and Glioma Growth by Directly
  Targeting LRRC4}.
\newblock \textit{PLoS ONE}, \textbf{9}.

\bibitem[{Tibshirani(1996)}]{Tibshirani1996}
Tibshirani, R. (1996) {Regression Shrinkage and Selection via the Lasso}.
\newblock \textit{Journal of Royal Statistical Society}, \textbf{58}, 267--288.

\bibitem[{Verhaak et~al.(2010)Verhaak, Hoadley, Purdom, Wang, Qi, Wilkerson,
  Miller, Ding, Golub, Mesirov et~al.}]{verhaak2010integrated}
Verhaak, R.~G., Hoadley, K.~A., Purdom, E., Wang, V., Qi, Y., Wilkerson, M.~D.,
  Miller, C.~R., Ding, L., Golub, T., Mesirov, J.~P. et~al. (2010) Integrated
  genomic analysis identifies clinically relevant subtypes of glioblastoma
  characterized by abnormalities in pdgfra, idh1, egfr, and nf1.
\newblock \textit{Cancer cell}, \textbf{17}, 98--110.

\bibitem[{Wu and Lange(2008)}]{wu2008coordinate}
Wu, T.~T. and Lange, K. (2008) Coordinate descent algorithms for lasso
  penalized regression.
\newblock \textit{The Annals of Applied Statistics}, 224--244.

\bibitem[{Zou(2006)}]{Zou2006}
Zou, H. (2006) {The Adaptive Lasso and Its Oracle Properties}.
\newblock \textit{Journal of the American Statistical Association},
  \textbf{101}, 1418--1429.

\end{thebibliography}

\section*{Appendix}

\subsection*{Proof of Proposition \ref{pro}}
\begin{proof}[Proof of Proposition \ref{pro}]
Note that $\mathit{\itOmega} = I + W D_{\bomega} W'$ where $W$ is a $p \times p(p-1)/2$ matrix, whose column $\bw_{jk}$ corresponding to the edge $(j,k)$ is $\be_j-\be_k$.
Here $\be_j$ is the $p \times 1$ coordinate vector whose $j$-th element is 1 and all others are zero.
Therefore, we have $|\mathit{\itOmega}| \ge 1$ and
\begin{align*}
\int |\mathit{\itOmega}|^{-1/2} \prod_{G_{jk}=1} \omega_{jk}^{a_\omega-1} \exp ( -b_\omega \omega_{jk} ) \b1 (\omega_{jk}>0) \prod_{G_{jk}=0} \delta_0(\omega_{jk}) d\bomega \le \Gamma(a_\omega)^{|E|} b_\omega^{-a_\omega|E|} < \infty.
\end{align*}
\end{proof}

\subsection*{Proof of Theorem \ref{thm:oracle}}

We first prove Theorem \ref{thm:oracle} as it is more general than Theorem \ref{oracle:fix}. We then prove Theorem \ref{oracle:fix} as a special case. Note from the M-step for $\bbeta$ that
\begin{align} \label{sol:lasso}
\widehat{\bbeta}_n =  \argmin_{\bbeta} \, \frac{1}{2} (\by_n - X_n \bbeta)'(\by_n-X_n\bbeta) + \sum_{j=1}^{p_n} \widehat{\xi}_{nj} |\beta_j|,
\end{align}
where $\widehat{\bxi}_n = \widehat{\sigma}_n \widehat{\blambda}_n = \widehat{\sigma}_n e^{\widehat{\balpha}_n}$. Then, by the Karush-–Kuhn–-Tucker (KKT) conditions (see, for example, \citet{Chang2010}), the solution is given by
\begin{align}
\label{sol:beta1} \widehat{\bbeta}_{\cA_n} &= (X_{\cA_n} ' X_{\cA_n})^{-1} ( X_{\cA_n}' \by - S_{n\cA\cA} \widehat{\bxi}_{\cA_n} ),\\
\label{sol:beta0} \widehat{\bbeta}_{\cA_n^c} &= \bzero,
\end{align}
and satisfies
\begin{align}
\label{sol:inactive} | \bx_{nj}' ( \by_n - X_n \widehat{\bbeta}_n ) | \le \widehat{\xi}_{nj}, \qquad j \notin \cA_n,
\end{align}
where
\begin{align} \label{sol:sign}
S_n =\diag(\sign(\widehat{\beta}_{n1}),\dots,\sign(\widehat{\beta}_{np_n}))
\end{align}
is the sign matrix of $\widehat{\bbeta}_n$. The M-step for $\sigma$ yields
\begin{align} \label{sol:sigma}
\widehat{\sigma}_n = \frac{\widehat{c}_{2n} + \sqrt{\widehat{c}_{2n}^2+8\widehat{c}_{1n}c_{3n}}}{2c_{3n}},
\end{align}
where $\widehat{c}_{1n} = \frac{1}{2}(\by_n - X_n \widehat{\bbeta}_n)'(\by_n - X_n \widehat{\bbeta}_n) + b_{\sigma n}$, $\widehat{c}_{2n} = \sum_{j=1}^{p_n} e^{\widehat{\alpha}_{nj}} |\widehat{\beta}_{nj}|$, and $c_{3n} = n+p_n+2a_{\sigma n}+2$.
In addition, from the M-step for $\balpha$, the solution satisfies
\begin{align} \label{sol:alpha}
|\widehat{\beta}_{nj}| e^{\widehat{\alpha}_{nj}} = \frac{\widehat{\sigma}_n}{\nu_n} \left( \mu_n+\nu_n - \widehat{\alpha}_{nj} + \sum_{k \sim j} \omega_{jk}^{(\infty)} (\widehat{\alpha}_{nk} - \widehat{\alpha}_{nj}) \right), \qquad j=1,\dots,p_n.
\end{align}

Let $\cB_n = \cA_n \cap \cA_0$, $\cC_n = \cA_n - \cA_0$, and $\cD_n = \cA_0 - \cA_n$.
The residual vector and the SSE are given by
\begin{align} \label{solution:residual:full}
\by_n - X_n \widehat{\bbeta}_n = ( I - H_{\cA_n} ) ( X_{\cD_n} \bbeta_{0\cD_n} + \bepsilon_n ) + X_{\cA_n} \left( X_{\cA_n}' X_{\cA_n} \right)^{-1} S_{n\cA\cA} \widehat{\bxi}_{\cA_n},
\end{align}
and
\begin{align}
\nonumber \| \by_n - X_n \widehat{\bbeta}_n \|^2 &= ( X_{\cD_n} \bbeta_{0\cD_n} + \bepsilon_n ) '( I - H_{\cA_n} ) ( X_{\cD_n} \bbeta_{0\cD_n} + \bepsilon_n )\\
\label{solution:SSE:full} & \qquad  + \widehat{\bxi}_{\cA_n}' S_{n\cA\cA} \left( X_{\cA_n}' X_{\cA_n} \right)^{-1} S_{n\cA\cA} \widehat{\bxi}_{\cA_n},
\end{align}
where $H_{\cA_n} = X_{\cA_n} \left( X_{\cA_n}' X_{\cA_n} \right)^{-1} X_{\cA_n}'$.

Due to the partial orthogonality \ref{ass:rho}, the solution $\widehat{\bbeta}_n$ can be rewritten as
\begin{align} \label{solution:beta}
\left[ \begin{array}{c} \widehat{\bbeta}_{\cB_n}\\ \widehat{\bbeta}_{\cC_n} \end{array} \right] = \left[ \begin{array}{c} \bbeta_{0\cB_n} + (X_{\cB_n}'X_{\cB_n})^{-1} (X_{\cB_n}' (X_{\cD_n} \bbeta_{0\cD_n} + \bepsilon_n) - S_{n\cB\cB} \widehat{\bxi}_{\cB_n})\\
(X_{\cC_n}'X_{\cC_n})^{-1} (X_{\cC_n}'\bepsilon_n  - S_{n\cC\cC} \widehat{\bxi}_{\cC_n}) \end{array} \right] + O_p(\rho_n)
\end{align}
where the last term $O_p(\rho_n)$ exists only when $\cC_n \neq \emptyset$.
The residual vector if $\cC_n = \emptyset$ is given by
\begin{align} \label{solution:residual}
\by_n - X_n \widehat{\bbeta}_n = ( I - H_{\cB_n} ) ( X_{\cD_n} \bbeta_{0\cD_n} + \bepsilon_n ) + X_{\cB_n} \left( X_{\cB_n}' X_{\cB_n} \right)^{-1} S_{n\cB\cB} \widehat{\bxi}_{\cB_n}.
\end{align}


\begin{lem} \label{lemma:sigma}
The following statements are true.
\begin{enumerate}
\item $\widehat{\sigma}_n^2 = O_p((1+p_n/n)^{-1})$.
\item $\widehat{\sigma}_n^{-2} = O_p((1+p_n/n)n^{1-z})$.
\end{enumerate}
\end{lem}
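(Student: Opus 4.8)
The plan is to read off a two-sided bound for $\widehat{\sigma}_n^2$ directly from the stationarity condition of the M-step for $\sigma$. Differentiating the M-step objective $c_1/\sigma^2+c_2/\sigma+c_3\log\sigma$ and clearing denominators (equivalently, squaring the closed form \eqref{sol:sigma}) yields the quadratic identity $c_{3n}\widehat{\sigma}_n^2=2\widehat{c}_{1n}+\widehat{c}_{2n}\widehat{\sigma}_n$. The one observation that drives the proof is the cancellation $\widehat{c}_{2n}\widehat{\sigma}_n=\widehat{\sigma}_n\sum_{j}e^{\widehat{\alpha}_{nj}}|\widehat{\beta}_{nj}|=\sum_{j}\widehat{\xi}_{nj}|\widehat{\beta}_{nj}|$, valid because $\widehat{\xi}_{nj}=\widehat{\sigma}_ne^{\widehat{\alpha}_{nj}}$. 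Substituting $2\widehat{c}_{1n}=\|\by_n-X_n\widehat{\bbeta}_n\|^2+2b_{\sigma n}$ then gives
\begin{align*}
c_{3n}\widehat{\sigma}_n^2=\|\by_n-X_n\widehat{\bbeta}_n\|^2+\sum_{j=1}^{p_n}\widehat{\xi}_{nj}|\widehat{\beta}_{nj}|+2b_{\sigma n},
\end{align*}
a sum of three nonnegative terms. This reduces both claims to elementary estimates, once one records that the deterministic quantity $c_{3n}=n+p_n+2a_{\sigma 1}n^z+2$ satisfies $c_{3n}=\Theta\big(n(1+p_n/n)\big)$ since $z<1$.

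For statement (1), I would use that $\widehat{\bbeta}_n$ minimizes the weighted lasso objective in \eqref{sol:lasso} with the fixed-point weights $\widehat{\bxi}_n$; evaluating that objective at $\bbeta=\bzero$ gives $\frac12\|\by_n-X_n\widehat{\bbeta}_n\|^2+\sum_j\widehat{\xi}_{nj}|\widehat{\beta}_{nj}|\le\frac12\|\by_n\|^2$, so the displayed identity implies $c_{3n}\widehat{\sigma}_n^2\le\|\by_n\|^2+2b_{\sigma n}$. It then remains to bound $\|\by_n\|^2\le 2\|X_n\bbeta_0\|^2+2\|\bepsilon_n\|^2=2n\,\bbeta_0'\itSigma_n\bbeta_0+2\|\bepsilon_n\|^2\le 2n\tau_2+O_p(n)=O_p(n)$, using \ref{ass:XX}, $\|\bbeta_0\|=1$ from \ref{ass:beta}, and $\E\|\bepsilon_n\|^2=n\sigma_0^2$; together with $b_{\sigma n}=b_{\sigma 1}n^z=o(n)$, dividing by $c_{3n}=\Theta(n+p_n)$ gives $\widehat{\sigma}_n^2=O_p\big(n/(n+p_n)\big)=O_p\big((1+p_n/n)^{-1}\big)$.

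For statement (2), since all three terms in the identity are nonnegative, $c_{3n}\widehat{\sigma}_n^2\ge 2b_{\sigma n}=2b_{\sigma 1}n^z$ holds surely, so $\widehat{\sigma}_n^{-2}\le c_{3n}/(2b_{\sigma 1}n^z)$. With $c_{3n}=\Theta(n+p_n)$ and $z<1$ this equals $\Theta\big(n^{1-z}(1+p_n/n)\big)$, hence $\widehat{\sigma}_n^{-2}=O\big((1+p_n/n)n^{1-z}\big)$ — in fact a deterministic bound, so certainly $O_p$.

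There is no genuinely hard step here; the only points requiring care are that the KKT/stationarity relations feeding \eqref{sol:lasso} and \eqref{sol:sigma} are evaluated at the common EM fixed point so that the weights $\widehat{\bxi}_n$ agree, and the elementary computation of the order of the deterministic constant $c_{3n}$. Assumption \ref{ass:sigma} (in particular $z<1$) is exactly what makes $b_{\sigma n}=o(n)$ while keeping the lower bound on $\widehat{\sigma}_n^2$ from decaying faster than needed, which is why the lemma is stated under it rather than under the weaker \ref{ass:fix:sigma}.
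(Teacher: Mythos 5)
Your proof is correct and follows essentially the same route as the paper: both rest on the lasso optimality bound at $\bbeta=\bzero$ to get $\widehat{c}_{1n}=O_p(n)$ and $\widehat{\sigma}_n\widehat{c}_{2n}=O_p(n)$ for part (1), and on the sure bound $\widehat{c}_{1n}\ge b_{\sigma n}$ together with $c_{3n}=\Theta(n+p_n)$ for part (2). Your exact quadratic identity $c_{3n}\widehat{\sigma}_n^2=2\widehat{c}_{1n}+\widehat{c}_{2n}\widehat{\sigma}_n$ is just a slightly cleaner packaging of the paper's two-sided inequality \eqref{sigma_bound} and the elementary bound $x^2\le 2ax+b^2$.
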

\begin{proof}
\begin{enumerate}
\item Since $\widehat{\bbeta}_n$ is the solution of \eqref{sol:lasso} and since $\bbeta = \bzero$ is a possible solution, we have
\begin{align*}
\widehat{c}_{1n} + \widehat{\sigma}_n \widehat{c}_{2n} = \frac{1}{2} \|\by_n - X_n \widehat{\bbeta}_n\|^2 + b_{\sigma n} + \widehat{\sigma}_n \sum_{j=1}^{p_n} e^{\widehat{\alpha}_{nj}} |\widehat{\beta}_{nj}| \le \frac{1}{2} \|\by_n\|^2 + b_{\sigma n}.
\end{align*}
Since
\begin{align*}
\|\by_n\|^2 = \|X_n \bbeta_0\|^2 + 2\bbeta_0'X_n'\bepsilon_n + \|\bepsilon_n\|^2  = \tau_2 n + \sigma_0^2n + O_p(n^{1/2}) = O_p(n),
\end{align*}
we have $\widehat{c}_{1n} = O_p(n)$ and $\widehat{\sigma}_n \widehat{c}_{2n} = O_p(n)$.
Note that
\begin{align} \label{sigma_bound}
\sqrt{\frac{2\widehat{c}_{1n}}{c_{3n}}} \le \widehat{\sigma}_n = \frac{\widehat{c}_{2n} + \sqrt{\widehat{c}_{2n}^2 + 8\widehat{c}_{1n}c_{3n}} }{2c_{3n}} \le \frac{\widehat{c}_{2n}}{c_{3n}} + \sqrt{\frac{2\widehat{c}_{1n}}{c_{3n}}}.
\end{align}
Since, if $0 \le b \le x \le a+b$, it follows that
\begin{align*}
x^2 \le ax+bx \le ax+b(a+b) \le 2ax + b^2,
\end{align*}
we have
\begin{align*}
\widehat{\sigma}_n^2 \le \frac{2 \widehat{\sigma}_n \widehat{c}_{2n}}{c_{3n}} + \frac{2\widehat{c}_{1n}}{c_{3n}} = O_p((1+p_n/n)^{-1}).
\end{align*}
\item Since $\widehat{c}_{1n} \ge b_{\sigma n}$, the result follows by the lower bound in \eqref{sigma_bound} and \ref{ass:sigma}.
\end{enumerate}
\end{proof}


\begin{lem} \label{lemma:alpha}
The following statements are true.
\begin{enumerate}
\item $\|\widehat{\bbeta}_n\| = O_p(1)$.

\item $\max_{1 \le j \le p_n} \widehat{\alpha}_{nj} \le \mu_n+\nu_n$.

\item $\min_{1 \le j \le p_n} \widehat{\alpha}_{nj} \ge \frac{1}{2} \log (1+p_n/n) + (r-1/2) \log n + o_p(\log n)$.
\end{enumerate}
\end{lem}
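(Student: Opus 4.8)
The plan is to prove the three parts in order, as each is used in proving the next.

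\emph{Part (1).} Since $\widehat{\bbeta}_n$ solves \eqref{sol:lasso} and $\bbeta=\bzero$ is feasible, dropping the non-negative penalty gives $\tfrac12\|\by_n-X_n\widehat{\bbeta}_n\|^2\le\tfrac12\|\by_n\|^2$; expanding and applying Cauchy--Schwarz gives $\|X_n\widehat{\bbeta}_n\|^2\le2\by_n'X_n\widehat{\bbeta}_n\le2\|X_n\widehat{\bbeta}_n\|\,\|\by_n\|$, so $\|X_n\widehat{\bbeta}_n\|=O_p(\|\by_n\|)=O_p(n^{1/2})$ using $\|\by_n\|^2=O_p(n)$ from the proof of Lemma~\ref{lemma:sigma}. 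Since the representation \eqref{sol:beta1} requires $X_{\cA_n}'X_{\cA_n}$ invertible, $|\cA_n|\le n$, so Assumption~\ref{ass:XX} gives $\|X_n\widehat{\bbeta}_n\|^2=n\,\widehat{\bbeta}_{\cA_n}'\itSigma_{n\cA_n\cA_n}\widehat{\bbeta}_{\cA_n}\ge n\tau_1\|\widehat{\bbeta}_n\|^2$; combining the two displays yields $\|\widehat{\bbeta}_n\|=O_p(1)$. For \emph{Part (2)}, take $j^*$ attaining $\max_j\widehat{\alpha}_{nj}$: the left side of \eqref{sol:alpha} is non-negative, hence so is $\mu_n+\nu_n-\widehat{\alpha}_{nj^*}+\sum_{k\sim j^*}\omega_{j^*k}^{(\infty)}(\widehat{\alpha}_{nk}-\widehat{\alpha}_{nj^*})$; every summand is $\le0$ (each $\widehat{\alpha}_{nk}-\widehat{\alpha}_{nj^*}\le0$ by maximality, each $\omega_{j^*k}^{(\infty)}\ge0$), so $\mu_n+\nu_n-\widehat{\alpha}_{nj^*}\ge0$, which is the claim.

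\emph{Part (3)} is the delicate one. Take $j_*$ attaining $\min_j\widehat{\alpha}_{nj}$; the neighbour sum in \eqref{sol:alpha} is now $\ge0$ by minimality, so with $|\widehat{\beta}_{nj_*}|\le\|\widehat{\bbeta}_n\|$ from Part (1) we get $\|\widehat{\bbeta}_n\|e^{\widehat{\alpha}_{nj_*}}\ge(\widehat{\sigma}_n/\nu_n)(\mu_n+\nu_n-\widehat{\alpha}_{nj_*})$, i.e.\ $g(\widehat{\alpha}_{nj_*})\le(\nu_n/\widehat{\sigma}_n)\|\widehat{\bbeta}_n\|$ with $g(a)=(\mu_n+\nu_n-a)e^{-a}$ strictly decreasing for $a\le\mu_n+\nu_n$, a range that contains $\widehat{\alpha}_{nj_*}$ by Part (2). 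Using Lemma~\ref{lemma:sigma}(2), Part (1), and Assumptions~\ref{ass:nu}--\ref{ass:sigma}, the right side is $O_p\big((1+p_n/n)^{-1/2}n^{(1-z)/2-r}\log n\big)$. I would then argue by contradiction: if $\widehat{\alpha}_{nj_*}<b_n-\epsilon\log n$ for $b_n=\tfrac12\log(1+p_n/n)+(r-\tfrac12)\log n$ and a fixed $\epsilon>0$, monotonicity of $g$ forces $g(b_n-\epsilon\log n)\le(\nu_n/\widehat{\sigma}_n)\|\widehat{\bbeta}_n\|$; but Assumption~\ref{ass:mu} gives $\mu_n+\nu_n-b_n=(R-r+\tfrac12)\log n+o(\log n)$ with $R-r+\tfrac12>1$, so $g(b_n-\epsilon\log n)=\Theta\big((1+p_n/n)^{-1/2}n^{1/2-r+\epsilon}\log n\big)$. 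Cancelling the common factor $(1+p_n/n)^{-1/2}\log n$ from both sides reduces the inequality to $n^{1/2-r+\epsilon}=O_p(n^{(1-z)/2-r})$, i.e.\ $n^{\epsilon+z/2}=O_p(1)$, impossible since $z>0$. Hence $P(\widehat{\alpha}_{nj_*}<b_n-\epsilon\log n)\to0$ for every $\epsilon>0$, which is the claim; the case $j_*\notin\cA_n$ (where $\widehat{\beta}_{nj_*}=0$ and \eqref{sol:alpha} with Part (2) force $\widehat{\alpha}_{nj_*}=\mu_n+\nu_n\ge b_n$) is immediate.

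I expect the only real obstacle to be the bookkeeping in Part (3): tracking the exact powers of $n$ and the $(1+p_n/n)$ factors on each side and verifying that they cancel --- which is exactly why Assumptions~\ref{ass:mu}, \ref{ass:nu}, and \ref{ass:sigma} are calibrated the way they are. Parts (1) and (2) are short.
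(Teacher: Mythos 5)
Your proof is correct and follows essentially the same route as the paper: part (1) via the basic inequality and the restricted eigenvalue bound in \ref{ass:XX}, part (2) by evaluating \eqref{sol:alpha} at the argmax where the neighbour sum is nonpositive, and part (3) by evaluating at the argmin where the neighbour sum is nonnegative and then solving the resulting transcendental inequality using Lemmas \ref{lemma:sigma}(b) and \ref{lemma:alpha}(a) together with \ref{ass:mu}--\ref{ass:nu}. The only cosmetic difference is in part (3), where the paper rearranges after taking logarithms and uses monotonicity of $t + \log t$ while you run a contradiction argument through the decreasing function $g(a) = (\mu_n+\nu_n-a)e^{-a}$; the two are equivalent.
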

\begin{proof}
\begin{enumerate}
\item As $\widehat{c}_{1n} = O_p(n)$ in Lemma \ref{lemma:sigma}, note that
\begin{align*}
\|\by_n-X_n\widehat{\bbeta}_n\|^2 = \|\bepsilon_n\|^2 - 2(\widehat{\bbeta}_n-\bbeta_0)'X_n'\bepsilon_n + \|X_n(\widehat{\bbeta}_n-\bbeta_0)\|^2 = O_p(n).
\end{align*}
This implies $\|X_n(\widehat{\bbeta}_n-\bbeta_0)\|^2 = O_p(n)$. Since $\|\bbeta_0\|=1$ and
\begin{align*}
\|X_{\cA_n} \widehat{\bbeta}_{\cA_n}\| \le \|X_n(\widehat{\bbeta}_n-\bbeta_0)\| + \|X_n \bbeta_0\|,
\end{align*}
the result follows by \ref{ass:XX}.

\item Let $j_1 = \argmax_j \widehat{\alpha}_{nj}$. Due to \eqref{sol:alpha}, note that
\begin{align*}
\widehat{\alpha}_{nj_1}-\mu_n-\nu_n \le \sum_{k\sim j_1} \omega_{j_1k}^{(\infty)} ( \widehat{\alpha}_{nk} - \widehat{\alpha}_{nj_1} ) \le 0.
\end{align*}
Therefore, we have $\widehat{\alpha}_{nj_1} \le \mu_n + \nu_n$.

\item Let $j_0 = \argmin_j \widehat{\alpha}_{nj}$. Due to \eqref{sol:alpha}, note that
\begin{align*}
|\widehat{\beta}_{nj_0}| e^{\widehat{\alpha}_{nj_0}} \ge \widehat{\sigma}_n \frac{\mu_n + \nu_n - \widehat{\alpha}_{nj_0}}{\nu_n}.
\end{align*}
By Lemma \ref{lemma:sigma}(b),  Lemma \ref{lemma:alpha}(a), and \ref{ass:nu}, note that
\begin{align*}
\widehat{\alpha}_{nj_0} &\ge -\log |\widehat{\beta}_{nj_0}| + \log \widehat{\sigma}_n + \log (\mu_n+\nu_n-\widehat{\alpha}_{nj_0}) - \log \nu_n\\
&\ge (r-1/2) \log n + \frac{1}{2} \log (1+p_n/n) + \log (\mu_n+\nu_n-\widehat{\alpha}_{nj_0}) + o_p(\log n).
\end{align*}
By \ref{ass:mu}, we have
\begin{align*}
\mu_n - \widehat{\alpha}_{nj_0} + \log (\mu_n+\nu_n-\widehat{\alpha}_{nj_0}) \le (R-r+1/2)\log n + o(\log n).
\end{align*}
Note that $\nu_n = o(\log n)$ by \ref{ass:nu}. Since $R-r+1/2>0$, we have
\begin{align*}
\mu_n - \widehat{\alpha}_{nj_0} \le (R-r+1/2)\log n + o(\log n).
\end{align*}
Hence, the result follows.
\end{enumerate}
\end{proof}


\begin{lem} \label{lemma:xi}
The following statements are true.
\begin{enumerate}
\item $\max_{1 \le j \le p_n} \widehat{\xi}_{nj} = o_p(\widehat{\sigma}_n (1+p_n/n)^{1/2} n^{1-u})$.

\item $M_n = \max_{1 \le j \le p_n} |m_{nj}| = o(\log n)$ where $m_{nj} = \sum_{k\sim j} \omega_{jk}^{(\infty)} ( \widehat{\alpha}_{nk} - \widehat{\alpha}_{nj} )$.

\item If $|\widehat{\beta}_{nj}| = 0$ for large $n$, then we have
\begin{align*}
\widehat{\xi}_{nj} > C_2 \widehat{\sigma}_n (1+p_n/n)^{1/2} n^{R-\zeta},
\end{align*}
for $\forall C_2>0$, $\forall \zeta>0$, and large $n$.

\item If $ |\widehat{\beta}_{nj}| \le C_1 n^{-c}$ for $\exists C_1>0$, $\forall c<R-r$, and large $n$, we have
\begin{align*}
\widehat{\xi}_{nj} > C_2 \widehat{\sigma}_n^2 (1+p_n/n) n^{r+c-\zeta},
\end{align*}
for $\forall C_2>0$, $\forall \zeta>0$, and large $n$, and we have
\begin{align*}
|\widehat{\beta}_{nj}| \widehat{\xi}_{nj} \le C_3 \widehat{\sigma}_n^2 (1+p_n/n) n^r,
\end{align*}
for $\exists C_3>0$ and large $n$.

\item If $ |\widehat{\beta}_{nj}| \ge C_1 n^{-c}$ for $\exists C_1>0$, $\forall c<R-r$, and large $n$, we have
\begin{align*}
\widehat{\xi}_{nj} < C_2 \widehat{\sigma}_n^2 (1+p_n/n) n^{r+c+\zeta},
\end{align*}
for $\forall C_2>0$, $\forall \zeta>0$, and large $n$, and we have
\begin{align*}
|\widehat{\beta}_{nj}| \widehat{\xi}_{nj} \ge C_3 \widehat{\sigma}_n^2 (1+p_n/n) n^r,
\end{align*}
for $\exists C_3>0$ and large $n$.

\end{enumerate}
\end{lem}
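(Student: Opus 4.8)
The plan is to derive everything from the $\balpha$-stationarity identity \eqref{sol:alpha}. Put $\psi_{nj} = \mu_n + \nu_n - \widehat{\alpha}_{nj} + m_{nj}$; then \eqref{sol:alpha} reads $\psi_{nj} = (\nu_n/\widehat{\sigma}_n)|\widehat{\beta}_{nj}|e^{\widehat{\alpha}_{nj}} \ge 0$, with $\psi_{nj}=0$ exactly when $\widehat{\beta}_{nj}=0$, and multiplying through by $\widehat{\sigma}_n$ and using $\widehat{\xi}_{nj}=\widehat{\sigma}_n e^{\widehat{\alpha}_{nj}}$ yields, for all $j$,
\begin{align*}
|\widehat{\beta}_{nj}|\,\widehat{\xi}_{nj} = \frac{\widehat{\sigma}_n^2}{\nu_n}\,\psi_{nj}, \qquad\text{and, when } \widehat{\beta}_{nj}\neq 0, \quad \psi_{nj}\,e^{\psi_{nj}} = |\widehat{\beta}_{nj}|\,\nu_n\,\widehat{\sigma}_n^{-1}\,e^{\mu_n+\nu_n+m_{nj}},
\end{align*}
the last (self-consistency) identity obtained by eliminating $\widehat{\alpha}_{nj}$. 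All statements below are with probability tending to one.

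I would settle part 2 first. From \eqref{eq:Estep}, $\omega_{jk}^{(\infty)}\le a_{\omega n}/b_{\omega n}$, so $|m_{nj}|\le L_n(a_{\omega n}/b_{\omega n})(\max_k\widehat{\alpha}_{nk}-\min_k\widehat{\alpha}_{nk})$; by Lemma~\ref{lemma:alpha} with \ref{ass:mu}--\ref{ass:nu} the bracketed range is $(R-r+1/2)\log n+o_p(\log n)$ uniformly, while $L_na_{\omega n}b_{\omega n}^{-1}=o(1)$ by \ref{ass:omega}, so $M_n=o_p(\log n)$. Two consequences to record: $e^{\pm m_{nj}}=n^{o_p(1)}$ uniformly, and (combining Lemma~\ref{lemma:alpha}, \ref{ass:mu}--\ref{ass:nu} and part 2) $0\le\psi_{nj}\le(R-r+1/2)\log n+o_p(\log n)$ uniformly. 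Part 1 is then immediate: $\widehat{\xi}_{nj}=\widehat{\sigma}_n e^{\widehat{\alpha}_{nj}}\le\widehat{\sigma}_n e^{\mu_n+\nu_n}=\widehat{\sigma}_n(1+p_n/n)^{1/2}n^{R+o(1)}$ by Lemma~\ref{lemma:alpha} and \ref{ass:mu}--\ref{ass:nu}, which is $o_p(\widehat{\sigma}_n(1+p_n/n)^{1/2}n^{1-u})$ since $R<1-u$. Part 3 follows because $\widehat{\beta}_{nj}=0$ forces $\psi_{nj}=0$, i.e.\ $\widehat{\alpha}_{nj}=\mu_n+\nu_n+m_{nj}$, so $\widehat{\xi}_{nj}=\widehat{\sigma}_n(1+p_n/n)^{1/2}n^{R+o_p(1)}$, exceeding $C_2\widehat{\sigma}_n(1+p_n/n)^{1/2}n^{R-\zeta}$ for every $C_2,\zeta>0$.

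For parts 4 and 5 I would read off the exponent on the right of the self-consistency identity: by \ref{ass:mu}--\ref{ass:nu}, part 2, and Lemma~\ref{lemma:sigma} (which makes $\widehat{\sigma}_n^{-1}(1+p_n/n)^{-1/2}=(\widehat{\sigma}_n^2(1+p_n/n))^{-1/2}$ bounded below by a constant and above by $O_p(n^{(1-z)/2})$), one gets $\nu_n\widehat{\sigma}_n^{-1}e^{\mu_n+\nu_n+m_{nj}}=n^{\kappa_{nj}}$ with $\kappa_{nj}\ge R-r+o_p(1)$ uniformly, hence $\kappa_{nj}>c$ for large $n$ (as $c<R-r$), and $\psi_{nj}e^{\psi_{nj}}=|\widehat{\beta}_{nj}|n^{\kappa_{nj}}$. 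The upper bounds then follow by substitution, using $\nu_n^{-1}=\Theta((1+p_n/n)n^r(\log n)^{-1})$ and $\psi_{nj}=O_p(\log n)$: for part 4(ii), $|\widehat{\beta}_{nj}|\widehat{\xi}_{nj}=\widehat{\sigma}_n^2\nu_n^{-1}\psi_{nj}=O_p(\widehat{\sigma}_n^2(1+p_n/n)n^r)$ for every $j$; for part 5(i), $\widehat{\xi}_{nj}=\widehat{\sigma}_n^2\nu_n^{-1}\psi_{nj}/|\widehat{\beta}_{nj}|\le\widehat{\sigma}_n^2\nu_n^{-1}\psi_{nj}n^c/C_1=O_p(\widehat{\sigma}_n^2(1+p_n/n)n^{r+c})$. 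For the matching lower bound in part 5(ii), $\psi_{nj}e^{\psi_{nj}}=|\widehat{\beta}_{nj}|n^{\kappa_{nj}}\ge C_1n^{\kappa_{nj}-c}\to\infty$ since $\kappa_{nj}-c\ge R-r-c>0$; combined with $\log\psi_{nj}=o_p(\log n)$ this forces $\psi_{nj}\ge(R-r-c)\log n(1+o_p(1))$, whence $|\widehat{\beta}_{nj}|\widehat{\xi}_{nj}=\widehat{\sigma}_n^2\nu_n^{-1}\psi_{nj}\ge C_3\widehat{\sigma}_n^2(1+p_n/n)n^r$.

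The delicate step is part 4(i): here $\widehat{\beta}_{nj}$ is only bounded \emph{above} by $C_1n^{-c}$ and could be vanishingly small (or zero), so the crude bound $\widehat{\xi}_{nj}=\widehat{\sigma}_n^2\nu_n^{-1}\psi_{nj}/|\widehat{\beta}_{nj}|\ge\widehat{\sigma}_n^2\nu_n^{-1}\psi_{nj}n^c/C_1$ is worthless unless $\psi_{nj}$ is controlled from below. I would split into cases. If $\widehat{\beta}_{nj}=0$, part 3 already gives $\widehat{\xi}_{nj}>C\widehat{\sigma}_n(1+p_n/n)^{1/2}n^{R-\zeta}$ for every $C$, which beats $\widehat{\sigma}_n^2(1+p_n/n)n^{r+c-\zeta}$ because $\widehat{\sigma}_n(1+p_n/n)^{1/2}=O_p(1)$ and $r+c<R$. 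If $\widehat{\beta}_{nj}\neq 0$ with $\psi_{nj}\ge 1$, then $\psi_{nj}/|\widehat{\beta}_{nj}|\ge 1/|\widehat{\beta}_{nj}|\ge n^c/C_1$. If $\widehat{\beta}_{nj}\neq 0$ with $0<\psi_{nj}<1$, then $e^{\psi_{nj}}\in[1,e)$, so the self-consistency identity gives $\psi_{nj}=\Theta(|\widehat{\beta}_{nj}|n^{\kappa_{nj}})$ and hence $\psi_{nj}/|\widehat{\beta}_{nj}|=\Theta(n^{\kappa_{nj}})\ge\Theta(n^{R-r})\gg n^c$; the key observation is that once $\psi_{nj}$ is small the ratio $\psi_{nj}/|\widehat{\beta}_{nj}|$ is pinned to $\asymp n^{\kappa_{nj}}$, independently of how tiny $\widehat{\beta}_{nj}$ is. In every case $\psi_{nj}/|\widehat{\beta}_{nj}|\ge c_0n^c$ for some $c_0>0$, so $\widehat{\xi}_{nj}\ge\widehat{\sigma}_n^2\,\Theta((1+p_n/n)n^r(\log n)^{-1})c_0n^c>C_2\widehat{\sigma}_n^2(1+p_n/n)n^{r+c-\zeta}$ for all $C_2,\zeta>0$. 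Apart from engineering this trichotomy, the remaining work is routine substitution of the rates in Assumptions~\ref{ass:mu}--\ref{ass:sigma} and Lemmas~\ref{lemma:sigma}--\ref{lemma:alpha}; the only non-polynomial factor needing care is $\widehat{\sigma}_n^{-1}(1+p_n/n)^{-1/2}$, which the constraint $z>1-r$ in \ref{ass:sigma} exists precisely to tame.
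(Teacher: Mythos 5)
Your proposal is correct and follows essentially the same route as the paper: every bound is extracted from the $\balpha$-stationarity identity \eqref{sol:alpha} combined with Lemmas \ref{lemma:sigma}--\ref{lemma:alpha} and the rate assumptions \ref{ass:mu}--\ref{ass:sigma}, with parts 1--3 matching the paper's argument verbatim in substance. Your only real departure is cosmetic but welcome: recasting \eqref{sol:alpha} as the self-consistency relation $\psi_{nj}e^{\psi_{nj}} = |\widehat{\beta}_{nj}|\nu_n\widehat{\sigma}_n^{-1}e^{\mu_n+\nu_n+m_{nj}}$ and running the explicit trichotomy in part 4(i) makes rigorous the degenerate cases ($\widehat{\beta}_{nj}=0$, where the paper's expression $-\log|\widehat{\beta}_{nj}|$ is undefined, and $0<\psi_{nj}<1$) that the paper's proof handles only formally.
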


\begin{proof}
\begin{enumerate}
\item By Lemma \ref{lemma:alpha}(b), the claim follows due to \ref{ass:mu} and \ref{ass:nu}.

\item Note that, by Lemma \ref{lemma:alpha}, we have
\begin{align*}
|\widehat{\alpha}_{nk} - \widehat{\alpha}_{nj}| \le \max_j \widehat{\alpha}_{nj} - \min_j \widehat{\alpha}_{nj} = O_p(\log n).
\end{align*}
On the other hand, we have $\omega_{jk}^{(\infty)} \le a_{\omega n} b_{\omega n}^{-1}$ by \eqref{eq:Estep}. Then, by \ref{ass:omega}, we have
\begin{align*}
M_n \le \max_{1 \le j \le p_n} \sum_{k \sim j} \omega_{jk}^{(\infty)} | \widehat{\alpha}_{nk} - \widehat{\alpha}_{nj} | \le L_n a_{\omega n} b_{\omega n}^{-1} O(\log n) = o(\log n).
\end{align*}

\item If $|\widehat{\beta}_{nj}| = 0$, then we have $\widehat{\alpha}_{nj} = \mu_n + \nu_n + m_{nj}$. The claim follows by \ref{ass:mu}, \ref{ass:nu}, and part (b).

\item By \eqref{sol:alpha} and \ref{ass:nu}, note that
\begin{align*}
\widehat{\alpha}_{nj} &= -\log |\widehat{\beta}_{nj}| + \log \widehat{\sigma}_n + \log (\mu_n+\nu_n-\widehat{\alpha}_{nj} + m_{nj}) - \log \nu_n\\
&\ge (r+c)\log n + \log \widehat{\sigma}_n + \log (1+p_n/n) + \log (\mu_n+\nu_n-\widehat{\alpha}_{nj} + m_{nj}) + o(\log n).
\end{align*}
By \ref{ass:mu}, we have
\begin{align*}
\mu_n - \widehat{\alpha}_{nj} + & \log (\mu_n+\nu_n-\widehat{\alpha}_{nj} + m_{nj})\\
&\le (R-r-c)\log n - \log \widehat{\sigma}_n - \frac{1}{2} \log (1+p_n/n) + o(\log n).
\end{align*}
Note that $\nu_n + m_{nj} = o(\log n)$ by \ref{ass:nu} and part (b). Since $c<R-r$ and by Lemma \ref{lemma:sigma}, we have
\begin{align} \label{core}
\mu_n - \widehat{\alpha}_{nj} \le (R-r-c)\log n - \log \widehat{\sigma}_n - \frac{1}{2} \log (1+p_n/n) + o(\log n),
\end{align}
and therefore
\begin{align*}
\widehat{\alpha}_{nj} \ge (r+c)\log n + \log \widehat{\sigma}_n + \log (1+p_n/n) + o(\log n).
\end{align*}
This implies
\begin{align*}
\widehat{\xi}_{nj} > C_2 \widehat{\sigma}_n^2 (1+p_n/n) n^{r+c-\zeta}.
\end{align*}
On the other hand, by \eqref{sol:alpha}, \eqref{core}, and \ref{ass:nu}, we have
\begin{align*}
|\widehat{\beta}_{nj}| \widehat{\xi}_{nj} = \widehat{\sigma}_n^2 \frac{\mu_n+\nu_n-\widehat{\alpha}_{nj} + m_{nj}}{\nu_n} \le C_3\widehat{\sigma}_n^2 (1+p_n/n) n^r.
\end{align*}

\item The arguments are in parallel to those in part (d).

\end{enumerate}
\end{proof}


\begin{lem} \label{lemma:xibound}
Suppose $p_n \times 1$ vectors $\underline{\balpha}_n$ and $\overline{\balpha}_n$ satisfy, given $\widehat{\bbeta}_n$ and $\widehat{\sigma}_n$,
\begin{align}
\label{def:loweralpha} |\widehat{\beta}_{nj}| e^{\underline{\alpha}_{nj}} &= \widehat{\sigma}_n \frac{\mu_n+\nu_n-\underline{\alpha}_{nj}-M_n}{\nu_n}, \qquad 1 \le j \le p_n,\\
\label{def:upperalpha} |\widehat{\beta}_{nj}| e^{\overline{\alpha}_{nj}} &= \widehat{\sigma}_n \frac{\mu_n+\nu_n-\overline{\alpha}_{nj}+M_n}{\nu_n}, \qquad 1 \le j \le p_n.
\end{align}
Let $\underline{\xi}_{nj} = \widehat{\sigma}_n e^{\underline{\alpha}_{nj}}$ and $\overline{\xi}_{nj} = \widehat{\sigma}_n e^{\overline{\alpha}_{nj}}$. Then, the following statements are true.

\begin{enumerate}
\item $\underline{\alpha}_{nj} \le \widehat{\alpha}_{nj} \le \overline{\alpha}_{nj}$ for all $1 \le j \le p_n$.

\item $\underline{\alpha}_{nj}$ is a decreasing function of $|\widehat{\beta}_{nj}|$ and $|\widehat{\beta}_{nj}| \underline{\xi}_{nj}$ is a decreasing function of $\underline{\alpha}_{nj}$. These hold for $\overline{\alpha}_{nj}$ and $\overline{\xi}_{nj}$ analogously.

\item Lemma \ref{lemma:xi}(c), \ref{lemma:xi}(d), and \ref{lemma:xi}(e) hold with $\widehat{\xi}_{nj}$ replaced by $\underline{\xi}_{nj}$ (or $\overline{\xi}_{nj}$) as well.
\end{enumerate}
\end{lem}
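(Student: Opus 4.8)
The plan is to reduce everything to the strict monotonicity of the scalar map $\phi_j(\alpha)=|\widehat{\beta}_{nj}|e^{\alpha}+(\widehat{\sigma}_n/\nu_n)\alpha$. For each fixed $j$, $\phi_j$ is continuous and strictly increasing on $\mathbb{R}$ (the first summand is nondecreasing, the second strictly increasing since $\widehat{\sigma}_n,\nu_n>0$), hence a bijection onto its range. Rearranging \eqref{sol:alpha} and the defining relations \eqref{def:loweralpha}--\eqref{def:upperalpha} gives $\phi_j(\widehat{\alpha}_{nj})=(\widehat{\sigma}_n/\nu_n)(\mu_n+\nu_n+m_{nj})$, $\phi_j(\underline{\alpha}_{nj})=(\widehat{\sigma}_n/\nu_n)(\mu_n+\nu_n-M_n)$, and $\phi_j(\overline{\alpha}_{nj})=(\widehat{\sigma}_n/\nu_n)(\mu_n+\nu_n+M_n)$. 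Since $|m_{nj}|\le M_n$ by the definition of $M_n$, the three right-hand sides are ordered, and inverting the increasing map $\phi_j$ proves (a): $\underline{\alpha}_{nj}\le\widehat{\alpha}_{nj}\le\overline{\alpha}_{nj}$ for every $j$.

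For (b), fix all quantities except $|\widehat{\beta}_{nj}|$. The right-hand side of $\phi_j(\underline{\alpha}_{nj})=(\widehat{\sigma}_n/\nu_n)(\mu_n+\nu_n-M_n)$ does not depend on $|\widehat{\beta}_{nj}|$, while for each fixed $\alpha$ the left-hand side $|\widehat{\beta}_{nj}|e^{\alpha}+(\widehat{\sigma}_n/\nu_n)\alpha$ is strictly increasing in $|\widehat{\beta}_{nj}|$; hence the solution $\underline{\alpha}_{nj}$ is strictly decreasing in $|\widehat{\beta}_{nj}|$. Moreover \eqref{def:loweralpha} gives directly $|\widehat{\beta}_{nj}|\underline{\xi}_{nj}=|\widehat{\beta}_{nj}|\widehat{\sigma}_n e^{\underline{\alpha}_{nj}}=(\widehat{\sigma}_n^2/\nu_n)(\mu_n+\nu_n-M_n-\underline{\alpha}_{nj})$, which is affine and decreasing in $\underline{\alpha}_{nj}$. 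The statements for $\overline{\alpha}_{nj}$ and $\overline{\xi}_{nj}$ follow identically, with $-M_n$ replaced by $+M_n$.

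For (c), I would observe that the proofs of Lemma \ref{lemma:xi}(c), \ref{lemma:xi}(d), and \ref{lemma:xi}(e) use equation \eqref{sol:alpha} only through (i) the appearance of $\mu_n+\nu_n+m_{nj}$ on the right and (ii) the estimate $\nu_n+m_{nj}=o(\log n)$, guaranteed by \ref{ass:nu} and Lemma \ref{lemma:xi}(b). Because $M_n=o(\log n)$ by Lemma \ref{lemma:xi}(b), the constants $-M_n$ and $+M_n$ appearing in \eqref{def:loweralpha}--\eqref{def:upperalpha} satisfy exactly the same properties as $m_{nj}$: $\nu_n\pm M_n=o(\log n)$. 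Hence every step of those proofs, including the derivation of \eqref{core}, goes through verbatim with $\widehat{\alpha}_{nj}$ replaced by $\underline{\alpha}_{nj}$ or $\overline{\alpha}_{nj}$, yielding the claimed bounds on $\underline{\xi}_{nj}$, $\overline{\xi}_{nj}$, and on $|\widehat{\beta}_{nj}|\underline{\xi}_{nj}$, $|\widehat{\beta}_{nj}|\overline{\xi}_{nj}$. The degenerate case $|\widehat{\beta}_{nj}|=0$ of (c) is handled by noting that then \eqref{def:loweralpha}--\eqref{def:upperalpha} force $\underline{\alpha}_{nj}=\mu_n+\nu_n-M_n$ and $\overline{\alpha}_{nj}=\mu_n+\nu_n+M_n$, so the leading-order behavior is unchanged.

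There is no deep obstacle: the lemma is essentially a sandwiching step that replaces the coupled sums $m_{nj}$ by the uniform, solution-independent bounds $\pm M_n$, so that $\widehat{\xi}_{nj}$ can subsequently be controlled without reference to the network-induced coupling. The one point requiring care is that this replacement is asymptotically harmless, which is precisely why Lemma \ref{lemma:xi}(b) ($M_n=o(\log n)$) is invoked; a secondary care point is the degenerate case $|\widehat{\beta}_{nj}|=0$, where the fixed-point equations collapse to linear equations but all conclusions persist.
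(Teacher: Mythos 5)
Your proof is correct and takes essentially the same route as the paper, which simply declares parts (a) and (b) obvious from the definitions of $M_n$, \eqref{def:loweralpha}, and \eqref{def:upperalpha}, and for part (c) states that the arguments of Lemma \ref{lemma:xi}(c)--(e) remain valid with $m_{nj}$ and $\widehat{\alpha}_{nj}$ replaced by $\pm M_n$ and $\underline{\alpha}_{nj}$ (or $\overline{\alpha}_{nj}$). Your explicit strictly increasing map $\phi_j(\alpha)=|\widehat{\beta}_{nj}|e^{\alpha}+(\widehat{\sigma}_n/\nu_n)\alpha$, together with the observations that $|m_{nj}|\le M_n$ and $M_n=o(\log n)$, is exactly the rigorization of what the paper leaves implicit.
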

\begin{proof}
\begin{enumerate}
\item Obvious from \eqref{def:loweralpha}, \eqref{def:upperalpha}, and the definition of $M_n$.
\item Obvious from the definitions \eqref{def:loweralpha} and \eqref{def:upperalpha}.
\item The same arguments in the proof of \ref{lemma:xi}(c), \ref{lemma:xi}(d), and \ref{lemma:xi}(e) are valid with $m_{nj}$ and $\widehat{\alpha}_{nj}$ replaced by $M_n$ and $\underline{\alpha}_{nj}$ (or $\overline{\alpha}_{nj}$), respectively.
\end{enumerate}
\end{proof}


\begin{lem} \label{lemma:main1}
$P(\cA_n \nsubseteq \cA_0) = P(\cC_n \neq \emptyset) \rightarrow 0$.
\end{lem}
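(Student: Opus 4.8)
The plan is to show that any false positive (an index $j \in \cC_n = \cA_n - \cA_0$) is impossible for large $n$ with probability tending to one, by contradicting the KKT stationarity condition \eqref{sol:beta1}. The idea is that if $j \in \cC_n$, then $\widehat\beta_{nj} \neq 0$ but $\beta_{0j} = 0$, so $\widehat\beta_{nj}$ is purely ``noise''; on the other hand the penalty $\widehat\xi_{nj}$ on such a coordinate is enormous. First I would use the representation \eqref{solution:beta}: on $\cC_n$ we have $\widehat\bbeta_{\cC_n} = (X_{\cC_n}'X_{\cC_n})^{-1}(X_{\cC_n}'\bepsilon_n - S_{n\cC\cC}\widehat\bxi_{\cC_n}) + O_p(\rho_n)$, so taking norms and using \ref{ass:XX} (eigenvalues of $\itSigma_{n\cA\cA}$ bounded below by $\tau_1$ whenever $|\cA|\le n$, noting $|\cC_n|\le|\cA_n|\le n$ from the lasso solution having at most $n$ nonzeros) gives
\begin{align*}
\|\widehat\bbeta_{\cC_n}\| \ge \frac{1}{\tau_2 n}\|S_{n\cC\cC}\widehat\bxi_{\cC_n}\| - \frac{1}{\tau_1 n}\|X_{\cC_n}'\bepsilon_n\| - O_p(\rho_n) \ge \frac{\min_{j\in\cC_n}\widehat\xi_{nj}}{\tau_2 n} - O_p(n^{-1/2}),
\end{align*}
where $\|X_{\cC_n}'\bepsilon_n\| = O_p(n^{1/2} |\cC_n|^{1/2})$ absorbed using $|\cC_n|\le q_n = O(n^u)$ with a margin, and $\rho_n = O(n^{-1/2})$. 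But Lemma \ref{lemma:alpha}(a) says $\|\widehat\bbeta_n\| = O_p(1)$, so $\|\widehat\bbeta_{\cC_n}\| = O_p(1)$, forcing $\min_{j\in\cC_n}\widehat\xi_{nj} = O_p(n)$.

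Next I would derive the contradiction by a lower bound on that same minimum. For $j \in \cC_n$, the fitted coefficient $\widehat\beta_{nj}$, being noise-driven, should be small — specifically one expects $|\widehat\beta_{nj}| = o_p(n^{-c})$ for every $c < R - r$ (this follows since $\widehat\bbeta_{\cC_n}$ is essentially a projection of $\bepsilon_n$ of dimension $|\cC_n| = O(n^u)$ with $u < 1/2 - r < R - r$, so its entries are $O_p(\sqrt{|\cC_n|/n}\,)$; more carefully one argues coordinate-wise that any $j$ with $|\widehat\beta_{nj}|$ not tiny would have to be in $\cA_0$ — here one may need to split into the case where $|\widehat\beta_{nj}|$ is already so small that Lemma \ref{lemma:xi}(c) applies directly, giving $\widehat\xi_{nj} > C_2\widehat\sigma_n(1+p_n/n)^{1/2}n^{R-\zeta}$, versus the case $|\widehat\beta_{nj}| \ge C_1 n^{-c}$ where Lemma \ref{lemma:xi}(e) applies). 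In either case, combined with Lemma \ref{lemma:sigma}(b) giving $\widehat\sigma_n^{-2} = O_p((1+p_n/n)n^{1-z})$ — equivalently $\widehat\sigma_n^2(1+p_n/n) = \Theta_p(n^{z-1})$ up to log factors — the penalty satisfies $\widehat\xi_{nj} \gg \widehat\sigma_n^2(1+p_n/n)n^{r+c}\cdot n^{-\zeta} \gtrsim n^{z-1+r+c-\zeta}$, and one chooses $c$ close enough to $R-r$ so that the exponent exceeds $1$, using $R > (U+1)/2 \ge 1/2$ and $z > 1-r$ from \ref{ass:sigma}, i.e. $z - 1 + r + c > z - 1 + R > R > 1$ is not quite it — rather one uses $z > 1-r$ so $z-1+r>0$ and then $r+c$ pushes past, together with the $(1+p_n/n)$ factor when $p_n \gg n$. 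The precise bookkeeping of exponents ($R$, $r$, $u$, $U$, $z$) is the delicate part, and the roles are calibrated exactly so that $\min_{j\in\cC_n}\widehat\xi_{nj}$ grows faster than $n$, contradicting $\min_{j\in\cC_n}\widehat\xi_{nj} = O_p(n)$ from the first step.

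The main obstacle I anticipate is making the ``$|\widehat\beta_{nj}|$ is tiny for $j\in\cC_n$'' claim rigorous without circularity: the bound on $\widehat\bbeta_{\cC_n}$ from \eqref{solution:beta} itself involves $\widehat\bxi_{\cC_n}$, so one cannot simply assert smallness of $\widehat\beta_{nj}$ a priori. The clean way around this is the two-case dichotomy via Lemma \ref{lemma:xibound} and Lemma \ref{lemma:xi}(c)–(e), which bound $\widehat\xi_{nj}$ in terms of $|\widehat\beta_{nj}|$ monotonically: either $|\widehat\beta_{nj}|$ is below the threshold $C_1 n^{-c}$, in which case $\widehat\xi_{nj}$ is huge by monotonicity and (c) or (d), or it is above, in which case (e) gives a huge $\widehat\xi_{nj}$ directly; both branches feed into $\|S_{n\cC\cC}\widehat\bxi_{\cC_n}\|$ being too large to be consistent with $\|\widehat\bbeta_{\cC_n}\| = O_p(1)$ after multiplying through by the $(X_{\cC_n}'X_{\cC_n})^{-1}$ factor. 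So $P(\cC_n \neq \emptyset) \to 0$, which is the assertion since $\cA_n \nsubseteq \cA_0$ is exactly $\cC_n \neq \emptyset$.
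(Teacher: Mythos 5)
There is a genuine gap, and it lies at the heart of your strategy. Your plan is to deduce $\min_{j\in\cC_n}\widehat{\xi}_{nj}=O_p(n)$ from $\|\widehat{\bbeta}_{\cC_n}\|=O_p(1)$ and then contradict this by showing the penalties on false positives grow faster than $n$. But no penalty can ever exceed order $n$: by Lemma \ref{lemma:xi}(a) together with Lemma \ref{lemma:sigma}(a) we have $\max_j\widehat{\xi}_{nj}=o_p\bigl(\widehat{\sigma}_n(1+p_n/n)^{1/2}n^{1-u}\bigr)=o_p(n^{1-u})$, and even the ``huge'' penalties on null coordinates from Lemma \ref{lemma:xi}(c) are of order $n^{R-\zeta}$ with $R<1-u<1$. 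So the upper bound $O_p(n)$ you extract in step 1 is vacuous and the contradiction you aim for in step 2 is unattainable --- which is why your own exponent bookkeeping (``$z-1+r+c>\dots>1$ is not quite it'') refuses to close. There are also secondary errors feeding into this: $|\cC_n|$ is the number of false positives and is bounded only by $|\cA_n|\le n$, not by $q_n$; Lemma \ref{lemma:xi}(e) gives an \emph{upper} bound on $\widehat{\xi}_{nj}$ when $|\widehat{\beta}_{nj}|$ is large (adaptive shrinkage penalizes large coefficients lightly), so it cannot supply the ``huge $\widehat{\xi}_{nj}$'' you invoke in the second branch of your dichotomy; and the claim that coefficients on $\cC_n$ are tiny is justified only by the circular remark that otherwise $j$ ``would have to be in $\cA_0$.''

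The paper's proof succeeds because it needs a much weaker lower bound on the penalties. It evaluates the \emph{signed scalar} ${\widehat{\bxi}_{\cC_n}}'S_{n\cC\cC}\widehat{\bbeta}_{\cC_n}=\widehat{\sigma}_n\sum_{j\in\cC_n}e^{\widehat{\alpha}_{nj}}|\widehat{\beta}_{nj}|\ge0$, which by \eqref{solution:beta} equals $O_p(h_n^{1/2})-h_n$ with $h_n={\widehat{\bxi}_{\cC_n}}'S_{n\cC\cC}(X_{\cC_n}'X_{\cC_n})^{-1}S_{n\cC\cC}{\widehat{\bxi}_{\cC_n}}$; the Gaussian cross term is $O_p(h_n^{1/2})$ because its conditional variance is $\sigma_0^2h_n$. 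A contradiction then only requires $h_n\rightarrow_p\infty$, i.e.\ $\|\widehat{\bxi}_{\cC_n}\|\gg n^{1/2}$ --- a threshold the penalties on false positives do clear. That threshold is established by a non-circular dichotomy on events: if $\max_{j\in\cC_n}|\widehat{\beta}_{nj}|\le C_2n^{-1/2+(r+z-1)/2}$ then Lemma \ref{lemma:xibound} forces $\min_{j\in\cC_n}\underline{\xi}_{nj}>C_1n^{1/2}$, while if $\|\widehat{\bxi}_{\cC_n}\|\le C_1n^{1/2}$ then \eqref{solution:beta} and Gaussianity force $\max_{j\in\cC_n}|\widehat{\beta}_{nj}|=o_p(n^{-1/2+\zeta})$; the two together give $P(\|\widehat{\bxi}_{\cC_n}\|\le C_1n^{1/2})\rightarrow0$. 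If you want to salvage your write-up, replace the norm comparison of step 1 by this signed quadratic-form identity and lower the target from ``$\gg n$'' to ``$\gg n^{1/2}$.''
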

\begin{proof}
Suppose $\cC_n \neq \emptyset$. By \eqref{solution:beta} and \ref{ass:rho}, note that
\begin{align*}
\widehat{\sigma}_n \sum_{j \in \cC_n} e^{\widehat{\alpha}_{nj}} |\widehat{\beta}_{nj}| = {\widehat{\bxi}_{\cC_n}}' S_{n\cC\cC} \widehat{\bbeta}_{\cC_n} = O_p(h_n^{1/2}) - h_n,
\end{align*}
where $h_n = {\widehat{\bxi}_{\cC_n}}' S_{n\cC\cC} (X_{\cC_n}'X_{\cC_n})^{-1} S_{n\cC\cC} {\widehat{\bxi}_{\cC_n}}$.
We claim that $h_n \rightarrow_p \infty$ and $\textrm{RHS} \rightarrow_p -\infty$ while LHS stays positive, which yields $P(\cC_n \neq \emptyset) \rightarrow 0$.

Suppose $\max_{j \in \cC_n} |\widehat{\bbeta}_{nj} | \le C_2 n^{-1/2+(r+z-1)/2}$ for $\exists C_2>0$ and large $n$.
By Lemma \ref{lemma:sigma}, \ref{ass:sigma}, Lemma \ref{lemma:xibound}(b), Lemma \ref{lemma:xibound}(c) with $\zeta = (r+z-1)/4$, we have
\begin{align*}
\min_{j \in \cC_n} \underline{\xi}_{nj} > C_1 n^{1/2},
\end{align*}
for $\forall C_1>0$ and large $n$. By Lemma \ref{lemma:xibound}(a), we have
\begin{align*}
P( \| \widehat{\bxi}_{\cC_n} \| \le C_1 n^{1/2} \; \& \; \max_{j \in \cC_n} |\widehat{\bbeta}_{nj} | \le C_2 n^{-1/2+(r+z-1)/2} ) \rightarrow 0,
\end{align*}
for $\forall C_1>0$ and $\forall C_2>0$.

On the other hand, suppose $\| \widehat{\bxi}_{\cC_n} \| \le C_1 n^{1/2}$ for $\exists C_1>0$ and large $n$.
By the fact that the errors are gaussian and by \eqref{solution:beta}, we have $\max_{j \in \cC_n} |\widehat{\bbeta}_{nj} | = o_p(n^{-1/2+\zeta})$ for $\forall \zeta>0$. Therefore, we have
\begin{align*}
P( \| \widehat{\bxi}_{\cC_n} \| \le C_1 n^{1/2} \; \& \; \max_{j \in \cC_n} |\widehat{\bbeta}_{nj} | > C_2 n^{-1/2+(r+z-1)/2} ) \rightarrow 0.
\end{align*}

We have reached
\begin{align*}
P( \| \widehat{\bxi}_{\cC_n} \| \le C_1 n^{1/2}) \rightarrow 0,
\end{align*}
for $\forall C_1>0$. Since $h_n \ge (\tau_2n)^{-1} \| \widehat{\bxi}_{\cC_n} \|^2$, we have $h_n \rightarrow_p \infty$, as claimed.
\end{proof}


\begin{lem} \label{lemma:main2}
$P(\cA_0 \subsetneq \cA_n) = P(\cC_n = \emptyset \, \& \, \cD_n \neq \emptyset) \rightarrow 0$.
\end{lem}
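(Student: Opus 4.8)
The plan is to argue by contradiction from the first–order optimality (KKT) conditions, in the same spirit as the proof of Lemma~\ref{lemma:main1}, but now exploiting that dropping a genuinely important variable forces an unacceptably large correlation between its covariate and the residual. Throughout, work on the event $E_n=\{\cC_n=\emptyset,\ \cD_n\neq\emptyset\}$, on which $\cA_n=\cB_n\subsetneq\cA_0$, the residual identity \eqref{solution:residual} is valid, and $\widehat{\beta}_{nj}=0$ for all $j\in\cD_n$ by \eqref{sol:beta0}. For each such $j$ the stationarity condition \eqref{sol:inactive} gives $|\bx_{nj}'(\by_n-X_n\widehat{\bbeta}_n)|\le\widehat{\xi}_{nj}$; squaring and summing over $j\in\cD_n$ yields
\begin{align*}
\big\| X_{\cD_n}'(\by_n-X_n\widehat{\bbeta}_n) \big\|^2 \;\le\; \sum_{j\in\cD_n}\widehat{\xi}_{nj}^2 \;\le\; q_n\Big(\max_{1\le j\le p_n}\widehat{\xi}_{nj}\Big)^2 .
\end{align*}
The strategy is to show the left side is $\Theta_p(n^2q_n^{-1})$ whereas the right side is $o_p(n^2q_n^{-1})$, which is impossible, so that $P(E_n)\to0$.

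For the lower bound, substitute \eqref{solution:residual} and write $M_n:=X_{\cD_n}'(I-H_{\cB_n})X_{\cD_n}$; the three resulting pieces are $M_n\bbeta_{0\cD_n}$, the Gaussian piece $X_{\cD_n}'(I-H_{\cB_n})\bepsilon_n$, and the penalty piece $X_{\cD_n}'X_{\cB_n}(X_{\cB_n}'X_{\cB_n})^{-1}S_{n\cB\cB}\widehat{\bxi}_{\cB_n}$. Since $\cB_n$ and $\cD_n$ partition $\cA_0$, $M_n$ is the Schur complement of $X_{\cB_n}'X_{\cB_n}$ in $X_{\cA_0}'X_{\cA_0}$, so \ref{ass:XX} gives $n\tau_1 < \lambda_{\min}(M_n)\le\lambda_{\max}(M_n)<n\tau_2$; hence $\|M_n\bbeta_{0\cD_n}\|\ge n\tau_1\|\bbeta_{0\cD_n}\|\ge n\tau_1 C_\beta q_n^{-1/2}$ by \ref{ass:beta}. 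The Gaussian piece has mean zero and covariance $\sigma_0^2 M_n$, so its squared-norm expectation is $\sigma_0^2\tr(M_n)\le\sigma_0^2\tau_2 n q_n$, making it $O_p((nq_n)^{1/2})$. The penalty piece has norm at most $(\tau_2/\tau_1)^{1/2}\|\widehat{\bxi}_{\cB_n}\|\le(\tau_2/\tau_1)^{1/2}q_n^{1/2}\max_j\widehat{\xi}_{nj}$, which by Lemma~\ref{lemma:xi}(a) together with Lemma~\ref{lemma:sigma}(a) is $o_p(q_n^{1/2}n^{1-u})$. Because $u<1/2$ and $R<1-u$ (assumptions \ref{ass:q} and \ref{ass:mu}), both secondary pieces are $o_p(nq_n^{-1/2})$, so on an event of probability tending to one $\|X_{\cD_n}'(\by_n-X_n\widehat{\bbeta}_n)\|\ge\tfrac12 n\tau_1 C_\beta q_n^{-1/2}$.

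For the upper bound, Lemma~\ref{lemma:xi}(a) and Lemma~\ref{lemma:sigma}(a) give $\max_j\widehat{\xi}_{nj}=o_p(n^{1-u})$, hence $q_n(\max_j\widehat{\xi}_{nj})^2=o_p(q_n n^{2(1-u)})$; since $q_n=O(n^u)$ this equals $o_p(n^2q_n^{-1})$, strictly smaller in order than the squared lower bound $\tfrac14 n^2\tau_1^2 C_\beta^2 q_n^{-1}$. Combined with the displayed KKT inequality this is a contradiction on $E_n$, so $P(\cA_0\subsetneq\cA_n)=P(E_n)\to0$.

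The main obstacle is the penalty piece $X_{\cD_n}'X_{\cB_n}(X_{\cB_n}'X_{\cB_n})^{-1}S_{n\cB\cB}\widehat{\bxi}_{\cB_n}$: unlike in Lemma~\ref{lemma:main1}, $\cB_n$ and $\cD_n$ both lie inside $\cA_0$, so the partial-orthogonality bound \ref{ass:rho} does not apply and the cross term $X_{\cD_n}'X_{\cB_n}$ is genuinely of order $n$; control must come entirely from the magnitude of $\widehat{\bxi}_{\cB_n}$, for which we rely on the crude bound $\max_j\widehat{\xi}_{nj}=o_p(n^{1-u})$ (a sharper bound for selected important variables via Lemma~\ref{lemma:xi}(e) is available but not needed). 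Because $q_n$ may saturate at order $n^u$, the order comparisons above are tight, and one must keep track of the fact that it is the \emph{strict} inequalities $R<1-u$ and $u<1/2$ — not merely $\le$ — that close the argument.
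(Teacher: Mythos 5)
Your proof is correct and follows essentially the same route as the paper's: on the event $\{\cC_n=\emptyset,\ \cD_n\neq\emptyset\}$ you contradict the KKT inequality on $\cD_n$ by showing the bias term $X_{\cD_n}'(I-H_{\cB_n})X_{\cD_n}\bbeta_{0\cD_n}$ dominates at order $nq_n^{-1/2}$ while the noise term, the penalty term, and the thresholds $\widehat{\bxi}_{\cD_n}$ are all of strictly smaller order. The only differences are cosmetic --- you compare squared $\ell_2$ norms rather than the vectors directly, and your Schur-complement argument makes explicit the paper's one-line claim that $\|X_{\cD_n}'(I-H_{\cB_n})X_{\cD_n}\bbeta_{0\cD_n}\|\ge Cnq_n^{-1/2}$.
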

\begin{proof}
Suppose $\cC_n = \emptyset$ and $\cD_n \neq \emptyset$. This implies that
\begin{align*}
|X_{\cD_n}' (\by_n - X_n \widehat{\bbeta}_n)| < \widehat{\bxi}_{\cD_n} = \widehat{\sigma}_n e^{\widehat{\balpha}_{\cD_n}}.
\end{align*}
We claim that this inequality is satisfied with probability tending to 0.
Note that $\|\widehat{\bxi}_{\cD_n}\| = o_p(n^{1-u/2})$ by Lemma \ref{lemma:sigma}, Lemma \ref{lemma:xi}(a), and \ref{ass:q}.
On the other hand, note from \eqref{solution:residual} that
\begin{align*}
X_{\cD_n}' (\by_n - X_n \widehat{\bbeta}_n) &= X_{\cD_n}' ( I - H_{\cB_n} ) X_{\cD_n} \bbeta_{0\cD_n} + X_{\cD_n}'( I - H_{\cB_n} ) \bepsilon_n\\
& \qquad + X_{\cD_n} X_{\cB_n} \left( X_{\cB_n}' X_{\cB_n} \right)^{-1} S_{n\cB\cB} \widehat{\bxi}_{\cB_n}\\
&= X_{\cD_n}' ( I - H_{\cB_n} ) X_{\cD_n} \bbeta_{0\cD_n} + O_p(n^{1/2}q_n^{1/2}) + o_p(n^R q_n^{1/2})\\
&= X_{\cD_n}' ( I - H_{\cB_n} ) X_{\cD_n} \bbeta_{0\cD_n} + o_p(n^{1-u/2}).
\end{align*}
Since $\|X_{\cD_n}' ( I - H_{\cB_n} ) X_{\cD_n} \bbeta_{0\cD_n}\| \ge C nq_n^{-1/2}$ for some constant $C$ by \ref{ass:beta} and \ref{ass:XX}, the claim follows.
\end{proof}


\begin{lem} \label{lemma:sigma2}
$\widehat{\sigma}_n^2 = \Theta_p((1+p_n/n)^{-1})$.
\end{lem}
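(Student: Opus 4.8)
\emph{Proof plan.} Since the upper bound $\widehat{\sigma}_n^2 = O_p((1+p_n/n)^{-1})$ is already Lemma \ref{lemma:sigma}(a), the plan is to supply the matching lower bound, i.e.\ to produce a constant $c>0$ with $P\big(\widehat{\sigma}_n^2 \ge c(1+p_n/n)^{-1}\big) \to 1$. I would start from the lower bound in \eqref{sigma_bound}, namely $\widehat{\sigma}_n^2 \ge 2\widehat{c}_{1n}/c_{3n}$, which reduces everything to the two claims $\widehat{c}_{1n} = \Theta_p(n)$ and $c_{3n} = \Theta\big(n(1+p_n/n)\big)$. The second claim is routine: by \ref{ass:sigma} we have $a_{\sigma n} = a_{\sigma 1} n^z$ with $z<1$, so $2a_{\sigma n}+2 = o(n)$ and therefore $n+p_n \le c_{3n} = n+p_n+2a_{\sigma n}+2 \le 2(n+p_n)$ for $n$ large, giving $c_{3n} = \Theta(n+p_n) = \Theta\big(n(1+p_n/n)\big)$ no matter how fast $p_n$ grows under \ref{ass:p}.

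For $\widehat{c}_{1n} = \frac{1}{2}\|\by_n - X_n\widehat{\bbeta}_n\|^2 + b_{\sigma n}$ the bound $\widehat{c}_{1n} = O_p(n)$ was already obtained inside the proof of Lemma \ref{lemma:sigma}, so only the lower bound $\|\by_n - X_n\widehat{\bbeta}_n\|^2 = \Theta_p(n)$ remains. Here I would first invoke Lemmas \ref{lemma:main1} and \ref{lemma:main2}, which together give $P(\cA_n = \cA_0) \to 1$, and restrict attention to that event. On $\{\cA_n = \cA_0\}$ we have $\cC_n = \cD_n = \emptyset$ and $\cB_n = \cA_0$, so \eqref{solution:residual} collapses to $\by_n - X_n\widehat{\bbeta}_n = (I - H_{\cA_0})\bepsilon_n + X_{\cA_0}(X_{\cA_0}'X_{\cA_0})^{-1} S_{n\cA_0\cA_0}\widehat{\bxi}_{\cA_0}$; since $(I-H_{\cA_0})X_{\cA_0} = \bzero$ the two summands are orthogonal and hence $\|\by_n - X_n\widehat{\bbeta}_n\|^2 \ge \bepsilon_n'(I - H_{\cA_0})\bepsilon_n$. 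Because $X_{\cA_0}$ has full column rank by \ref{ass:XX}, $I - H_{\cA_0}$ is an orthogonal projection of rank $n-q_n$, so $\sigma_0^{-2}\bepsilon_n'(I - H_{\cA_0})\bepsilon_n \sim \chi^2_{n-q_n}$, which by \ref{ass:q} (so $q_n = o(n)$) and a standard $\chi^2$ concentration bound equals $(n-q_n)(1+o_p(1)) = n(1+o_p(1))$. This yields $\|\by_n - X_n\widehat{\bbeta}_n\|^2 = \Theta_p(n)$ on $\{\cA_n = \cA_0\}$, hence $\widehat{c}_{1n} = \Theta_p(n)$, and combining with the first paragraph gives $\widehat{\sigma}_n^2 \ge 2\widehat{c}_{1n}/c_{3n} = \Theta_p\big((1+p_n/n)^{-1}\big)$ on an event of probability tending to one; together with Lemma \ref{lemma:sigma}(a) this is the assertion.

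I do not expect a substantive obstacle. The only thing needing care is to make the conditioning on $\{\cA_n = \cA_0\}$ rigorous rather than heuristic: for a given $\varepsilon>0$ one picks $c'>0$ so that $P\big(\sigma_0^{-2}\bepsilon_n'(I-H_{\cA_0})\bepsilon_n < 2c'n\big) < \varepsilon/2$ for large $n$ (possible by the $\chi^2$ bound), then sets $c = c'\sigma_0^2$ and writes $P\big(\widehat{\sigma}_n^2 < c(1+p_n/n)^{-1}\big) \le P(\cA_n \neq \cA_0) + P\big(\sigma_0^{-2}\bepsilon_n'(I-H_{\cA_0})\bepsilon_n < 2c'n\big)$, using that on $\{\cA_n=\cA_0\}$ one has $\widehat{\sigma}_n^2 \ge \bepsilon_n'(I-H_{\cA_0})\bepsilon_n/c_{3n} \ge \bepsilon_n'(I-H_{\cA_0})\bepsilon_n/(2(n+p_n))$; both terms are eventually below $\varepsilon/2$ by Lemmas \ref{lemma:main1}--\ref{lemma:main2} and the $\chi^2$ bound, respectively. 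Everything else is a short deterministic manipulation of \eqref{sigma_bound}.
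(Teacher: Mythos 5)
Your proposal is correct and follows essentially the same route as the paper's proof: both take the upper bound from Lemma \ref{lemma:sigma}(a), restrict to the event $\{\cA_n=\cA_0\}$ via Lemmas \ref{lemma:main1}--\ref{lemma:main2}, lower-bound $\widehat{c}_{1n}$ by $\tfrac12\bepsilon_n'(I-H_{\cA_0})\bepsilon_n = \tfrac12(n-q_n)\sigma_0^2+O_p(n^{1/2})$ (the paper cites \eqref{solution:SSE:full} directly, which is the same orthogonal decomposition you derive from \eqref{solution:residual}), and conclude from the lower bound in \eqref{sigma_bound}. Your extra care in making the conditioning on $\{\cA_n=\cA_0\}$ rigorous is a welcome refinement but not a different argument.
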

\begin{proof}
We already have $\widehat{\sigma}_n^2 = O_p((1+p_n/n)^{-1})$ by Lemma \ref{lemma:sigma}(a).

Lemma \ref{lemma:main1} and \ref{lemma:main2} shows that $P(\cA_n \neq \cA_0) \rightarrow 0$. Assume $\cA_n = \cA_0$, then by \eqref{solution:SSE:full}, we have
\begin{align*}
\widehat{c}_{1n} \ge \frac{1}{2} \bepsilon_n' (I - H_{\cA_0}) \bepsilon_n = \frac{1}{2} (n-q_n) \sigma_0^2 + O_p(n^{1/2}).
\end{align*}
By the lower bound in \eqref{sigma_bound}, we have $\widehat{\sigma}_n^{-2} = O_p(1+p_n/n)$.
\end{proof}


\begin{proof} [Proof of Theorem \ref{thm:oracle}]
By virtue of Lemma \ref{lemma:main1} and \ref{lemma:main2}, we assume $\cA_n = \cA_0$, and note that, by \eqref{solution:beta}, we have
\begin{align}
\nonumber \widehat{\bbeta}_{n\cA_0} - \bbeta_{0\cA_0} &= \left( X_{\cA_0}' X_{\cA_0} \right)^{-1} X_{\cA_0}' \bepsilon_n - \left( X_{\cA_0}' X_{\cA_0} \right)^{-1} S_{n\cA\cA} \widehat{\bxi}_{\cA_0}\\
\label{diffbeta0} &= O_p(n^{(u-1)/2}) + o(n^{-u/2}) = o(n^{-u/2}),
\end{align}
by Lemma \ref{lemma:xi}(a) and \ref{ass:q}.
Since \ref{ass:beta}, this implies the sign consistency $P(\sign(\widehat{\bbeta}_{nj}) = \sign(\bbeta_{0j}), \forall j) \rightarrow 1$.

\eqref{diffbeta0} also implies that $\max_{j \in \cA_0} |\widehat{\beta}_{nj}|^{-1} = O_p(q_n^{1/2}) = O_p(n^{u/2})$ by \ref{ass:beta}, and that $\max_{j \in \cA_0} \overline{\xi}_{nj} = O_p(n^{r+u/2})$ by Lemmas \ref{lemma:sigma}, \ref{lemma:xibound}(b), and \ref{lemma:xibound}(c).
Then, by Lemma \ref{lemma:xibound}(a), we have $\max_{j \in \cA_0} \widehat{\xi}_{nj} = O_p(n^{r+u/2})$.
We have reached
\begin{align} \label{diffbeta1}
\widehat{\bbeta}_{n\cA_0} - \bbeta_{0\cA_0} &= \left( X_{n\cA_0}' X_{n\cA_0} \right)^{-1} X_{n\cA_0}' \bepsilon_n - o_p(n^{-1/2}).
\end{align}
This proves the asymptotic normality in part (b).

For the existance of the solution, we verify the KKT condition \eqref{sol:inactive}
\begin{align*}
|\bx_{nj}' (\by_n - X_n \widehat{\bbeta}_n)| < \widehat{\bxi}_{nj}, \qquad j \notin \cA_0.
\end{align*}
Note that, by \eqref{sol:beta1} and \eqref{diffbeta1},
\begin{align*}
X_{n\cA_0^c}' (\by_n - X_n \widehat{\bbeta}_n) &= X_{n\cA_0^c}' ( I - H_{n\cA_0} ) \bepsilon_n + o(n^{1/2} \rho_n).
\end{align*}
Since $\bx_{nj}'\bx_{nj} = n$ for all $j$ and $e_{ni}$ are gaussian, we have
\begin{align*}
\max_{j \notin \cA_0} |\bx_{nj}' ( I - H_{n\cA_0} ) \bepsilon_n| = O_p(n^{1/2} (\log p_n)^{1/2}) = O_p(n^{(U+1)/2}).
\end{align*}
Therefore, we have
\begin{align*}
\max_{j \notin \cA_0} |\bx_{nj}' (\by_n - X_n \widehat{\bbeta}_n)| = O_p(n^{(U+1)/2}).
\end{align*}
On the other hand, note that $\max_{j \notin \cA_0} |\widehat{\beta}_{nj}| = 0$. By Lemmas \ref{lemma:xibound}(b), \ref{lemma:xibound}(c), and \ref{lemma:sigma2}, we have $\max_{j \notin \cA_0} \underline{\xi}_{nj}^{-1} = O_p(n^{-R+\zeta})$ for any $\zeta > 0$. By Lemma \ref{lemma:xibound}(a), we have $\max_{j \notin \cA_0} \widehat{\xi}_{nj}^{-1} = O_p(n^{-R+\zeta})$ for any $\zeta > 0$.
Since \ref{ass:mu}, by choosing $\zeta = (2R-U-1)/4$, we have
\begin{align*}
\max_{j \notin \cA_0} |\bx_{nj}' (\by_n - X_n \widehat{\bbeta}_n)| < \min_{j \notin \cA_0} |\widehat{\xi}_{nj}|
\end{align*}
with probability tending to 1.

For uniqueness, note that from \eqref{eq:marginal}
\begin{align*}
-\frac{\partial^2 \log \pi(\btheta|\by_n,X_n)}{\partial \balpha \partial \balpha'}  = \frac{1}{\nu_n} I + W D_{\bkappa} W' + \frac{1}{\sigma} D_{e^{\balpha}} D_{|\bbeta|},
\end{align*}
where $W$ is a $p$ by $p(p-1)/2$ matrix, whose column $\bw_{jk}$ corresponding to the edge $(j,k)$ is $\be_j-\be_k$, and
\begin{align*}
\kappa_{jk}  = G_{n,jk} a_{\omega n} \frac{4\nu_n b_{\omega n} - 2(\alpha_{nj}-\alpha_{nk})^2}{(2\nu_n b_{\omega n} + (\alpha_{nj} -\alpha_{nk})^2)^2}.
\end{align*}
Since $|\kappa_{jk}| \le \frac{a_{\omega n}}{\nu_n b_{\omega n}}$, we have $\left| \sum_{k \sim j} \kappa_{jk} \right| < \frac{L_n a_{\omega n}}{\nu_n b_{\omega n}}$.
By \ref{ass:omega}, we have
\begin{align*}
-\frac{\partial^2 \log \pi(\btheta|\by_n,X_n)}{\partial \balpha \partial \balpha'}  = \frac{1}{\nu_n} I + o(\frac{1}{\nu_n}) + \frac{1}{\sigma} D_{e^{\balpha}} D_{|\bbeta|}.
\end{align*}

The following table shows the Hessian matrix of $-\pi(\btheta|\by_n,X_n)$ with respect to $\sigma$, $\bbeta_{\cA_0}$, $\balpha_{\cA_0}$, and $\balpha_{\cA_0^c}$.
\begin{center}
\begin{tabular}{c|cccc}
w.r.t. & $\sigma$ & $\bbeta_{\cA_0}$ & $\balpha_{\cA_0}$ & $\balpha_{\cA_0^c}$\\
\hline
$\sigma$ & $ \frac{6c_{1n}}{\sigma_n^4} + \frac{2c_{2n}}{\sigma_n^3} - \frac{c_{3n}}{\sigma_n^2}$ &  \\
$\bbeta_{\cA_0}$ & $\frac{2X_{n\cA_0}'(\by_n-X_n\bbeta_n)}{\sigma_n^3} - \frac{ S_{\cA_0}  e^{\balpha_{\cA_0}} }{\sigma_n^2}$ & $\frac{X_{n\cA_0}'X_{n\cA_0}}{\sigma_n^2}$ \\
$\balpha_{\cA_0}$ & $-\frac{ D_{|\bbeta_{\cA_0}|} e^{\balpha_{\cA_0}} }{\sigma_n^2}$ & $\frac{D_{e^{\balpha_{\cA_0}}} S_{\cA_0}}{\sigma_n}$ & $\frac{I}{\nu_n} + o(\frac{1}{\nu_n}) + \frac{D_{e^{\balpha_{\cA_0}}}D_{|\bbeta_{\cA_0}|}}{\sigma_n}$  \\
$\balpha_{\cA_0^c}$ & 0 & 0 & 0 & $\frac{I}{\nu_n} + o(\frac{1}{\nu_n})$\\
\end{tabular}
\end{center}
The Hessian evaluated at the solution is given by
\begin{center}
\begin{tabular}{c|cccc}
w.r.t. & $\sigma$ & $\bbeta_{\cA_0}$ & $\balpha_{\cA_0}$ & $\balpha_{\cA_0^c}$\\
\hline
$\sigma$ &  {\color{red} $ \frac{4\widehat{c}_{1n}}{\widehat{\sigma}_n^4}$}  $+$  {\color{blue} $\frac{\widehat{c}_{2n}}{\widehat{\sigma}_n^3}$} &  \\
$\bbeta_{\cA_0}$ & {\color{red} $\frac{X_{n\cA_0}'(\by_n-X_n\widehat{\bbeta}_n)}{\widehat{\sigma}_n^3}$} & {\color{red} $\frac{X_{n\cA_0}'X_{n\cA_0}}{2\widehat{\sigma}_n^2}$} $+$ {\color{green} $\frac{X_{n\cA_0}'X_{n\cA_0}}{2\widehat{\sigma}_n^2}$} \\
$\balpha_{\cA_0}$ & {\color{blue} $-\frac{ D_{|\widehat{\bbeta}_{\cA_0}|} e^{\widehat{\balpha}_{\cA_0}} }{\widehat{\sigma}_n^2}$} & {\color{green} $\frac{D_{e^{\widehat{\balpha}_{\cA_0}}} S_{n\cA_0\cA_0}}{\widehat{\sigma}_n}$} & {\color{green} $\frac{I}{\nu_n} + o(\frac{1}{\nu_n})$} $+$ {\color{blue} $\frac{D_{e^{\widehat{\balpha}_{\cA_0}}}D_{|\widehat{\bbeta}_{\cA_0}|}}{\widehat{\sigma}_n}$}  \\
$\balpha_{\cA_0^c}$ & 0 & 0 & 0 & $\frac{I}{\nu_n} + o(\frac{1}{\nu_n})$
\end{tabular}
\end{center}

The red-colored submatrix is strictly positive definite. The blue-colored submatrix is positive semi-definite. We claim that the green-colored submatrix is asymptotically strictly positive definite. Note that the smallest eigen value of $X_{n\cA_0}'X_{n\cA_0}$ is greater than or equal to $n\tau_1$ by \ref{ass:XX}. The smallest eigen value of $I/\nu_n$ is greater than $(1+p_n/n) n^{r-\zeta}$ for any $\zeta>0$ and large $n$ by \ref{ass:nu}. On the other hand, the largest eigenvalue of $D_{e^{\balpha_{\cA_0}}} S_{\cA_0}$ is of $O_p((1+p_n/n)^{1/2} n^{r+u/2})$ as discussed above. The claim follows by \ref{ass:nu}.

We have proved that the objective function is strictly convex in the region where the solutions can reside. Suppose we have two distinct solutions. This is only possible if there is at least one non-convex point in the segment between the two points, which cannot be the case because the objective function is strictly convex in that region. Therefore, the solution is unique and this completes the proof.

\end{proof}


\subsection*{Proof of Theorem \ref{oracle:fix}}

\begin{proof}[Proof of Theorem \ref{oracle:fix}]
Lemmas \ref{lemma:sigma}--\ref{lemma:xibound} hold with $U=0$, $u=0$, $1/2<R<1$, $0<r<R-1/2$, and $0 \le z < 1$. Since the partial orthogonality is not assumed in the fixed $p$ case, we take a different stratege to prove the rest. We first prove $P(\cD_n \neq \emptyset) \rightarrow 0$, then $\widehat{\sigma}_n^{-2} = O_p(1)$, and then show $P(\cC_n \neq \emptyset \, \& \, \cD_n = \emptyset ) \rightarrow 0$.

Suppose $\cD_n \neq \emptyset$. Note from \eqref{solution:residual:full} that
\begin{align*}
X_{\cD_n}' (\by_n - X_n \widehat{\bbeta}_n) &= X_{\cD_n}' ( I - H_{\cA_n} ) X_{\cD_n} \bbeta_{0\cD_n} + X_{\cD_n}'( I - H_{\cA_n} ) \bepsilon_n\\
& \qquad + X_{\cD_n} X_{\cA_n} \left( X_{\cA_n}' X_{\cA_n} \right)^{-1} S_{n\cA\cA} \widehat{\bxi}_{\cA_n}\\
&= X_{\cD_n}' ( I - H_{\cA_n} ) X_{\cD_n} \bbeta_{0\cD_n} + O_p(n^{1/2}) + o_p(n^R)\\
&= X_{\cD_n}' ( I - H_{\cA_n} ) X_{\cD_n} \bbeta_{0\cD_n} + o_p(n).
\end{align*}
Since $X_{\cD_n}' ( I - H_{\cA_n} ) X_{\cD_n} \bbeta_{0\cD_n} \ge C n$ in probability for $\exists C>0$ by \ref{ass:fix:beta} and \ref{ass:fix:XX}, the KKT condition \eqref{sol:inactive} cannot be satisfied due to Lemma \ref{lemma:xi}(a). This implies $P(\cD_n = \emptyset) \rightarrow 1$.

Now assume $\cD_n = \emptyset$, then by \eqref{solution:SSE:full}, we have
\begin{align*}
\widehat{c}_{1n} \ge \frac{1}{2} \bepsilon_n' (I - H_{\cA_n}) \bepsilon_n \ge \frac{1}{2} (n-p) \sigma_0^2 + O_p(n^{1/2}).
\end{align*}
By the lower bound in \eqref{sigma_bound}, we have $\widehat{\sigma}_n^{-2} = O_p(1)$.

Suppose $\cC_n \neq \emptyset$ and $\cD_n = \emptyset$. By \eqref{sol:beta1}, we have
\begin{align*}
\widehat{\sigma}_n \sum_{j =1}^p e^{\widehat{\alpha}_{nj}} |\widehat{\beta}_{nj}| = {\widehat{\bxi}_{\cA_n}}' S_{n\cA\cA} \widehat{\bbeta}_{\cA_n} = O_p(h_n^{1/2}) - h_n,
\end{align*}
where $h_n = {\widehat{\bxi}_{\cA_n}}' S_{n\cA\cA} (X_{\cA_n}'X_{\cA_n})^{-1} S_{n\cA\cA} {\widehat{\bxi}_{\cA_n}}$. By the similar argument as in Lemma \ref{lemma:main1}, we can show that $h_n \rightarrow_p \infty$ and $P(\cC_n \neq \emptyset \, \& \, \cD_n = \emptyset ) \rightarrow 0$. (Use the fact $\widehat{\sigma}_n^{-2} = O_p(1)$).

Now we have all the resuls that are analogous to Lemmas \ref{lemma:sigma}--\ref{lemma:sigma2}. The rest of the proof is analogous to the proof of Theorem \ref{thm:oracle}.
\end{proof}

\end{document}